\title{PBS-Calculus: A Graphical Language for Coherent Control of Quantum Computations}
\author{Alexandre Cl\'ement}{Universit\'e de Lorraine, CNRS, Inria, LORIA, F-54000 Nancy, France}{alexandre.clement@loria.fr}{https://orcid.org/0000-0002-7958-5712}{}
\author{Simon Perdrix}{Universit\'e de Lorraine, CNRS, Inria, LORIA, F-54000 Nancy, France}{simon.perdrix@loria.fr}{https://orcid.org/0000-0002-1808-2409}{}
\authorrunning{A. Cl\'ement and S. Perdrix} 
\keywords{Quantum Computing, Diagrammatic Language, Completeness, Quantum Control, Polarising Beam Splitter, Categorical Quantum Mechanics, Quantum Switch.} 
\tikzstyle{diamant}=[diamond, fill=couleurdefond, draw=black]
\tikzstyle{newe}=[rectangle, fill=gray!15, draw=black, tikzit shape=rectangle]
\tikzstyle{cercle}=[circle, fill=couleurdefond, draw=black]
\tikzstyle{cartouche}=[rounded rectangle, fill=couleurdefond, draw=black]
\tikzstyle{neg}=[rounded rectangle, fill=couleurdefond, draw=black, execute at end node={$\neg$}]
\tikzstyle{diagrammevide}=[rectangle, fill=couleurdefond, draw=black, inner sep=1.25em, borddiagrammevide, tikzit shape=rectangle]
\tikzstyle{mdiagrammevide}=[rectangle, fill=couleurdefond, draw=black, inner sep=0.75em, sborddiagrammevide, tikzit shape=rectangle]
\tikzstyle{sdiagrammevide}=[rectangle, fill=couleurdefond, draw=black, inner sep=0.5em, sborddiagrammevide, tikzit shape=rectangle]
\tikzstyle{xsdiagrammevide}=[rectangle, fill=couleurdefond, draw=black, inner sep=0.4em, xsborddiagrammevide, tikzit shape=rectangle]
\tikzstyle{bs}=[shape=beam, fill=couleurdefond, draw, inner sep=0.25em, thick, tikzit fill=white]
\tikzstyle{sbs}=[shape=beam, fill=couleurdefond, draw, inner sep=0.2em, thick, tikzit fill=white]
\tikzstyle{boite22}=[fill=white, draw=black, shape=rectangle, minimum height=1cm, minimum width=0.5cm]
\tikzstyle{boite2}=[fill=white, draw=black, shape=rectangle, minimum height=0cm, minimum width=0cm]
\tikzstyle{snegpotentiel}=[fill=couleurdefond, draw=black, shape=rounded rectangle, inner sep=0.25em, tikzit fill={rgb,255: red,191; green,191; blue,191}, execute at end node={\footnotesize$\star$}]
\tikzstyle{negpotentiel}=[fill=couleurdefond, draw=black, shape=rounded rectangle, tikzit fill={rgb,255: red,191; green,191; blue,191}, execute at end node={$\star$}]
\tikzstyle{token}=[fill=black, draw=black, shape=circle, inner sep=0.1em]
\tikzstyle{new}=[-]
\tikzstyle{tirets}=[-, draw=black, dashed]
\tikzstyle{noire}=[-, draw=black]
\tikzstyle{longdashed}=[-, dash pattern=on 5pt off 5pt]
\tikzstyle{pointilles}=[-, draw=black, dotted]
\tikzstyle{grise}=[-, draw={rgb,255: red,191; green,191; blue,191}]
\tikzstyle{borddiagrammevide}=[-, dash pattern=on 0.5em off 0.5em on 0.5em off 0.5em on 0.5em off 0em]
\tikzstyle{sborddiagrammevide}=[-, dash pattern=on 0.2em off 0.2em on 0.2em off 0.2em on 0.2em off 0em]
\tikzstyle{xsborddiagrammevide}=[-, dash pattern=on 0.1em off 0.1em on 0.15em off 0.1em on 0.1em off 0em]
\newcolumntype{C}{>{$}c<{$}}  
\newcolumntype{R}{>{$}r<{$}}  
\newcolumntype{L}{>{$}l<{$}}  
\newcommand{\interp}[1]{\left\llbracket #1 \right\rrbracket}
\newcommand{\ket}[1]{\left| #1 \right\rangle}
\newcommand\T{\mathcal T}
\renewcommand\H{\mathcal H} 
\renewcommand\S{\mathcal{SLP}}
\newcommand\N{\mathbb N}
\newcommand\C{\mathbb{C}}
\newcommand\hv{\{\rightarrow,\uparrow\}}
\newlength{\xlutmvcyp}
\newlength{\negph@wd}
\DeclareRobustCommand{\negphantom}[1]{%
  \ifmmode
    \mathpalette\negph@math{#1}%
  \else
    \negph@do{#1}%
  \fi
}
\newcommand{\negph@math}[2]{\negph@do{$\m@th#1#2$}}
\newcommand{\negph@do}[1]{%
  \settowidth{\negph@wd}{#1}%
  \hspace*{-\negph@wd}%
}
\newlength{\halfnegph@wd}
\DeclareRobustCommand{\halfnegphantom}[1]{%
  \ifmmode
    \mathpalette\halfnegph@math{#1}%
  \else
    \halfnegph@do{#1}%
  \fi
}
\newcommand{\halfnegph@math}[2]{\halfnegph@do{$\m@th#1#2$}}
\newcommand{\halfnegph@do}[1]{%
  \settowidth{\halfnegph@wd}{#1}%
  \hspace*{-0.5\halfnegph@wd}%
}
\newlength{\halfph@wd}
\DeclareRobustCommand{\halfphantom}[1]{%
  \ifmmode
    \mathpalette\halfph@math{#1}%
  \else
    \halfph@do{#1}%
  \fi
}
\newcommand{\halfph@math}[2]{\halfph@do{$\m@th#1#2$}}
\newcommand{\halfph@do}[1]{%
  \settowidth{\halfph@wd}{#1}%
  \hspace*{0.5\halfph@wd}%
}
\newcommand\changelargeurcentre[2]{\settowidth{\xlutmvcyp}{$#2$}\text{\makebox[\xlutmvcyp]{$#1$}}}
\newcommand\uparrowlarge{\changelargeurcentre{\uparrow}{\rightarrow}}
\newcommand{\lllbracket}{[\![\![}
\newcommand{\rrrbracket}{]\!]\!]}
\newcommand{\varinterp}[1]{\lllbracket#1\rrrbracket}
\newcommand\Id{\mathit{Id}}
\newcommand\id{\mathit{id}}
\newcommand\eqeqref[1]{\overset{\eqref{#1}}{=}}
\newcommand\eer[1]{\eqeqref{#1}}
\newcommand\eqdeuxeqref[2]{\overset{\eqref{#1}\eqref{#2}}{=}}
\newcommand\eqtroiseqref[3]{\overset{\eqref{#1}\eqref{#2}\eqref{#3}}{=}}
\newcommand\eqquatreeqref[4]{\overset{\eqref{#1}\eqref{#2}\eqref{#3}\eqref{#4}}{=}}
\newcommand\eqexpl[1]{\overset{#1}{=}}
\newlength\traitsdiagrammevide
\newcommand\echellefils{0.35}
\newcommand{\labeletpreuve}[2]{\hypersetup{hidelinks}\label{#1}\hyperref[preuve#1]{#2}}
\newcommand{\noeqbreak}{\binoppenalty10000 \relpenalty10000}
\newcommand{\urlalt}[2]{\href{#2}{\nolinkurl{#1}}}
\theoremstyle{definition}
\begin{document}

\maketitle

\begin{abstract}
We introduce the PBS-calculus to represent and reason on quantum computations involving coherent control of quantum operations. Coherent control, and in particular indefinite causal order, is known to enable multiple computational and communication advantages over classically ordered models like quantum circuits. The PBS-calculus is inspired by quantum optics, in particular the polarising beam splitter (PBS for short). We formalise the syntax and the semantics of the PBS-diagrams, and we equip the language with an equational theory, which is proved to be sound and complete: two diagrams are representing the same quantum evolution if and only if one can be transformed into the other using the rules of the PBS-calculus. Moreover, we show that the equational theory is minimal. Finally, we consider  applications like the implementation of controlled permutations and the unrolling of loops.
\end{abstract}

\section{Introduction}

Quantum computers can solve problems which are out of reach of  classical computers \cite{Shor1997,Harrow2009}. One of the resources offered by quantum mechanics to speed up algorithms is the superposition phenomenon which allows a quantum memory to be in several possible classical states at the same time, in superposition. Less explored in quantum computing models, one can also consider a superposition of processes. Called \emph{coherent control} or simply \emph{quantum control}, it can be illustrated with the following example called quantum switch: the order in which two unitary evolutions $U$ and $V$ are applied is controlled by the state of a control qubit. In particular, if the control qubit is in superposition, then both $UV$ and $VU$ are applied, in superposition.

Coherent control is loosely represented in usual formalisms of quantum computing. For instance, in the quantum circuit model, the only available quantum control is the controlled gate mechanism: a gate $U$ is applied or not depending on the state of a control qubit. The quantum switch cannot be implemented with a single copy of $U$ and a single copy of $V$ in the quantum circuit model, and more generally using any language with a fixed or classically controlled order of operations. Quantum switch has however been realised experimentally \cite{procopio2015experimental,rubino2017experimental}. Moreover, such a quantum control has  been proved to enable various computational and communication advantages over classically ordered models \cite{araujo2014computational,facchini2015quantum,feix2015quantum,guerin2016exponential,abbott2018communication},  for instance for deciding whether two unitary transformations are commuting or anti-commuting \cite{chiribella2012perfect} (see  \cref{ex:commut}).

Notice that other models of quantum computations (e.g. Quantum Turing Machines) or programming languages (e.g. Lineal \cite{dowek2017lineal} or QML \cite{altenkirch2005functional}), allow for arbitrary coherent control of quantum evolutions, the price to pay is, however, the presence of non-trivial well-formedness conditions to ensure that the represented evolution is valid. Indeed, the superposition (i.e. linear combination) of two unitary evolutions is not necessarily a unitary evolution. 

We introduce a graphical language, the PBS-calculus, for representing coherent control of quantum computations, where arbitrary gates can be coherently controlled. Our goal is to provide the foundations of a formal framework which will be further developed to explore the power and limits of the coherent control of quantum evolutions. Contrary to the quantum circuit model, the PBS-calculus allows a representation of the quantum switch with a single copy of each gate to be controlled. Moreover, any PBS-diagram is valid by construction (no side nor well-formedness condition). The syntax of the PBS-diagrams is inspired by quantum optics and is actually already used in several papers dealing with coherent control of quantum evolutions \cite{abbott2018communication,araujo2014computational}. Our contribution is to provide formal syntax and semantics (both operational and denotational) for these diagrams, and also to introduce an equational theory which allows one to transform diagrams. Our main technical contribution is the proof that the equational theory is complete (if two diagrams have the same semantics then one can be transformed into the other using the equational theory) and minimal (in the sense that each of the equations is necessary for the completeness of the language).

The syntax of the PBS-calculus is inspired by linear optics, and in particular by the peculiar behaviour of the polarising beam splitter. A polarising beam splitter transforms a superposition of polarisations into a superposition of positions: if the polarisation is vertical the photon is transmitted whereas it is reflected when the polarisation is horizontal (see  \cref{figBS}.a). 
As a consequence a photon can be routed in different parts of a scheme, this routing being quantumly controlled by the polarisation of the photon.  This is a unique behaviour which has no counterpart in the quantum circuit model for instance. Polarising beam splitters can be used to perform a quantum switch, as depicted as a PBS-diagram in \cref{figBS}.b.

\begin{figure}
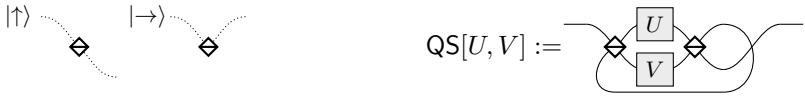

\tikzfig{pola} \qquad\qquad \qquad $\mathsf{QS}[U,V]:=$\!\!\tikzfig{qs}\caption{\label{figBS}(a) Intuitive behaviour of a polarising beam splitter: vertical polarisation goes through, horizontal polarisation is reflected; (b) Quantum switch of two matrices $U$ and $V$.}
\end{figure}

\subparagraph{Related works.} In the context of categorical quantum mechanics several graphical languages have already been introduced: ZX-calculus \cite{coecke2011interacting,jeandel2018complete}, ZW-calculus \cite{hadzihasanovic2017algebra}, ZH-calculus \cite{backens2018zh} and their variants. Notice in particular a proposal for representing fermionic (non polarising)  beam splitters in the ZW-calculus \cite{lmcs5736}.  
An apparent difference between the PBS-calculus and these languages, is that the category of PBS-diagrams is \emph{traced} but not \emph{compact closed}. This difference is probably not fundamental, as for any traced monoidal category there is a completion of it  to a compact closed category \cite{joyal1996traced}. 
The fundamental difference is the parallel composition: in the PBS-calculus two parallel wires correspond to two possible positions of a single particle (i.e. a direct sum in terms of semantics), whereas, in the other languages it corresponds to two particles (i.e. a tensor product).  

The parallel composition makes the PBS-calculus closer to the \emph{graphical linear algebra} approach \cite{bonchi2017interacting,bonchi2019graphical,bonchi2019diagrammatic}, however the generators and the fundamental structures (e.g. Frobenius algebra, Hopf algebra)  are \emph{a priori} unrelated to those of the PBS-calculus. 

In the context of quantum programming languages, there are a few proposals for representing quantum control \cite{dowek2017lineal,altenkirch2005functional,ying2014alternation,sabry2018symmetric}. Colnaghi et al. \cite{colnaghi2012quantum} have introduced a graphical language with \emph{programmable connections}. The language uses the quantum switch as a generator, but does not aim to describe schemes with polarising beam splitters. Notice also that the inputs/outputs of the language are quantum channels.

\subparagraph{Structure of the paper.} In Section \ref{syntax}, the syntax of the PBS-diagrams is introduced. The PBS-diagrams are considered up to a structural congruence which allows one to deform the diagrams at will. Section \ref{semantics} is dedicated to the semantics of the language: two semantics, a path semantics and a denotational semantics, are introduced. The denotational semantics is proved to be adequate with respect to the path semantics. In Section \ref{equationaltheory}, the axiomatisation of the PBS-calculus is introduced, and our main result, the soundness and completeness of the language, is proved. The axiomatisation is also proved to be minimal in the sense that none of the axioms can be derived from the others. Finally, in Section \ref{trace-free},  we consider the application of the PBS-calculus to the problem of loop unrolling. We show in particular that any PBS-diagram involving unitary matrices can be transformed into a trace-free diagram. The paper is written such that the reader does not need any particular knowledge in category theory. Basic definitions, in particular of Traced PROP, are however given in Appendix \ref{categoricalnotions} for completeness.

\section{Syntax}\label{syntax}

A $\textup{PBS}$-diagram is made of polarising beam splitters \tikzfig{beamsplitter-xs}, polarisation flips \tikzfig{neg-xs}, and gates \tikzfig{gateU-xs} for any matrix $U\in \mathbb C^{q\times q}$, where $q$ is a fixed positive integer. One can also use wires like  the identity \tikzfig{filcourt-s} or the swap \tikzfig{swap-xs}. Diagrams can  be combined by means of sequential composition $ \circ $, parallel composition $ \otimes $, and trace $Tr(\cdot)$. The trace consists in connecting the last output of a diagram to its last input, like a feedback loop. The symbol\tikzfig{diagrammevide-xs} represents the empty diagram. Any diagram has a type $n\to n$ which corresponds to the numbers of input/output wires.  The syntax of the language is the following:
 \begin{definition} 
Given $q\in \mathbb N\setminus\{0\}$, a $\textup{PBS}_q$-diagram $D:n\to n$ is inductively defined as:

\vspace{-0.5cm}

\[\tikzfig{diagrammevide-s}:0\to 0\quad\qquad\tikzfig{filcourt-s}:1\to 1\quad\qquad \tikzfig{neg-s}:1\to 1\quad\qquad \tikzfig{swap-s}:2\to 2\qquad\quad \tikzfig{beamsplitter-s}:2\to 2\]

\vspace{-0.5cm}

\[\dfrac{U\in \mathbb C^{q\times q}}{\tikzfig{gateU-s}:1{\to}1}\qquad \dfrac{D_1:n{\to} n \quad D_2 : n{\to} n}{D_2\circ D_1:n{\to} n}\qquad\dfrac{D_1:n{\to} n \quad D_2 : m{\to} m}{D_1\otimes D_2:n{+}m\to n{+}m}\qquad \dfrac{D:n{+}1\to n{+}1}{Tr(D):n\to n}\]
\end{definition}
Sequential composition $D_2\circ D_1$, parallel composition $ D_1\otimes D_2$, and trace $Tr(D)$ are respectively  depicted as follows:\\ \centerline{$\tikzfig{composition-s}\qquad\qquad\tikzfig{produittensoriel-s}\qquad\qquad\tikzfig{trace-s}$}

\vspace{0.2cm}

In the following, the positive integer $q$  will be omitted when it is useless or clear from the context.

Notice that two distinct terms, like $\tikzfig{neg-s} \circ (\tikzfig{gateU-s}\circ \tikzfig{neg-s})$   and $(\tikzfig{neg-s}\circ \tikzfig{gateU-s})\circ \tikzfig{neg-s}$, can lead to the same graphical representation: $\tikzfig{neg-s}\!\tikzfig{gateU-s}\!\tikzfig{neg-s}$.  To avoid ambiguity, we define diagrams modulo the structural congruence given in Figure \ref{fig:TracedProp} in \cref{categoricalnotions}. Roughly speaking the structural congruence guarantees that (\emph{i}) two terms leading to the same graphical representation are equivalent, and (\emph{ii}) a diagram can be deformed at will, e.g.:
{\tikzset{tikzfig/.style={baseline=-0.25em,scale=\echellefils,every node/.style={scale=0.7}}}
\[
\!\!\!\!\tikzfig{swapswap-xs}=\tikzfig{filsparalleleslongs-xs}\qquad \tikzfig{natswap1}=\tikzfig{natswap2}\qquad \tikzfig{yankingvariantecentresurfil-xs}=\tikzfig{filmoyen-xs}\qquad \tikzfig{tracegrandfgb-xs}=\tikzfig{tracegbgrandf-xs}\]}
\\\indent In the categorical framework of PROP \cite{MacLane1965,zanasi2015tel}, PBS-diagrams modulo the structural congruence form a Traced PROP, i.e. they are morphisms of a traced strict symmetric monoidal category whose objects are natural numbers. 
It is known (Theorem 20\footnote{Notice that in \cite{Sel2009-graphical}, the author points out that this result relies on a result by Kelly and Laplaza (Theorem. 8.2, \cite{kelly1980coherence}) which is only proven for simple signatures -- which is not the case for the PBS-diagrams. The general case does not appear in the literature.} of \cite{Sel2009-graphical}) that two diagrams are equivalent according to the axioms of a traced PROP if and only if they are isomorphic in a graph-theoretical sense, that is, if one can be obtained from the other by moving, stretching and reorganising the wires in any way, while keeping their two ends fixed.

\section{Semantics}\label{semantics}

In this section, we introduce the semantics of the PBS-diagrams. First, we introduce an operational semantics for PBS-diagrams with a classical control. The operational semantics, called \emph{path semantics} is based on the graphical intuition of a routed particle. Then we introduce a denotational semantics for the general case, with a quantum control. We show the adequacy between the two semantics, providing a graphical way to compute the denotational semantics of a PBS-diagram. 

In this paper, we only consider the case where a \emph{single} particle, say a photon, is present in the diagram. The particle is made of a polarisation and an additional data register.
 The particle has: an initial polarisation, which is an arbitrary superposition of the horizontal ($\rightarrow$) and vertical ($\uparrow$) polarisations (that we call \emph{classical} polarisations in the following); an arbitrary position, which is a  superposition of the possible input wires of the diagram;  and an input data state, which is a vector $\varphi\in \mathbb C^q$.

\subsection{Classical control -- Path semantics}

\subparagraph{Classical control.} We first consider input particles with a classical polarisation and a classical position. Roughly speaking, the particle is initially located on one of the input wires with a given polarisation in $\hv$, and moves through the diagram depending on its polarisation. 
The action of a PBS-diagram can be \emph{informally} described as follows using a token made of the current polarisation $c$ of the particle and a matrix $U$ representing the matrix applied so far to the data register: 
{\tikzset{tikzfig/.style={baseline=-0.25em,scale=1.1*\echellefils,every node/.style={scale=0.9}}}
\begin{itemize}
\item The particle is either reflected or transmitted by a beam splitter, depending on its polarisation: 

\vspace{-0.2cm}

\begin{longtable}{RCLCRCL}
\tikzfig{tokenhorUhbs}&\rightarrow&\tikzfig{bstokenhorUh}&\qquad\qquad&\tikzfig{tokenhorUbbs}&\rightarrow&\tikzfig{bstokenhorUb}\\\\[-0.2cm]
\tikzfig{tokenverUhbs}&\rightarrow&\tikzfig{bstokenverUb}&&\tikzfig{tokenverUbbs}&\rightarrow&\tikzfig{bstokenverUh}
\end{longtable}

\vspace{-0.2cm}

\item The polarisation may vary but remains classical (that is, in $\hv$) as the polarisation flip --  the only one which acts on the polarisation --  interchanges horizontal and vertical polarisations: 

\vspace{-0.3cm}

 \begin{longtable}{RCLCRCL}\tikzfig{tokencUneg-H}&\rightarrow&\tikzfig{negtokencbarreU-V}&\qquad\qquad&\ \tikzfig{tokencUneg-V}&\rightarrow&\tikzfig{negtokencbarreU-H}
 \end{longtable}
 
 \vspace{-0.3cm}
 
\item \tikzfig{gateV} acts on the data register, transforming the state $\varphi$ into $V\varphi$:

\vspace{-0.3cm}

\begin{longtable}{RCL}\tikzfig{tokencUgateV}&\rightarrow&\tikzfig{gateVtokencVU} \end{longtable}

\vspace{-0.3cm}

\item The particle can freely move through wires, e.g.:

\vspace{-0.3cm}

 \begin{longtable}{RCLCRCL}
\tikzfig{tokencUhx}&\rightarrow&\tikzfig{xtokencUb}&\qquad\qquad&
\tikzfig{tracetokencUdroite}&\rightarrow&\tikzfig{tracetokencUgauche}
 \end{longtable}

\vspace{-0.3cm}

\end{itemize}}

Thus the token follows a path from the input to the output and accumulates a matrix along the path. We formalise this intuitive behaviour as a big-step operational semantics that we call \emph{path semantics} in this context.  
A \emph{configuration} is a triplet $(D,c,p)$, where $D:n\to n$ is a PBS-diagram, $c\in \hv$ is the input polarisation of the particle, and $p\in [n] := \{0,\ldots, n-1\}$ its input position: $0$ means that the particle is located on the  first upper input wire, $1$ on the second one and so on. The result is made of the final polarisation $c'$ and position $p'$, and of the matrix $U$ representing the overall action of $D$ on the data register.

\begin{definition}[Path semantics] Given a PBS-diagram $D:n\to n$, a polarisation $c\in \hv$ and a position $p\in [n]$, let $(D,c,p)\xRightarrow{U}(c',p')$ (or simply $(D,c,p)\Rightarrow(c',p')$ when $U$ is the identity) be inductively defined as follows: 

  \begin{longtable}{CCCCCCC}
 \left(\tikzfig{filcourt-s},c,0\right)\Rightarrow(c,0)&\,&\left(\tikzfig{neg-s},\uparrow,0\right)\Rightarrow(\rightarrow,0)&\,&\left(\tikzfig{neg-s},\rightarrow,0\right)\Rightarrow(\uparrow,0)&\,&\left(\tikzfig{gateU-s},c,0\right)\xRightarrow{U}(c,0)
  \end{longtable}
  \vspace{-0.3cm}
 \begin{longtable}{R@{}C@{}LCCCC}
  \left(\tikzfig{swap-s},\,c,\,p\right)&\Rightarrow&(c,1-p)&& \dfrac{(D_1,c,p)\xRightarrow{U}(c',p')\qquad(D_2,c',p')\xRightarrow{V}(c'',p'')}{(D_2\circ D_1,c,p)\xRightarrow{VU}(c'',p'')}(\circ)\\\\
 
 \left(\tikzfig{beamsplitter-s},\rightarrow,p\right)&\Rightarrow&(\rightarrow,p)  &&\dfrac{D_1:n\to n\qquad p<n\qquad (D_1,c,p)\xRightarrow{U}(c',p')}{(D_1\otimes D_2,c,p)\xRightarrow{U}(c',p')}(\otimes1)\\\\
 
\left(\tikzfig{beamsplitter-s},\,\uparrow\,,p\right)&\Rightarrow&(\uparrow,1-p)&&\dfrac{D_1:n\to n\qquad p\geq n\qquad (D_2,c,p-n)\xRightarrow{U}(c',p')}{(D_1\otimes D_2,c,p)\xRightarrow{U}(c',p'+n)}(\otimes2)\\\\
 
&&&&\hspace{-3cm}\dfrac{D:n\to n\qquad \forall i\in\{0,\ldots,k\}, (D,c_i,p_i)\xRightarrow{U_i\,}(c_{i+1},p_{i+1})\qquad(p_{i+1}=n){\Leftrightarrow}(i<k)}{(Tr(D),c_0,p_0)\xRightarrow{U_k\cdots U_0\,}(c_{k+1},p_{k+1})}(\mathsf T_k)~~~~~~~~\\
  \end{longtable}
  
  \vspace{-0.3cm}
  
  \noindent with $k\in \{0,1,2\}$. 
  \end{definition}

  \begin{example} \label{exQSPath} As expected, the path semantics of the quantum switch $\mathsf{QS}[U,V]:=\linebreak Tr\left(\tikzfig{swap-xs}\circ\tikzfig{beamsplitter-xs}\circ(\tikzfig{gateU-xs}\otimes\tikzfig{gateV-xs})\circ\tikzfig{beamsplitter-xs}\right)$ (see \cref{figBS}.b) is $(\mathsf{QS}[U,V],\uparrow,0) \xRightarrow{UV}(\uparrow,0)$ and $(\mathsf{QS}[U,V],\rightarrow,0) \xRightarrow{VU}(\rightarrow,0)$.
\end{example}

\begin{example}
PBS-diagrams implementing a controlled permutation are given in Figure \ref{fig:perm3}. 
\end{example}

\begin{figure}[ht]
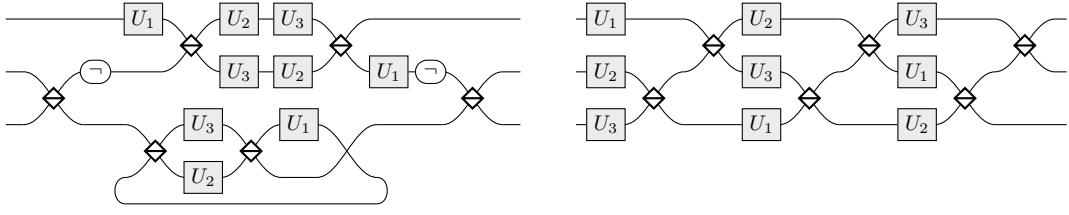

\[\tikzfig{perm3var}\qquad\tikzfig{perm3}\]
\caption{Two diagrams having the same semantics, that implement a controlled permutation of 3 unitary maps. Given a permutation $(xyz)$ of $(123)$, we have $(D,c,x)\xRightarrow{U_{z}U_{y}U_{x}}(c,x)$, where $D$ is any of the two diagrams and $c=\rightarrow$ if the signature of $(xyz)$ is $1$, $c=\uparrow$ otherwise. A generalisation to  the controlled permutation of $n$ unitary maps is given in Appendix \ref{diagrammecontrolepermutations}.\label{fig:perm3}}
\end{figure}

Notice that the path semantics does not need to be defined for the empty diagram $\tikzfig{diagrammevide-s}$. Indeed, for any diagram $D:0\to0$ there is no valid configuration $(D,c,p)$ as $p$ should be one of the input wires of $D$.

    The $(\mathsf T_k)$-rule is parametrised by an integer $k$. Intuitively, this parameter is the number of times the photon goes through the corresponding trace. We show in the following that roughly speaking, a particle can never go through a given trace more than twice. In other words, the path semantics which assumes $k\le 2$, is well-defined for any valid configuration:

\begin{proposition}\label{determinismterm}
For any diagram $D:n\to n$ and any $(c,p)\in\hv\times[n]$, there exist unique $(c',p')\in\hv\times[n]$ and $U\in\C^{q\times q}$ such that $(D,c,p)\xRightarrow{U}(c',p')$.
\end{proposition}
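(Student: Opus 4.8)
The plan is to prove this by structural induction on the PBS-diagram $D$, proving existence and uniqueness simultaneously. For existence, I would exhibit for each configuration a derivation; for uniqueness, I would argue that the last rule applied in any derivation is forced by the syntactic form of $D$ and the position $p$. The generators $\tikzfig{filcourt-s}$, $\tikzfig{neg-s}$, $\tikzfig{gateU-s}$, $\tikzfig{swap-s}$, $\tikzfig{beamsplitter-s}$ are immediate: the defining rules cover every $(c,p)$ exactly once, with a unique output and a unique accumulated matrix. For $D_2\circ D_1$ and $D_1\otimes D_2$ the inductive step is routine: in the composition case, the induction hypothesis applied to $D_1$ gives a unique intermediate $(c',p')$ and matrix $U$, and then applied to $D_2$ on $(c',p')$ gives a unique $(c'',p'')$ and $V$; in the tensor case, $p<n$ or $p\ge n$ determines which of $(\otimes1)$, $(\otimes2)$ applies and the induction hypothesis on the relevant summand finishes the argument. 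The genuinely delicate case is the trace $Tr(D)$.

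**The trace case.** For $Tr(D)$ with $D:n{+}1\to n{+}1$, a configuration $(Tr(D),c_0,p_0)$ with $p_0\in[n]$ launches the particle into $D$, and we iteratively apply the induction hypothesis: each pass through $D$ produces a unique next polarisation and position and a unique matrix. The particle re-enters the loop precisely when its output position equals $n$ (the traced wire); it exits when the output position lies in $[n]$. The $(\mathsf T_k)$-rule caps the number of passes at $k\le 2$, so the crux is to show this cap is never a real restriction: the iteration must always halt within two passes through $D$. Concretely, I would show that if the particle enters the traced wire at the top of $D$ (position $n$ as an input to $D$) and later exits again at position $n$, it cannot do so a second time. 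The key observation is the behaviour of polarisation under a single traversal: by a further (sub)induction on $D$ one shows that when the particle both enters and exits on the traced wire, the map $c\mapsto c'$ is a \emph{fixed permutation} of $\hv$ depending only on $D$ (not on the data), and moreover the "return" behaviour is involutive in the relevant sense — roughly, the second return would send the particle back out on a non-loop wire. The cleanest route is: define, for the diagram $D:n{+}1\to n{+}1$, the partial "loop transition" $c\mapsto(c',\text{exit})$ obtained by running the path semantics of $D$ from input position $n$; prove by structural induction on $D$ that this partial function, restricted to the "re-enters the loop" branch, has no orbit of length $\ge 2$, equivalently that after at most two iterations the position is in $[n]$. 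Since the polarisation set $\hv$ has only two elements and the only generator that touches polarisation is $\tikzfig{neg-s}$ (which is an involution) while beam splitters permute position-polarisation pairs bijectively, any would-be third pass would repeat an earlier loop-configuration and the induction on the structure of $D$ rules this out.

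**Main obstacle.** The hard part is precisely this termination-within-two-passes claim for the trace, i.e. justifying that $k\in\{0,1,2\}$ suffices. One must set up the right auxiliary invariant — essentially that the particle's "loop-entry configuration" can recur at most once — and propagate it through all the diagram constructors, paying particular attention to nested traces (a trace inside the body $D$ of an outer trace) and to beam splitters on the traced wire, which are exactly the mechanism by which a particle can turn around and re-enter the loop. Once that lemma is in place, existence for $Tr(D)$ follows by running the induction hypothesis at most three times and picking the first $k$ at which the exit position lands in $[n]$, and uniqueness follows because at each pass the induction hypothesis gives a unique successor, so the whole finite run — and hence $k$, the $c_i$, $p_i$, $U_i$ — is determined. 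Assembling the cases gives the proposition.
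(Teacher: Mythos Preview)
Your overall skeleton (structural induction, with the easy cases handled as you describe) matches the paper. The gap is in the trace case: you correctly identify termination-within-two-passes as the crux, but your proposed mechanism for proving it does not close.

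Concretely, the pigeonhole observation ``a third loop entry would repeat an earlier loop configuration'' is true, but by itself it yields no contradiction. Suppose the successive outputs of $D'$ are $(c_1,n),(c_2,n),(c_3,n)$. Pigeonhole on $\hv\times\{n\}$ forces two of the three to coincide; but to derive anything from, say, $(c_1,n)=(c_3,n)$, you need that the two \emph{preimages} $(c_0,p_0)$ and $(c_2,n)$ coincide, i.e.\ that the one-step map on $D'$ is \emph{injective}. Likewise, ruling out a fixed point $(c_1,n)\mapsto(c_1,n)$ requires injectivity to clash with $(c_0,p_0)\mapsto(c_1,n)$. Your side remarks (``beam splitters permute position--polarisation pairs bijectively'') gesture at this, but your stated auxiliary lemma --- that the loop-transition $c\mapsto c'$ is a ``fixed permutation of $\hv$'' with ``no orbit of length $\ge 2$'' --- is both imprecisely formulated (the map is partial, and fixed points are just as fatal as $2$-cycles) and, once made precise, is essentially injectivity of $\tau_{D'}$ in disguise.

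The paper resolves this cleanly by proving Proposition~\ref{determinismterm} and Proposition~\ref{reversibility} \emph{simultaneously} by structural induction. In the trace case for existence, one uses the uniqueness clause of the reversibility induction hypothesis on $D'$: from $(c_0,p_0)\neq(c_1,n)$ and injectivity one gets $(c_1,n)\neq(c_2,p_2)$, hence if $p_2=n$ then $c_2=\bar c_1$; one more application gives $(c_3,p_3)\notin\{(c_1,n),(\bar c_1,n)\}$, so $p_3<n$. Conversely, the trace case for reversibility uses determinism on $D'$. If you want to salvage your single-proposition induction, you must strengthen the induction hypothesis to include injectivity of $\tau_D$; but that is exactly Proposition~\ref{reversibility}, so you end up with the paper's mutual induction anyway. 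A minor omission: diagrams are taken modulo the traced-PROP congruence, so one must also check that the path semantics is invariant under those axioms; the paper does this separately at the end.
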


\begin{proof}\vspace{-0.16cm}The proof is given in appendix, section \ref{preuvedeterminismtermreversibility}. \end{proof}

In the previous proposition, uniqueness means that the path semantics is deterministic: since diagrams are considered modulo structural congruence (i.e. up to deformation), it implies that these deformations preserve the path semantics.  

Moreover, all PBS-diagrams are invertible in the following sense:

\begin{proposition}\label{reversibility}
For any diagram $D:n\to n$ and any $(c,p)\in\hv\times[n]$, there exist unique $(c',p')\in\hv\times[n]$ and $U\in\C^{q\times q}$ such that $(D,c',p')\xRightarrow{U}(c,p)$.
\end{proposition}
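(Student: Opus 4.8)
The plan is to reduce \cref{reversibility} to \cref{determinismterm} by constructing, for every $\textup{PBS}_q$-diagram $D:n\to n$, a \emph{reverse diagram} $D^{\textup{rev}}:n\to n$ whose path semantics is exactly the inverse relation of the one of $D$, in the precise sense that
\[(D,c,p)\xRightarrow{U}(c',p')\qquad\Longleftrightarrow\qquad (D^{\textup{rev}},c',p')\xRightarrow{U^{-1}\!\text{ or reversed product}}(c,p).\]
Concretely, $D^{\textup{rev}}$ is obtained by reflecting $D$ left-to-right (reversing sequential composition, and replacing each generator by its mirror image). The key observation is that each generator is already ``self-reverse'' at the level of polarisation and position: $\tikzfig{filcourt-s}$, $\tikzfig{swap-s}$, $\tikzfig{neg-s}$ and $\tikzfig{beamsplitter-s}$ all induce involutive or easily invertible maps on $\hv\times[n]$ (the beam splitter swaps the roles of $(\rightarrow,p)$ and $(\uparrow,1-p)$ in a way stable under reflection, the flip is an involution on polarisations), and the gate $\tikzfig{gateU-s}$ simply has matrix $U$, which we do \emph{not} need to invert since the statement only asks for the existence and uniqueness of \emph{some} $U\in\C^{q\times q}$, not for $U$ to be a specific matrix. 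So in fact the cleanest route is to prove a lemma of the form: for every $D$ there is $D^{\textup{rev}}$ such that $(D,c,p)\xRightarrow{U}(c',p')$ iff $(D^{\textup{rev}},c',p')\xRightarrow{U'}(c,p)$ for some $U'$ (one can even track $U'=U$ with the matrices read in reverse order along the path), by structural induction on $D$.

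First I would set up the induction on the structure of $D$. The base cases ($\tikzfig{filcourt-s}$, $\tikzfig{neg-s}$, $\tikzfig{swap-s}$, $\tikzfig{beamsplitter-s}$, $\tikzfig{gateU-s}$) are handled by direct inspection of the defining clauses of the path semantics, taking $D^{\textup{rev}}=D$ for the first four and $(\tikzfig{gateU-s})^{\textup{rev}}=\tikzfig{gateU-s}$ (the matrix recorded is the same in both directions since there is a single gate on the path). For sequential composition, set $(D_2\circ D_1)^{\textup{rev}}:=D_1^{\textup{rev}}\circ D_2^{\textup{rev}}$ and use the $(\circ)$-rule together with the two induction hypotheses. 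For parallel composition, set $(D_1\otimes D_2)^{\textup{rev}}:=D_1^{\textup{rev}}\otimes D_2^{\textup{rev}}$ and use the $(\otimes1)$/$(\otimes2)$-rules, noting that the position bookkeeping $p\mapsto p$ and $p\mapsto p-n$, $p'\mapsto p'+n$ is symmetric in the two diagrams. For the trace, set $Tr(D)^{\textup{rev}}:=Tr(D^{\textup{rev}})$; here one reverses the chain of $k+1$ intermediate configurations $(c_0,p_0)\xRightarrow{U_0}(c_1,p_1)\xRightarrow{U_1}\cdots\xRightarrow{U_k}(c_{k+1},p_{k+1})$ appearing in $(\mathsf T_k)$, applies the induction hypothesis to each link to get the reversed chain for $D^{\textup{rev}}$, and checks that the side condition $(p_{i+1}=n)\Leftrightarrow(i<k)$ is preserved once the chain is read backwards (the intermediate positions that equal $n$ are exactly the same, just re-indexed). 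Having established this lemma, \cref{reversibility} follows immediately: given $(c,p)$, apply \cref{determinismterm} to $D^{\textup{rev}}$ to obtain the unique $(c',p')$ and $U'$ with $(D^{\textup{rev}},c,p)\xRightarrow{U'}(c',p')$; then the lemma yields $(D,c',p')\xRightarrow{U}(c,p)$ for a suitable $U$, and uniqueness transfers back through the lemma, again using \cref{determinismterm} applied to $D^{\textup{rev}}$.

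The main obstacle I expect is the trace case, and specifically making the reversal of the intermediate chain in the $(\mathsf T_k)$-rule rigorous. One must be careful that ``number of times the photon traverses the trace wire'' is the same for $D$ and $D^{\textup{rev}}$, and that when the chain is read backwards the roles of entering-at-position-$n$ and exiting-at-position-$n$ swap in a way that still matches the biconditional side condition of $(\mathsf T_k)$; it may be cleanest to first prove a small symmetry statement about the auxiliary relation ``$(D,c_i,p_i)\xRightarrow{U_i}(c_{i+1},p_{i+1})$ with $p_{i+1}=n$'' before assembling the trace case. A secondary, purely bureaucratic point is confirming that the map $D\mapsto D^{\textup{rev}}$ is well-defined on diagrams modulo the structural congruence of \cref{categoricalnotions} (it is, since reflection is compatible with the traced-PROP axioms), but since \cref{determinismterm} already guarantees the path semantics is congruence-invariant, one can equally work with a chosen representative term and never revisit this. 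Everything else is a routine unfolding of the inductive definition.
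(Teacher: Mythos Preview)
Your approach via a reverse diagram $D^{\textup{rev}}$ is genuinely different from the paper's, and the reversal lemma you sketch (the path-semantics relation of $D^{\textup{rev}}$ is the converse of that of $D$) is both correct and can indeed be proved by a clean structural induction, exactly along the lines you describe; in particular your check of the side condition of $(\mathsf T_k)$ under reversal goes through. The paper, by contrast, does not introduce any mirrored diagram: it proves \cref{determinismterm} and \cref{reversibility} \emph{simultaneously} by structural induction, and in the trace case it uses the induction hypothesis of \cref{reversibility} (injectivity of $\tau_{D'}$) to argue that the iterates $\tau_{D'}(c_0,p_0),\tau_{D'}^2(c_0,p_0),\tau_{D'}^3(c_0,p_0)$ cannot all lie in $\hv\times\{n\}$, hence the loop is traversed at most twice.

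This is precisely where your write-up has a gap: you treat \cref{determinismterm} as a black box already available for $D^{\textup{rev}}$, but in the paper \cref{determinismterm} is \emph{not} proved independently of \cref{reversibility}; its trace case needs injectivity of $\tau_{D'}$, i.e.\ \cref{reversibility} at the inductive level. So the reduction ``\cref{reversibility} for $D$ $\Leftarrow$ \cref{determinismterm} for $D^{\textup{rev}}$'' is circular as stated. The fix is easy and in fact makes your idea pay off: first establish your reversal lemma (which needs neither proposition), then run the structural induction for \cref{determinismterm} alone, and in the trace case obtain the required injectivity of $\tau_{D'}$ from the induction hypothesis of \cref{determinismterm} applied to $(D')^{\textup{rev}}$ together with the reversal lemma. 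Once \cref{determinismterm} is thus established for all diagrams, \cref{reversibility} for $D$ follows from \cref{determinismterm} for $D^{\textup{rev}}$ exactly as you wrote. Compared with the paper's simultaneous induction, your route isolates a reusable symmetry (``reversing a PBS-diagram reverses its paths'') and reduces the work to a single proposition; the paper's route avoids defining $D^{\textup{rev}}$ at the cost of carrying both statements through the induction.
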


\begin{proof}\vspace{-0.16cm}The proof is given in appendix, Section \ref{preuvedeterminismtermreversibility}. \end{proof}

As a consequence, any diagram $D:n\to n$ essentially acts as a permutation on $\hv\times [n]$,  if one ignores its action on the data register. We introduce  dedicated notations for representing the corresponding permutation, as well as the actions on the data register: 

\begin{definition}
For any diagram $D:n\to n$, we call $\tau_D$ the permutation of $\hv\times[n]$ and for any $ c,p \in \hv\times[n]$, we call $[D]_{c,p}\in \mathbb C^{q\times q}$ the matrix such that $(D,c,p)\xRightarrow{[D]_{c,p}} \tau_D(c,p)$. 
\end{definition}

In a PBS-diagram, the particle can go through each wire at most twice, otherwise, roughly speaking,  it would go back to the same position with the same polarisation and thus will come back again and again to this same configuration and thus enter an infinite loop -- which is prevented by Proposition \ref{determinismterm}. In particular, each gate of the diagram is visited at most twice: 

\begin{proposition}
Any gate $U$ of a diagram $D$ contributes to at most two paths $[D]_{c_0,p_0}$ and $[D]_{c_1,p_1}$, i.e.~given $D'$ the diagram $D$ where one occurrence of $U$ has been replaced by an arbitrary matrix $V$, $\forall (c,p)\notin \{(p_0,c_0),(p_1,c_1)\}$, $[D]_{c,p} = [D']_{c,p}$. 
\end{proposition}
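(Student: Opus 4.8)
The plan is to analyze the path of the token through the diagram $D$ and argue that any fixed gate occurrence $U$ is traversed at most twice, then observe that the replacement of $U$ by $V$ only affects the matrices accumulated along paths that actually pass through that occurrence. First I would set up the right level of granularity: fix an occurrence of a gate in $D$, say the one corresponding to a particular node $g$ in the graph-theoretic representation of $D$ (which is legitimate since, by the Selinger result cited in Section~\ref{syntax}, $D$ is determined up to graph isomorphism). For a starting configuration $(c,p)$, by Proposition~\ref{determinismterm} there is a unique derivation of $(D,c,p)\xRightarrow{[D]_{c,p}}\tau_D(c,p)$; unfolding this derivation traces out a finite sequence of elementary steps, i.e.\ a ``path'' of the token visiting wires, beam splitters, flips and gates of $D$. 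The matrix $[D]_{c,p}$ is the ordered product of the gate-matrices encountered along this path.

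The key combinatorial fact to establish is that the token visits any given oriented wire of $D$ at most once, hence any given gate occurrence at most once within a single path $[D]_{c,p}$. This is exactly the argument already sketched in the text before the statement: if the token visited a wire twice with the same polarisation it would be in a repeating configuration and the derivation could not terminate, contradicting Proposition~\ref{determinismterm}; and a gate node has a single incoming wire, so it cannot be entered with two different polarisations along one path. Formally I would prove this by induction on the structure of $D$, tracking how the $(\circ)$, $(\otimes 1)$, $(\otimes 2)$ and $(\mathsf T_k)$ rules compose paths, with the $(\mathsf T_k)$ case being where one must be careful: the $k+1$ sub-runs through the traced diagram $D$ are pairwise ``disjoint'' on each wire because two coinciding visits would again force a loop. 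Combining this across the at most three sub-runs allowed by $k\le 2$ still gives at most\,\dots\ wait: a single traced diagram can be re-entered, so a wire of the body $D$ could a priori be visited once per sub-run. The clean statement is rather at the level of the whole diagram $D$: the total number of times the token traverses a fixed wire, summed over the entire run (including all passes through every enclosing trace), is at most two, because a third visit to the same wire would produce two visits with the same polarisation among the three, hence a loop. This bound of two is what the informal discussion asserts, and it is the crux; I would prove it by a direct argument on the unique terminating derivation, showing that a wire visited three times yields two visits with equal polarisation, and then exhibiting the ensuing infinite derivation, contradicting uniqueness/termination in Proposition~\ref{determinismterm}.

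Given that bound, fix a gate occurrence $U$ in $D$, sitting on wire $w$. Across \emph{all} configurations $(c,p)\in\hv\times[n]$ and all their runs, how many times can the token pass through $w$? A priori this is unbounded since there are $2n$ starting configurations. The point is that by Proposition~\ref{reversibility} each traversal of $w$ in some direction determines, backwards and forwards, the entire run it belongs to; so the traversals of $w$ are partitioned according to which run they belong to, and within any one run $w$ is crossed at most twice by the above. But more is true: distinct starting configurations whose runs both cross $w$ must cross $w$ in the \emph{same} state, and by reversibility a crossing of $w$ in a given direction with a given polarisation extends uniquely to a single bi-infinite-free, i.e.\ terminating, path; hence there are at most two such ``$w$-crossing states'' total (one per direction is too crude — rather, at most two crossings can occur before a loop is forced, across everything), and therefore at most two distinct starting configurations $(c_0,p_0),(c_1,p_1)$ whose runs meet $w$. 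For every other $(c,p)$ the run of the token in $D$ never enters the node carrying $U$, so replacing that occurrence of $U$ by $V$ changes neither the path nor the product of matrices along it: $[D]_{c,p}=[D']_{c,p}$. This is precisely the claim.

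The main obstacle is the global ``at most two crossings of a fixed wire'' bound and packaging it so that it yields ``at most two relevant starting configurations'': locally each run crosses $w$ at most twice, but one must rule out that three or more \emph{different} starting configurations each contribute a crossing of $w$. Here I would lean on Proposition~\ref{reversibility}: a crossing of $w$ is a local event $(w,\text{direction},\text{polarisation})$, and both forward determinism (Proposition~\ref{determinismterm}) and backward determinism (Proposition~\ref{reversibility}) mean such an event lies on a unique maximal path; since that maximal path crosses $w$ at most twice, the set of crossing-events of $w$ has size at most two, so at most two maximal paths — equivalently at most two starting configurations — meet $w$. Once this is in hand the conclusion about $[D]_{c,p}=[D']_{c,p}$ for the remaining configurations is immediate from the definition of the path semantics, since the derivation tree for such $(c,p)$ never uses the rule instance $(\tikzfig{gateU-s},c',0)\xRightarrow{U}(c',0)$ attached to that occurrence, and all other rule instances are identical in $D$ and $D'$.
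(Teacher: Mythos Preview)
Your core insight is right and is essentially the whole argument: at the wire carrying the gate occurrence there are exactly two possible local token states (one per polarisation, the direction being fixed), and by reversibility each such state lies on at most one input-to-output run, so at most two starting configurations can produce a run through that gate; for all other $(c,p)$ the derivation never invokes the rule instance for that gate and is therefore identical in $D$ and $D'$. The detour through a ``within one run'' bound is unnecessary, and the sentence ``since that maximal path crosses $w$ at most twice, the set of crossing-events of $w$ has size at most two'' is circular: the set has size at most two simply because there are two polarisations.

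There is, however, a genuine gap in how you justify the key step. Propositions~\ref{determinismterm} and~\ref{reversibility} state determinism and reversibility for the \emph{input/output} behaviour of a whole diagram; they do not directly say that an internal local state at the gate extends uniquely forward to an output and backward to an input. Making this precise requires either formalising a small-step token semantics (and proving it deterministic and reversible), or a structural induction on $D$: in the $\circ$ and $\otimes$ cases the inductive hypothesis transfers immediately (using that replacing the gate leaves $\tau_{D_i}$ unchanged and that $\tau_{D_1}$ is a bijection), and in the $Tr$ case one uses injectivity of $\tau_D$ to see that the successive sub-runs through the body start at pairwise distinct configurations, so each of the at most two ``affected'' body-configurations supplied by the inductive hypothesis lies in at most one sub-run, hence in the run of at most one starting configuration of $Tr(D)$. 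The paper takes exactly this second route---its proof is ``straightforward by induction on $D$''---which sidesteps internal states entirely by working only with $\tau_D$, $[D]_{c,p}$, and their behaviour under $\circ$, $\otimes$, $Tr$. Once you fill the gap, your argument collapses to that induction.
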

\begin{proof}\vspace{-0.16cm}
The proof is straightforward by induction on $D$.
\end{proof}

As a consequence the diagrams of Figure \ref{fig:perm3} are optimal in the number of uses of each $U_i$: since each of the 6 paths must depend on each $U_i$, at least three copies of each $U_i$ are required in a diagram which solves the permutation problem of $3$ unitaries.

\subsection{Quantum control -- Denotational semantics}\label{definterp}

A crucial property of PBS-diagram is to offer the ability to have a quantum control, i.e. a particle whose input state is a superposition of polarisations, positions, or both. To encounter the quantum control, we introduce in this section a denotational semantics which associates with any diagram a map acting on the state space $\mathcal H_n:= \mathbb C^{\{\rightarrow, \uparrow\}}\otimes \mathbb C^n\otimes \mathbb C^q$. Using Dirac notations,  $\{\ket{\rightarrow}, \ket{\uparrow}\}$ (resp. $\{\ket {x} ~|~ x\in \{0\ldots k-1\}\}$) is an orthonormal basis of $\mathbb C^{\hv}$ (resp. $\mathbb C^k$).  Thus  $\{\ket {c,p,x} ~|~ c\in \hv, p\in [n], x\in [q]\}$ is an orthonormal basis of $\mathcal H_n$.

\begin{definition}
The denotational semantics of a \textup{PBS}-diagram $D:n\to n$ is the linear map $\interp D : \mathcal H_n \to \mathcal H_n$  inductively defined as follows:
 \begin{longtable}{R@{}C@{}LCR@{}C@{}L}
\interp{~\tikzfig{diagrammevide-s}~}&{~=~}&0 && \interp{\,\tikzfig{filcourt-s}\,} &{~=~} & \ket{c,0,x}\mapsto \ket{c,0,x}\\\\[-0.2cm]
\interp {~\tikzfig{swap-s}~} &{=}& \ket {c,p,x}\mapsto \ket{c,1-p,x}&& \interp{\,\tikzfig{gateU-s}\,} &=&\ket {c,0,x} \mapsto  \ket {c,0}\otimes U\ket x\\\\[-0.2cm]
\interp{\,\tikzfig{neg-s}\,} &{=}&\begin{cases}\ket {\rightarrow,0,x} \mapsto  \ket {\,\uparrow\,,0,x} &\\
    \ket {\,\,\uparrow\,,0,x} \mapsto  \ket {\rightarrow,0,x}&\end{cases} && \interp{~\tikzfig{beamsplitter-s}~} &=&\begin{cases}\ket {\rightarrow,p,x} \mapsto  \ket {\rightarrow,p,x} &\\
    \ket {\,\,\uparrow\,,p,x} \mapsto  \ket {\,\,\uparrow\,,1-p,x}&\end{cases}\\\\[-0.2cm]
    \multicolumn{7}{C}{\interp{D_2\circ D_1} ~=~ \interp{D_2}\circ \interp{D_1} \quad\qquad \interp{D_1\otimes D_2} ~=~ \interp{D_1}\boxplus\interp{D_2} \quad\qquad \interp{Tr(D)} ~=~ \T(\interp D)}
  \end{longtable}
\vspace{-0.2cm}\noindent where:
\begin{itemize}
\item $f\boxplus g := \varphi \circ (f\oplus g)\circ \varphi^{-1}$ with $\varphi\colon\mathcal H_n\oplus \mathcal H_{m} \to \mathcal H_{n+m}$  the isomorphism defined as $(\ket{c,p,x},\ket{c',p',x'})\mapsto \ket {c,p,x}+\ket {c',p'+n,x'}$.

\item  $\T(f):=\displaystyle\sum_{k\in \mathbb N}\pi_1 \circ (f \circ \pi_0)^k \circ  f \circ \iota$ with $\iota:\mathcal H_n {\to} \mathcal H_{n+1}::\ket{c,x,y} \mapsto \ket{c,x,y}$, $\pi_0:\mathcal H_{n+1} {\to} \mathcal H_{n+1}::\ket{c,x,y} \mapsto \begin{cases} 0 &\text{if $x<n$}\\\ket{c,n,y}&\text{if $x = n$}\end{cases}$, and $\pi_1:\mathcal H_{n+1} {\to} \mathcal H_{n}::\ket{c,x,y} \mapsto \begin{cases} \ket{c,x,y}&\text{if $x<n$}\\0&\text{if $x = n$.}\end{cases}$. 
\end{itemize}
\end{definition}

Notice that while the semantics of the trace is defined by means of an infinite sum, this sum is actually made of a finite number of nonzero elements, which guarantees that the denotational semantics is well-defined:

\begin{proposition}\label{welldefinednessofdenotationalsemantics} For any diagram $D:n\to n$, $\interp D \in \S_n$, where $\S_n$ is the monoid of the linear maps $f\colon \mathcal H_n \to \mathcal H_n$ such that $f\ket{c,p,x} = \ket{\tau (c,p)}\otimes U_{c,p}\ket x$ for some permutation $\tau$ on $\hv \times [n]$ and matrices $U_{c,p}\in \mathbb C^{q\times q}$. \end{proposition}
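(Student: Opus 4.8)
The plan is to prove $\interp D \in \S_n$ by structural induction on $D$, following the inductive definition of PBS-diagrams. First I would check the base cases: the generators $\tikzfig{filcourt-s}$, $\tikzfig{neg-s}$, $\tikzfig{swap-s}$, $\tikzfig{beamsplitter-s}$, and $\tikzfig{gateU-s}$ all manifestly have the required form $\ket{c,p,x}\mapsto\ket{\tau(c,p)}\otimes U_{c,p}\ket x$ — one just reads off $\tau$ and the $U_{c,p}$ from the definition (for the wire generators all $U_{c,p}$ are the identity; for $\tikzfig{gateU-s}$, $\tau$ is the identity and every $U_{c,0}=U$). The empty diagram $\tikzfig{diagrammevide-s}:0\to 0$ is vacuously in $\S_0$ since $\hv\times[0]=\emptyset$, so $\S_0$ is the trivial monoid $\{0\}$ and $\interp{\tikzfig{diagrammevide-s}}=0$ is its element.

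Next I would handle the three composition operations, showing $\S$ is closed under each. For sequential composition, if $\interp{D_1},\interp{D_2}\in\S_n$ with data $(\tau_1,(U^{(1)}_{c,p}))$ and $(\tau_2,(U^{(2)}_{c,p}))$, then $\interp{D_2}\circ\interp{D_1}$ sends $\ket{c,p,x}$ to $\ket{\tau_2(\tau_1(c,p))}\otimes U^{(2)}_{\tau_1(c,p)}U^{(1)}_{c,p}\ket x$, which is of the required form with permutation $\tau_2\circ\tau_1$. For parallel composition, $\interp{D_1}\boxplus\interp{D_2}$ acts on $\mathcal H_{n+m}$; conjugating by the block isomorphism $\varphi$ and using that $\varphi$ itself respects the basis labelling (it relabels positions by a shift, leaving polarisation and data untouched), one sees the direct sum of two maps of the prescribed shape is again of that shape, with the evident block-permutation. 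The only genuinely delicate case is the trace. Here I would argue that $\T(\interp D)$ is well-defined and lands in $\S_n$. Given $\interp D\in\S_{n+1}$ with permutation $\tau$, observe that $\interp D\circ\pi_0$ restricted to the image of $\pi_0$ (spanned by $\ket{c,n,y}$) acts, on the level of polarisation, as $\ket{c,n,y}\mapsto\ket{\tau(c,n)}\otimes U_{c,n}\ket y$ and is then killed by the next $\pi_0$ unless $\tau(c,n)$ again has position $n$. Since $\tau$ is a permutation of the four-element-by-position set $\hv\times\{n\}$... more precisely of $\hv\times[n+1]$, the orbit of any $(c,n)$ under the partial map "apply $\tau$, continue only while position $=n$" can pass through the position-$n$ slice at most twice before $\tau$ would force a repeat (there are only two polarisations, and a third visit would repeat a $(c,n)$, contradicting that $\tau$ is a bijection once we also track that the return map is itself injective); hence $(\interp D\circ\pi_0)^k\circ\interp D\circ\iota=0$ for $k\geq 2$, the sum in $\T$ is finite, and $\T(\interp D)$ is well-defined. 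Reading off what the finitely many surviving terms do to each basis vector $\ket{c,p,x}$ ($p<n$) then exhibits $\T(\interp D)$ in the form required by $\S_n$, with the induced permutation being exactly the "trace" of $\tau$ (the permutation obtained by short-circuiting position $n$), which is well-defined precisely because $\tau$ restricted appropriately has no fixed-point obstruction — this is the combinatorial content underlying \cref{determinismterm}.

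I expect the main obstacle to be the trace case, specifically the bound that the partial iteration through the position-$n$ slice terminates after at most two steps, so that the infinite sum defining $\T$ collapses. This is the same phenomenon as the $k\le 2$ bound in the $(\mathsf T_k)$-rule of the path semantics and in \cref{determinismterm}; indeed the cleanest route is probably to invoke \cref{determinismterm} (and the adequacy between path and denotational semantics, if that has been established by this point) rather than re-deriving the bound from scratch. Concretely, adequacy would give $\interp D\ket{c,p,x}=\ket{\tau_D(c,p)}\otimes[D]_{c,p}\ket x$, so $\S_n$-membership is almost immediate from \cref{determinismterm}; the remaining work is just to check that the denotational trace formula reproduces the $(\mathsf T_k)$-rule, i.e. that $\T$ of a map in $\S_{n+1}$ is computed by exactly the finite iteration the path semantics prescribes. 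The parallel-composition step, while routine, also deserves a careful word to make sure the conjugation by $\varphi$ is bookkept correctly, since $\varphi$ is not the identity on basis labels but a position shift.
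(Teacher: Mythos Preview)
Your approach is the paper's: structural induction, with the trace case handled via the bound that iteration through the position-$n$ slice terminates in at most two steps (the paper isolates this as a standalone lemma that $\T$ maps $\S_{n+1}$ into $\S_n$; note your cutoff should read $k\ge 3$, not $k\ge 2$, since the single surviving term can sit at index $0$, $1$, or $2$). The one point you omit is well-definedness of $\interp\cdot$ with respect to the traced-PROP structural congruence: the paper first runs the induction on raw terms, then proves adequacy on raw terms as well, and finally uses adequacy together with the fact that the path semantics respects the congruence to conclude that $\interp\cdot$ does too --- which is precisely the non-circular use of adequacy you anticipated.
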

\begin{proof}\vspace{-0.16cm}The proof is given in appendix, Section \ref{proofad}. \end{proof}

The denotational semantics is adequate with respect to the path  semantics:

\begin{theorem}[Adequacy]\label{adequacy}
For any $D:n\to n$, $\interp{D}=\ket{c,p,x}\mapsto\ket{\tau_D(c,p)}\otimes [D]_{c,p}\ket{x}$, \\where $\tau_D$  and $[D]_{c,p}$ are such that $(D,c,p)\xRightarrow{[D]_{c,p}} \tau_D(c,p)$
\end{theorem}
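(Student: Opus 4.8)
The statement asserts that the denotational semantics $\interp{D}$ coincides, basis-vector by basis-vector, with the path semantics: on $\ket{c,p,x}$ it produces $\ket{\tau_D(c,p)}\otimes[D]_{c,p}\ket{x}$. The natural strategy is structural induction on $D$, following the inductive definitions of both semantics. By \cref{welldefinednessofdenotationalsemantics}, $\interp D\in\S_n$, so there already exist a permutation $\tau$ and matrices $U_{c,p}$ with $\interp D\ket{c,p,x}=\ket{\tau(c,p)}\otimes U_{c,p}\ket x$; the task is to identify $\tau=\tau_D$ and $U_{c,p}=[D]_{c,p}$. By \cref{determinismterm}, $\tau_D$ and $[D]_{c,p}$ are well-defined, so it suffices to show $\interp D\ket{c,p,x}=\ket{\tau_D(c,p)}\otimes[D]_{c,p}\ket x$ for every $(c,p)\in\hv\times[n]$.

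\textbf{Base cases.} For each generator ($\tikzfig{filcourt-s}$, $\tikzfig{neg-s}$, $\tikzfig{swap-s}$, $\tikzfig{beamsplitter-s}$, $\tikzfig{gateU-s}$) one reads off the denotational clause and the corresponding path-semantics axiom and checks they agree; this is immediate. The empty diagram requires no check since, as noted after the definition of path semantics, there is no valid configuration of type $0\to0$ (and $\mathcal H_0=0$ anyway).

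\textbf{Inductive steps.} For $D_2\circ D_1$: I would apply the induction hypothesis to each factor, so $\interp{D_i}\ket{c,p,x}=\ket{\tau_{D_i}(c,p)}\otimes[D_i]_{c,p}\ket x$, then compose and match against the $(\circ)$-rule, which gives $\tau_{D_2\circ D_1}=\tau_{D_2}\circ\tau_{D_1}$ and $[D_2\circ D_1]_{c,p}=[D_2]_{\tau_{D_1}(c,p)}\,[D_1]_{c,p}$. For $D_1\otimes D_2$: unfold the definition of $\boxplus$ via the isomorphism $\varphi$ and split on whether $p<n$ or $p\ge n$, matching the $(\otimes1)$ and $(\otimes2)$ rules respectively; the bookkeeping of the position shift by $n$ is routine. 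The trace case is the substantive one. Here I would expand $\T(\interp D)=\sum_{k\in\mathbb N}\pi_1\circ(\interp D\circ\pi_0)^k\circ\interp D\circ\iota$ and, applying it to $\ket{c_0,p_0,x}$ with $p_0\in[n]$, track the orbit of the particle: writing $(c_{i+1},p_{i+1})=\tau_D(c_i,p_i)$ and using the induction hypothesis, each application of $\interp D$ contributes the matrix $[D]_{c_i,p_i}$, while $\pi_0$ kills the term unless $p_{i+1}=n$ (feedback wire) and $\pi_1$ survives only when finally $p_{k+1}<n$. By \cref{determinismterm} applied to $Tr(D)$ there is a unique smallest such $k$, and one shows $k\le2$; the single surviving summand then reads $\pi_1\ket{c_{k+1},p_{k+1},[D]_{c_k,p_k}\cdots[D]_{c_0,p_0}\ket x}$, which is exactly $\ket{\tau_{Tr(D)}(c_0,p_0)}\otimes[Tr(D)]_{c_0,p_0}\ket x$ as prescribed by the $(\mathsf T_k)$-rule. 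One also has to check that no \emph{other} $k'$ contributes, i.e.\ that the orbit does not return to the feedback wire a second time after leaving it — this is again a consequence of determinism and reversibility (\cref{determinismterm,reversibility}).

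\textbf{Main obstacle.} The only delicate point is the trace case: one must carefully argue that exactly one term in the infinite sum $\sum_k$ is nonzero, which amounts to showing the orbit $(c_0,p_0)\to(c_1,p_1)\to\cdots$ under $\tau_D$ hits the index $n$ a bounded number of times and that the first return to $[n]$ after the last visit to $n$ is genuinely terminal. This is precisely the content already encapsulated in \cref{determinismterm} (and its proof in the appendix, which establishes the $k\le2$ bound), so modulo quoting that proposition the argument is a mechanical unwinding of the two definitions. I would therefore state the trace case with a reference to \cref{determinismterm} for the finiteness/termination claim and keep the remaining verifications brief.
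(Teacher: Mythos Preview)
Your proposal is correct and follows essentially the same route as the paper: structural induction on $D$, with the base cases and the $\circ$, $\otimes$ cases handled exactly as you describe. The only organisational difference is in the trace case: the paper first isolates a purely denotational lemma (for any $f\in\S_{n+1}$, $\T(f)\ket{c,p,y}=\ket{\tau^{k_1}(c,p)}\otimes U_{\tau^{k_1-1}(c,p)}\cdots U_{c,p}\ket{y}$ with $k_1$ the first return time to $[n]$, bounded by $3$ via pigeonhole on $\hv\times\{n\}$), and then invokes it together with the induction hypothesis; you instead unfold $\T(\interp{D'})$ inline after applying the induction hypothesis and appeal to \cref{determinismterm} for the finiteness of the orbit. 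Both arguments amount to the same pigeonhole observation, so this is a matter of packaging rather than substance.
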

\begin{proof}\vspace{-0.16cm}The proof is given in appendix, Section \ref{proofad}. \end{proof} 

The adequacy theorem implies that two diagrams have the same denotational semantics if and only if they have the same path semantics. As a consequence, it provides a graphical characterisation of the denotational semantics. Indeed, for any diagram $D:n\to n$, $\interp D$ is, by linearity,  entirely defined by $\tau_D$ and $\{[D]_{c,p}\}_{c\in \hv, p\in [n]}$. Since $\tau_D$ and $[D]_{c,p}$ have a nice graphical interpretation as paths from the inputs to the outputs, the adequacy theorem provides a graphical way to compute the denotational semantics of any PBS-diagram.

\begin{example}\label{ex:commut}
The quantum switch (\cref{figBS}.b and \cref{exQSPath}) can be used to decide whether $U$ and $V$ are commuting or anti-commuting \cite{chiribella2012perfect}. The semantics of the quantum switch is $\interp{\mathsf{QS}[U,V]} = \begin{cases}\ket{\rightarrow,0,x} \mapsto  \ket{\rightarrow,0} \otimes VU \ket x\\ \ket{\uparrow,0,x} \mapsto  \ket{\uparrow,0} \otimes UV \ket x\end{cases}$. We assume that $UV=(-1)^k VU$ and call the quantum switch with a control qubit in a uniform superposition: $\interp{\mathsf{QS}[U,V]}\frac {\ket{\rightarrow} + \ket{\uparrow}}{\sqrt 2}\otimes\ket {0,x}=\frac {\ket{\rightarrow,0}\otimes VU\ket x + \ket{\uparrow,0}\otimes UV\ket x}{\sqrt 2}=  \frac {\ket{\rightarrow,0}\otimes VU\ket x + (-1)^k\ket{\uparrow,0}\otimes VU\ket x}{\sqrt 2} =$ \\$ 
\frac {\ket{\rightarrow} + (-1)^k \ket{\uparrow}}{\sqrt 2}\otimes VU\ket {0,x}$. Thus, by measuring the control qubit in the  $\{\frac {\ket{\rightarrow} + \ket{\uparrow}}{\sqrt 2},\frac {\ket{\rightarrow} - \ket{\uparrow}}{\sqrt 2}\}$-basis, one can decide whether $U$ and $V$ are commuting or anti-commuting. 
\end{example}

\section{Equational theory -- PBS-calculus }\label{equationaltheory}

The representation of a quantum computation using PBS-diagrams is not unique, in the sense that two distinct PBS-diagrams may have the same semantics (e.g.~diagrams of Figure \ref{fig:perm3}). In this section, we introduce 10 equations on PBS-diagrams (see Figure \ref{axioms}) as the axioms of a language that we call the PBS-calculus. We prove that the PBS-calculus is sound (that is, consistent with the semantics), complete (that is, it captures entirely the semantic equivalence) and minimal (that is, all axioms are necessary to have completeness). Completeness is proved by means of a normal form.

\subsection{Axiomatisation}\label{axiomatisation}

\begin{definition}[PBS-calculus]Two \textup{PBS}-diagrams $D_1, D_2$ are equivalent according to the rules of the $\textup{PBS}$-calculus, denoted $\textup{PBS}\vdash D_1=D_2$,  if one can transform $D_1$ into $D_2$ using the equations given in Figure \ref{axioms}. 
More precisely, $\textup{PBS}\vdash \cdot = \cdot$ is defined as the smallest congruence\footnote{see \cref{defcongruence} in appendix for a formal definition of congruence in this context.} which satisfies equations of figures \ref{fig:TracedProp} and \ref{axioms}. 
\end{definition}

Equations  \eqref{idbox} and \eqref{fusionuvsimple} in Figure \ref{axioms} reflect the monoidal structure of the matrices, with the identity element (Equation \eqref{idbox}) and the associative binary operation (Equation \eqref{fusionuvsimple}).
Equations \eqref{negu} and \eqref{bsuu} mean that both the polarising beam splitter and the polarisation flip commute with a gate.  Moreover, the polarising beam splitter is self inverse (Equation \eqref{bsbs}).
Notice that the negation is also self-inverse and that this is a consequence of the axioms (see Example \ref{ex:doubleneg}). Equation \eqref{bsnnnn} translates the fact that flipping the control state before and after performing a control of the position results in flipping the final position.
To give a meaning to Equation \eqref{bsnbsh}, it is useful to flip it upside down, and to remark that in a two-wire diagram, polarising beam splitters and negations on the bottom wire each perform a CNOT on the qubits representing the polarisation and the position, in opposite ways, so that each side of the equation combines 3 CNOTs and thus performs a swap between these two qubits.
In Equation \eqref{duplicateloop}, there are essentially two steps: first, the wire with the gate $V$ is a dead code, as no photon can go to the wire, so it can be discarded; the second step consists in merging the two polarising beam splitters. 
Equation \eqref{bsbsbs} is the only equation acting on three wires: in this particular configuration given by the left hand side of the equation, two polarising beam splitters can be replaced by swaps.
Equation \eqref{loopemptysimple} reflects the fact that isolated parts of a diagram have no effect on the rest.

\begin{example}\label{ex:doubleneg}
The fact that the negation is self inverse can be derived in the PBS-calculus:  $\textup{PBS} \vdash \tikzfig{neg-s}\!\!\tikzfig{neg-s} = \tikzfig{filcourt-s}$ (the derivation is given in \cref{derivationnegneg}). A more sophisticated example is the proof that the two diagrams of Figure \ref{fig:perm3} are equivalent, given in \cref{equivperm3}.
\end{example}

\setlength{\columnsep}{0cm}
\newcommand{\sca}{0.7}
\begin{figure}
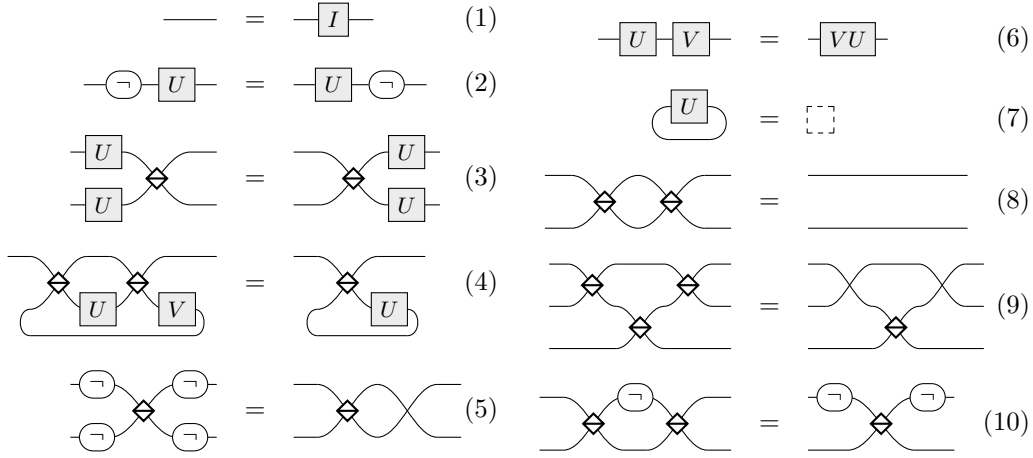

\begin{multicols}{2}
\setcounter{equation}{0}
\begin{eqnarray}
\label{idbox}\tikzfig{filcourt-s2}&=&\tikzfig{gateI-s}\\[0.3cm]
\label{negu}\tikzfig{negU-s}&=&\tikzfig{Uneg-s}\\[0.3cm]
\label{bsuu}\tikzfig{UUbeamsplitter-s}&=&\tikzfig{beamsplitterUU-s}\\[0.3cm]
\label{duplicateloop}\tikzfig{filtrerienUetV-s}&=&\tikzfig{boucletraverseU-s}\\[0.3cm]
\label{bsnnnn}\tikzfig{beamsplitternnnn-s}&=&\tikzfig{beamsplitterswap-s}
\end{eqnarray}

\begin{eqnarray}
\label{fusionuvsimple}\tikzfig{gateUgateV-s}&=&\tikzfig{gateVU-s}\\[0.3cm]
\label{loopemptysimple}\tikzfig{bouclevideU-s}&=&\tikzfig{diagrammevide-s}\\[0.3cm]
\label{bsbs}\tikzfig{beamsplitterbeamsplitter-s}&=&\tikzfig{filsparalleleslongs-s}\\[0.3cm]
\label{bsbsbs}\tikzfig{bsbsbspointeenbas-s}&=&\tikzfig{xbsxpointeenbas-s}\\[0.3cm]
\label{bsnbsh}\tikzfig{beamsplitternhautbeamsplitter-s}&=&\tikzfig{beamsplitternnhaut-s}
\end{eqnarray}
\end{multicols}
\caption{Axioms of the PBS-calculus. Given $q$ a  positive integer, $U,V \in \mathbb C^{q\times q}$ are arbitrary matrices, $I \in \mathbb C^{q\times q}$  is the identity. \label{axioms}}
\end{figure}

All these equations preserve the semantics of the PBS-diagrams: 

\begin{proposition}[Soundness]\label{soundnessoftheequations}
For any two diagrams $D_1$ and $D_2$, if $\textup{PBS}\vdash D_1 = D_2$ then $\interp{D_1}=\interp{D_2}$.
\end{proposition}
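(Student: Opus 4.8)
The plan is to prove soundness by structural induction on the derivation of $\textup{PBS}\vdash D_1 = D_2$. Since $\textup{PBS}\vdash \cdot = \cdot$ is the smallest congruence satisfying the equations of Figures~\ref{fig:TracedProp} and \ref{axioms}, it suffices to check (i)~that $\interp{\cdot}$ is invariant under the structural congruence of Figure~\ref{fig:TracedProp} — but this is already guaranteed, since the denotational semantics is defined on diagrams modulo that congruence and, via the adequacy theorem (Theorem~\ref{adequacy}), agrees with the path semantics, which Proposition~\ref{determinismterm} shows to be invariant under deformation; (ii)~that $\interp{\cdot}$ respects sequential composition, parallel composition, and trace, which is immediate from the compositional definition of $\interp{\cdot}$ ($\interp{D_2\circ D_1}=\interp{D_2}\circ\interp{D_1}$, etc.); and (iii)~that each of the ten axioms \eqref{idbox}--\eqref{bsnbsh} is sound, i.e.\ the two sides have equal denotational semantics.

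For step~(iii), I would exploit the adequacy theorem: rather than manipulating linear maps directly, it is enough to check for each axiom that the two sides induce the same permutation $\tau_D$ on $\hv\times[n]$ and the same matrix $[D]_{c,p}$ on every path. Concretely, for each axiom I would run the path semantics on every input configuration $(c,p)\in\hv\times[n]$ (a finite, small case analysis — at most $2n$ configurations, with $n\le 3$) and verify that both sides yield the same output polarisation, output position, and accumulated matrix. For the simple algebraic axioms this is routine: \eqref{idbox} is the statement that the empty product of matrices is the identity $I$; \eqref{fusionuvsimple} is associativity/composition of matrix multiplication, $[\,]_{c,0}=VU$ on both sides; \eqref{negu} and \eqref{bsuu} say that the polarisation flip and the beam splitter only move the token without touching the data register, so the gate $U$ commutes past them and is applied once on the unique path; \eqref{bsbs} says applying the beam-splitter rule twice returns the token to its original polarisation and position with identity matrix; \eqref{loopemptysimple} and \eqref{duplicateloop} involve checking that the discarded wire carrying $V$ is genuinely dead code (no path enters it) so it contributes nothing, then that the remaining beam splitters merge correctly.

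The more delicate axioms are the ones whose correctness hinges on counting how many times the token traverses a trace, namely \eqref{duplicateloop}, \eqref{bsnnnn}, \eqref{bsbsbs}, and especially \eqref{bsnbsh}. For these I would carefully trace the token through the feedback loop(s), invoking Proposition~\ref{determinismterm} to know that at most $k\le 2$ traversals per trace occur, so the analysis terminates. The "three CNOTs equal a swap" reading of \eqref{bsnbsh} mentioned in the text suggests the cleanest way to organise that case: model the beam splitter and the bottom-wire negation each as the permutation of $\hv\times[2]$ that performs a controlled-NOT (in opposite control directions) on the pair (polarisation bit, position bit), observe that neither touches the data register, and then verify that the composite of the three permutations on each side equals the swap of the two bits — a purely combinatorial check on $\{\rightarrow,\uparrow\}\times\{0,1\}$. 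I expect this last axiom, \eqref{bsnbsh}, to be the main obstacle, since it is the one whose diagrammatic shape (with a negation nested on the bottom wire through two beam splitters) makes the path-tracing least transparent; organising it via the CNOT-composition picture is what makes it manageable. All other cases reduce to short finite verifications.
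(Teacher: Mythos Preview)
Your proposal is correct and follows essentially the same approach as the paper: show that semantic equality is a congruence (your steps (i) and (ii)) and then verify each of the ten axioms via the path semantics through adequacy (your step (iii)); the paper states exactly this, only more tersely. One minor slip: axioms \eqref{bsnnnn}, \eqref{bsbsbs}, and \eqref{bsnbsh} are trace-free, so no feedback-loop counting is needed there --- only \eqref{duplicateloop} and \eqref{loopemptysimple} actually involve a trace --- but this does not affect the validity of your plan.
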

\begin{proof}\vspace{-0.16cm}The proof is given in appendix, Section \ref{proofsoundness}. \end{proof}

\subsection{Normal forms}

In this section, we introduce a notion of diagrams in normal form which is used in the next sections to prove both the universality and the  completeness of the PBS-calculus. 
They are made of two parts: the first one corresponds to a superposition of linear maps, and the second one corresponds to a permutation of the polarisations and positions, written in a way that is convenient here.

\begin{definition}[Normal Form] 
Diagrams in normal form are inductively defined as:\tikzfig{diagrammevide-xs} is in normal form, and for any $N:n\to n$ in normal form, 
\[\tikzfig{typeAfiltreUVsansprime}\quad\text{,and, if $n>0$,}\quad \tikzfig{typeCfiltreUVsansprime}\quad,\]

\vspace{-0.5cm}
\noindent are in normal form, where\tikzfig{negpotentiel-s} denotes either\tikzfig{filcourt-s} or\tikzfig{neg-s}, and $\sigma_\ell :m\to m= \tikzfig{sigmaell}$. 
\end{definition}

\vspace{-0.3cm}
\begin{remark}\label{decompNFsupperm}
For any $U,V\in \mathbb C^{q\times q}$ let $\textup{E}(U,V) := \tikzfig{filtreUV-s}$. A diagram in normal form can be written in the form $P\circ E$, where $E$ is of the form $E(U_0,V_0)\otimes\cdots\otimes E(U_{n-1},V_{n-1})$, and $P$ is built using only\tikzfig{filcourt-s},\tikzfig{neg-xs},\tikzfig{beamsplitter-xs},\tikzfig{swap-xs}, $\circ$ and $\otimes$.
\end{remark}

In the following we show that any diagram is equivalent to a diagram in normal form. 

\begin{lemma}
If $N_1$ and $N_2$ are in normal form then $N_1\otimes  N_2$ is in normal form. 
\end{lemma}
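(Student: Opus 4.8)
The plan is to proceed by induction on the structure of $N_2$ as a diagram in normal form, the claim being that $N_1\otimes N_2$ is obtained from $N_1\otimes N_2'$ (for $N_2'$ the normal form from which $N_2$ is built) by the very same normal-form construction, applied at the larger width. For the base case $N_2 = \tikzfig{diagrammevide-xs}$, we have $N_1\otimes N_2 = N_1$ by the unit law of the monoidal structure (part of the structural congruence of Figure~\ref{fig:TracedProp}), and $N_1$ is in normal form by hypothesis.

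For the inductive step, $N_2$ is obtained from some normal form $N_2':n_2\to n_2$ by one of the two constructions of the definition (call them the \emph{type-A} and \emph{type-C} constructions), so $N_2:n_2+1\to n_2+1$, and by the induction hypothesis $N_1\otimes N_2'$ is in normal form. The type-A case is immediate: this construction only adds, alongside the diagram it is applied to, a single new (bottom) wire carrying a block $E(U,V)$ with possibly a negation on each side, without connecting that wire to the rest. Hence, using associativity of $\otimes$ together with the interchange and unit laws of the structural congruence, $N_1\otimes(\text{type-A applied to }N_2')$ equals type-A applied to $N_1\otimes N_2'$, which is in normal form.

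The type-C construction is the delicate one and is where I expect the actual work to lie, because it involves the permutation diagram $\sigma_\ell:m\to m$, whose shape depends on the total width $m$; thus the $\sigma_\ell$ occurring inside $N_2$ (width $n_2+1$) and the one occurring inside the target diagram type-C applied to $N_1\otimes N_2'$ (width $n_1+n_2+1$) are a priori different diagrams. To deal with this I would use Remark~\ref{decompNFsupperm}: write $N_1 = P_1\circ E_1$ and $N_2 = P_2\circ E_2$ with each $E_i$ a tensor of $E(U,V)$ blocks and each $P_i$ built only from $\tikzfig{filcourt-s}$, $\tikzfig{neg-xs}$, $\tikzfig{beamsplitter-xs}$, $\tikzfig{swap-xs}$, $\circ$ and $\otimes$. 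Then $N_1\otimes N_2 = (P_1\otimes P_2)\circ(E_1\otimes E_2)$, and $E_1\otimes E_2$ already has exactly the required form; it remains to reorganise $(P_1\otimes P_2)$, using only the structural congruence (naturality of the swap, yanking, and the traced-PROP axioms), into the rigid template demanded by the normal form, namely a nesting of type-A and type-C layers with the $\sigma_\ell$'s taken at the correct widths, peeling the newly added wire of $N_2$ off the bottom one layer at a time.

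The main obstacle is precisely this last bookkeeping for the type-C step: one must check that after re-associating the $E$-blocks and sliding $N_1$ past the identity wires, the surviving permutation part still matches the rigid shape imposed by the normal form (a composition of $\sigma_\ell$'s), and is not merely "some diagram built from swaps and beam splitters" — in particular one has to rewrite the block-diagonal permutation $\mathit{id}_{n_1}\otimes\sigma_\ell$ in terms of the width-$(n_1+n_2+1)$ permutations that the normal form allows. Everything else reduces to routine manipulation with the structural congruence, so once this permutation-tracking is settled the induction closes.
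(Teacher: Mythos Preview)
Your induction on $N_2$ is the right shape, but you have misread the definition and manufactured an obstacle that is not there. The type-A construction is \emph{not} simply $N\otimes(\text{one-wire block})$: it too carries a permutation $\sigma_\ell$ on its right (both constructions do --- see the proof of Lemma~\ref{NFcomposition}, where in the type-A case one already has to ``slide through $\sigma_j$''). So your clean split into ``type A is immediate, type C is delicate'' does not reflect the actual definition; the same issue arises in both.

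That issue, however, is not delicate. The picture defining $\sigma_\ell:m\to m$ has the top $\ell$ wires going straight through and only touches the bottom $m-\ell$ wires; in other words $\sigma_\ell = \id_\ell\otimes(\text{a fixed block on $m-\ell$ wires})$. By associativity of $\otimes$ (a structural-congruence axiom) this gives $\id_{n_1}\otimes\sigma_\ell^{(m)} = \sigma_{n_1+\ell}^{(n_1+m)}$ on the nose, and the same shift handles the second $\sigma$ in the type-B case. So after one use of the interchange law, $N_1\otimes(\text{construction applied to }N_2')$ \emph{is} the same construction applied to $N_1\otimes N_2'$ with the $\sigma$-indices shifted by $n_1$, and the induction closes immediately --- which is exactly why the paper's proof is the single phrase ``by definition of the normal forms''. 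Your detour through Remark~\ref{decompNFsupperm} goes the wrong way: flattening to $P\circ E$ discards the inductive layering you need, and recovering a normal-form template from a generic $P$ using only structural congruence would require precisely the information you just threw away.
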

\begin{proof}\vspace{-0.16cm} By definition of the normal forms. \end{proof}

\begin{lemma}\label{NFcomposition}
If $N_1:n\to n$ and $N_2: n\to n$ are in normal form then there exists $N':n\to n$ in normal form such that $\textup{PBS}\vdash N_2\circ N_1 =N'$. 
\end{lemma}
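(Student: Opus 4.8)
The plan is to prove Lemma~\ref{NFcomposition} by induction on $n$, using the decomposition from Remark~\ref{decompNFsupperm}. Write $N_1 = P_1 \circ E_1$ and $N_2 = P_2 \circ E_1$ where each $E_i$ is a parallel composition of gadgets $\textup{E}(U,V)$ and each $P_i$ is a ``permutation part'' built only from identities, negations, polarising beam splitters and swaps. Then $N_2 \circ N_1 = P_2 \circ E_2 \circ P_1 \circ E_1$, and the crux is to commute $E_2$ past $P_1$: since $P_1$ is (by the adequacy theorem / soundness) essentially a permutation $\tau_{P_1}$ on $\hv \times [n]$ together with trivial data action, pushing $E_2 = \textup{E}(U_0,V_0)\otimes\cdots\otimes \textup{E}(U_{n-1},V_{n-1})$ leftward through $P_1$ should yield $P_1 \circ E_2'$ where $E_2'$ is again a parallel composition of $\textup{E}$-gadgets with the roles of the two matrices on each wire possibly swapped (according to whether $P_1$ sends the horizontal or vertical polarisation through that wire) and permuted among wires. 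This commutation is the main obstacle: it requires showing, axiomatically, that $\textup{E}(U,V)$ can be moved through each generator $\tikzfig{neg-xs}$, $\tikzfig{beamsplitter-xs}$, $\tikzfig{swap-xs}$ of the permutation part, which is where axioms~\eqref{negu} and~\eqref{bsuu} (the commutation of gates past $\neg$ and past the PBS), together with the definition $\textup{E}(U,V)$ built from these, do the work; I would isolate this as a sublemma (``$E$-gadgets slide through permutation parts'') and prove it by induction on the structure of $P_1$.

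Once $E_2$ has been moved all the way to the right, we are left with $N_2 \circ N_1 = P_2 \circ P_1 \circ (E_2'' \circ E_1)$ where $E_2'' \circ E_1$ is still a parallel composition over the $n$ wires, now of the form $(\textup{E}(U_0',V_0')\circ \textup{E}(U_0,V_0)) \otimes \cdots$. On each wire one merges two stacked $\textup{E}$-gadgets into a single one using axiom~\eqref{fusionuvsimple} (fusion of gates) — essentially $\textup{E}(U',V')\circ \textup{E}(U,V) = \textup{E}(U'U, V'V)$ — so $E_2'' \circ E_1 = \textup{E}(W_0,X_0)\otimes\cdots\otimes \textup{E}(W_{n-1},X_{n-1})$ is a single ``superposition part'' of the required shape, possibly after inserting identities via~\eqref{idbox} where one side had a bare wire rather than a gate. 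It then remains to see that $P_2 \circ P_1$, a composition of two permutation parts, can itself be massaged into the specific layered shape demanded by the normal-form definition (the $\tikzfig{negpotentiel-s}$ followed by the $\sigma_\ell$ layers).

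For this last point I would argue by induction on $n$ as in the statement: strip off the outermost layer of the normal form of $N_2$, apply the induction hypothesis to the $(n-1)$-wire remainder, and then reabsorb the top wire, using the structural congruence of the Traced PROP (Figure~\ref{fig:TracedProp}) freely to reorganise wires, plus axioms~\eqref{bsbs} (PBS self-inverse), \eqref{bsnnnn}, \eqref{bsnbsh} and \eqref{bsbsbs} to normalise the pattern of beam splitters and negations into the canonical $\sigma_\ell$ form. Concretely, the two normal-form constructors (type A with a top wire carrying $\tikzfig{negpotentiel-s}$, and type C with the $\sigma_\ell$ routing) are closed under the operation of post-composing with a single generator on the relevant wires — which is exactly what a permutation part is built from — and this closure, proved generator-by-generator, finishes the induction. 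The genuinely delicate bookkeeping, and the step I expect to consume most of the work, is the interaction in the type-C case between a newly-applied beam splitter on the top two wires and the existing $\sigma_\ell$ block, since resolving it needs~\eqref{bsbsbs} and~\eqref{bsnbsh} applied in just the right place; everything else reduces to repeated use of~\eqref{idbox}, \eqref{negu}, \eqref{bsuu} and \eqref{fusionuvsimple} together with deformation of diagrams.
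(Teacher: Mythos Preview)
Your proposal is correct and rests on the same technical core as the paper, but the organisation differs. The paper does not separate the $E$-part from the $P$-part: it simply observes that, by the PROP axioms, $N_2$ decomposes as $g_\ell \circ \cdots \circ g_0$ where each $g_k$ is one of $E(U,V)$, $\neg$, swap, or a beam splitter on one or two consecutive wires (identities elsewhere), and then proves a single absorption sublemma --- for any $N$ in normal form and any such slice $g$, the diagram $g \circ N$ can be put in normal form --- by induction on $n$ with a case analysis on the generator. Your route instead first commutes the whole $E$-layer of $N_2$ leftward through $P_1$ (which needs the derived equations~\eqref{neguv}, \eqref{bsuvh}, \eqref{bsuvb} to push $E(U,V)$ through $\neg$ and through a beam splitter), then merges the two $E$-layers via~\eqref{fusionuv}, and finally normalises $P_2 \circ P_1$ by generator-by-generator absorption. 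That last step is precisely the paper's absorption sublemma restricted to the gate-free generators, and the commutation facts you need in the first step are exactly what the paper derives inside the $E(U,V)$-case of the same sublemma --- so the two routes perform the same calculations, with the paper's single uniform lemma being a bit more economical than your two-pass factoring. You do correctly identify the genuinely delicate sub-case (a fresh beam splitter meeting the existing beam splitter and $\sigma_\ell$ block of the second normal-form constructor), which the paper handles with several further derived equations (\eqref{swapbsbsvar}, \eqref{bsnbsesc}, \eqref{bsxbs}, \eqref{bsnbsb}, \eqref{bsnnbs}) beyond the axioms you name.
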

\begin{proof}\vspace{-0.16cm} Notice that using the axioms of PROP, $N_2 = g_\ell\circ \ldots \circ g_0$ where each $g_k$ consists of either $E(U,V)$, $\tikzfig{filcourt-s}$, $\tikzfig{neg-xs}$, $\tikzfig{beamsplitter-xs}$ or $\tikzfig{swap-xs}$ acting on any one or two consecutive positions, in parallel with the identity on the other positions. We show that every $g_k$ can be successively integrated to the normal form (see  appendix, Section \ref{proofcompleteness_comp}).
\end{proof}

\begin{lemma}\label{NFtrace}
If $N:n+1\to n+1$ is in normal form then there exists $N':n\to n$ in normal form such that $\textup{PBS}\vdash Tr(N)=N'$.
\end{lemma}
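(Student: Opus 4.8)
The plan is to slide the traced wire off the ``superposition'' layer of $N$, reducing to the trace of a single auxiliary wire, and then to run a finite case analysis whose key ingredient is that a photon crosses any given wire at most twice (Proposition~\ref{determinismterm}). First I would use Remark~\ref{decompNFsupperm} to write $N = P\circ E$ with $E = \textup{E}(U_0,V_0)\otimes\cdots\otimes\textup{E}(U_n,V_n)$ and $P$ built only from $\tikzfig{filcourt-s}$, $\tikzfig{neg-xs}$, $\tikzfig{beamsplitter-xs}$, $\tikzfig{swap-xs}$, $\circ$ and $\otimes$. Writing $E' := \textup{E}(U_0,V_0)\otimes\cdots\otimes\textup{E}(U_{n-1},V_{n-1})$, bifunctoriality gives $E = (\id_n\otimes\textup{E}(U_n,V_n))\circ(E'\otimes\id_1)$, hence $N = \big(P\circ(\id_n\otimes\textup{E}(U_n,V_n))\big)\circ(E'\otimes\id_1)$; since $E'$ touches only non-traced wires, the naturality axiom of the trace (Figure~\ref{fig:TracedProp}) yields $\textup{PBS}\vdash Tr(N) = M\circ E'$, where $M := Tr\big(P\circ(\id_n\otimes\textup{E}(U_n,V_n))\big)$. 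As a parallel composition of $\textup{E}$-blocks, $E'$ is in normal form, so by Lemma~\ref{NFcomposition} it suffices to prove that $M$ is PBS-equal to a diagram in normal form.

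To do that I would analyse how the last wire is connected inside $P$. The diagram $P$ realises a permutation $\tau_P$ of $\hv\times[n+1]$ that acts trivially on the data register, and a photon reaching the last wire traverses $\textup{E}(U_n,V_n)$ (applying $U_n$ or $V_n$ while keeping its polarisation) before re-entering $P$ at position $n$. By Proposition~\ref{determinismterm} this cannot loop forever, and a short pigeonhole argument on $\tau_P$ restricted to $\{(\rightarrow,n),(\uparrow,n)\}$ leaves exactly three situations: (i) $\tau_P$ sends this two-element set to itself, so the last wire is disconnected from $[n]$ and its feedback loop is dead code; (ii) both images lie in $[n]$, so every polarisation arriving at position $n$ leaves after one traversal; (iii) one polarisation is sent to the opposite one at position $n$, and the latter is then sent into $[n]$, so a single external input wraps the loop twice. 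Using the structural axioms together with \eqref{bsbs}, \eqref{bsnnnn}, \eqref{bsnbsh} and \eqref{bsbsbs}, one can reshape $P$ so that this behaviour is displayed explicitly on the last wire.

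Then I would rewrite each case. In case~(i) the last wire carries (PBS-equal to) either $\tikzfig{filcourt-s}$ or $\tikzfig{neg-s}$ and its loop is dead; exposing the dead branches of the loop with the beam-splitter axioms, pruning them with \eqref{duplicateloop}, using $\tikzfig{neg-s}\tikzfig{neg-s}=\tikzfig{filcourt-s}$ (Example~\ref{ex:doubleneg}) and finally removing the empty loop with \eqref{loopemptysimple}, one is left with a permutation diagram on $n$ wires, which is a normal form with all gates equal to the identity. In cases~(ii) and (iii) one uses \eqref{duplicateloop} to make the one or two passes through the loop explicit (merging them in case~(iii)), \eqref{fusionuvsimple} to fuse $U_n$ and $V_n$ into a single matrix when both are applied, and \eqref{negu} and \eqref{bsuu} to commute the resulting gate past the negations and beam splitters of the routing part; the outcome has the form (permutation) $\circ\,\big(\textup{E}(W_0,W_0')\otimes\cdots\otimes\textup{E}(W_{n-1},W_{n-1}')\big)$, which, after reorganising the routing part in normal-form shape and applying Lemma~\ref{NFcomposition}, is a normal form.

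The semantics side is easy: the path semantics of $M$ is read off at once from Theorem~\ref{adequacy}, and it matches the normal form produced above. The work is all syntactic: turning ``at most two passes through the loop'' into genuine equational derivations. I expect the two delicate points to be the collapse of the detached loop in case~(i), which forces one to unfold $\textup{E}(U_n,V_n)$ and manipulate nested traces through \eqref{bsbs}, \eqref{duplicateloop} and \eqref{loopemptysimple}, and the verification in case~(iii) that a double pass interleaved with a polarisation flip is captured \emph{exactly} by one block $\textup{E}(W,W')$ once \eqref{duplicateloop} and \eqref{fusionuvsimple} have been applied; keeping the routing part in normal-form shape throughout is the bulk of the remaining bookkeeping.
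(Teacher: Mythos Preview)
Your semantic case analysis on $\tau_P$ is correct, and the three cases you identify are exactly the ones that arise. The gap is the step ``reshape $P$ so that this behaviour is displayed explicitly on the last wire'': you argue about $\tau_P$ semantically but then need to transform $P$ \emph{syntactically} using the equational theory, and you do not show how. Asserting that the structural axioms together with \eqref{bsbs}, \eqref{bsnnnn}, \eqref{bsnbsh} and \eqref{bsbsbs} suffice is essentially claiming a completeness result for the gate-free, trace-free fragment, which has not been established at this point of the paper. If you try to make this precise by putting $P$ itself in normal form (legitimate, since $P$ is trace-free and Lemma~\ref{NFcomposition} is already available), then after reabsorbing $\textup{E}(U_n,V_n)$ you are back to $Tr$ of a diagram in normal form with exactly the layered structure the paper exploits directly---so the detour through $\tau_P$ has not bought anything.

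The paper's proof sidesteps this by never leaving the inductive structure of the normal form. Since $N$ is in normal form, its outermost layer is either ``type~A'' (the last wire carries $\textup{E}(U,V)$, an optional $\neg$, and a cyclic shift $\sigma_j$ over an inner normal form $N'$) or ``type~B'' (the same, followed by a beam splitter with the previous wire and a further shift $\sigma_k$). The case split is on the concrete parameters $j$ and $k$ rather than on the abstract permutation $\tau_P$, and each case is dispatched by one dedicated derived equation: tracing out $\textup{E}(U,V)$ (resp.\ $\textup{E}(U,V)$ followed by $\neg$) gives the empty diagram; tracing $\textup{E}(U,V)$ through one leg of a beam splitter gives $\textup{E}(I,V)$, $\textup{E}(I,VU)$, $\textup{E}(U,I)$ or $\textup{E}(UV,I)$ depending on the leg and the presence of a $\neg$; and in the remaining case the traced wire avoids the bottom $\textup{E}$-block entirely and yanking removes the trace. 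In each case the residue is then absorbed into $N'$ via Lemma~\ref{NFajout}. Your extraction of $E'$ is harmless but unnecessary: the normal-form layering already isolates the single $\textup{E}$-block on the traced wire, and the explicit parameters $j,k$ are precisely the syntactic handle that your semantic argument lacks.
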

\begin{proof}\vspace{-0.16cm}The proof is given in appendix, Section \ref{proofcomptrace}. \end{proof}

We are now ready to prove that any PBS-diagram can be put in normal form:
\begin{proposition}\label{existenceofthenormalform}
For any $D:n\to n$, there exists a \textup{PBS}-diagram $N: n\to n$ in normal form such that $\textup{PBS}\vdash D = N$.
\end{proposition}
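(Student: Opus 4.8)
The plan is to prove the statement by structural induction on the PBS-diagram $D$, using the three preceding lemmas to handle the compositional operators. First I would set up the induction: the base cases are the generators $\tikzfig{diagrammevide-s}$, $\tikzfig{filcourt-s}$, $\tikzfig{neg-s}$, $\tikzfig{swap-s}$, $\tikzfig{beamsplitter-s}$, and $\tikzfig{gateU-s}$ for $U\in\mathbb C^{q\times q}$. For each of these I need to exhibit an equivalent diagram in normal form. The empty diagram is already in normal form by definition. For $\tikzfig{filcourt-s}$, $\tikzfig{neg-s}$, and $\tikzfig{gateU-s}$, which are of type $1\to1$, I would write them as the type-$A$ normal form (the $\tikzfig{typeAfiltreUVsansprime}$ construction) applied to the empty diagram with suitable choices: the $\tikzfig{negpotentiel-s}$ slot is set to $\tikzfig{filcourt-s}$ or $\tikzfig{neg-s}$ as needed, and the $E(U,V)$ block (from Remark \ref{decompNFsupperm}) is set using Equation \eqref{idbox} to absorb or produce the identity gate, and Equation \eqref{fusionuvsimple} to merge gates if required; for $\tikzfig{gateU-s}$ one wants $E(U,I)$ on one branch. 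For $\tikzfig{swap-s}$ and $\tikzfig{beamsplitter-s}$, which are of type $2\to2$, I would take the type-$C$ normal form (the $\tikzfig{typeCfiltreUVsansprime}$ construction) over the normal form for the identity on $1$ wire, choosing the internal permutation part $P$ (in the language of Remark \ref{decompNFsupperm}, built from $\tikzfig{filcourt-s}$, $\tikzfig{neg-xs}$, $\tikzfig{beamsplitter-xs}$, $\tikzfig{swap-xs}$) to realize exactly the swap or the beam splitter, while the $E$-blocks are all trivial identities via \eqref{idbox}; here I may need \eqref{bsbs} (beam splitter self-inverse) or the PROP axioms to simplify.

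Next comes the inductive step, which is essentially immediate given the lemmas. If $D = D_2\circ D_1$, then by the induction hypothesis there are normal forms $N_1, N_2$ with $\textup{PBS}\vdash D_i = N_i$; since $\textup{PBS}\vdash\cdot=\cdot$ is a congruence, $\textup{PBS}\vdash D = N_2\circ N_1$, and then Lemma \ref{NFcomposition} gives a normal form $N'$ with $\textup{PBS}\vdash N_2\circ N_1 = N'$, hence $\textup{PBS}\vdash D = N'$ by transitivity. If $D = D_1\otimes D_2$, the same argument using congruence plus the lemma stating that a tensor of normal forms is in normal form (its proof being "by definition of the normal forms") does the job. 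If $D = Tr(D_1)$ with $D_1:n+1\to n+1$, then by the induction hypothesis $\textup{PBS}\vdash D_1 = N_1$ for some normal form $N_1:n+1\to n+1$, so by congruence $\textup{PBS}\vdash D = Tr(N_1)$, and Lemma \ref{NFtrace} yields a normal form $N':n\to n$ with $\textup{PBS}\vdash Tr(N_1) = N'$, completing the step.

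The only real work in this proposition is the base cases, and even there the subtlety is merely bookkeeping: one must check that each single-generator diagram genuinely matches the shape prescribed by the definition of normal form, which requires unwinding the $\tikzfig{typeAfiltreUVsansprime}$ and $\tikzfig{typeCfiltreUVsansprime}$ templates, the meaning of $\sigma_\ell$, and the block $E(U,V)$. I expect the main obstacle — to the extent there is one — to be making sure the type-$C$ template, which wraps an $(n{+}1)$-wire diagram in a trace and thus has a fixed outer structure, can actually be specialized to reproduce a bare $\tikzfig{swap-s}$ or $\tikzfig{beamsplitter-s}$ on $2$ wires after the PROP-level deformations and the $E$-block trivializations; this is where Equations \eqref{idbox}, \eqref{bsbs}, \eqref{loopemptysimple}, and possibly \eqref{duplicateloop} (to discard a dead branch created by the trace) come into play. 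All of this is routine once the normal-form definition is fully expanded, so I would relegate the detailed generator-by-generator verification to an appendix and keep the main-text proof at the level of the induction skeleton plus a sentence pointing to where the base cases are discharged.
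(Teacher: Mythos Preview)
Your plan is correct and is exactly the paper's approach: structural induction on $D$, with the three preceding lemmas dispatching the inductive cases for $\circ$, $\otimes$, and $Tr$, leaving only the generators to be put in normal form explicitly (which the paper, like you, defers to an appendix).

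Two small corrections in your base-case sketch. First, the normal form of $\tikzfig{gateU-s}$ is $E(U,U)$, not $E(U,I)$: the gate applies $U$ regardless of polarisation, so both branches of the $E$-block must carry $U$. The paper derives $\tikzfig{gateU-s}=E(U,U)$ from \eqref{loopemptysimple}, \eqref{bsbs}, and \eqref{bsuu} (introduce an isolated loop, turn it into two beam splitters, then push the $U$'s through). Second, $\tikzfig{swap-s}$ is a type-A normal form, not type-B: the swap is realised purely by the $\sigma_\ell$ part of the template, and type B always contains a beam splitter. Neither error affects the soundness of your overall argument.
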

\begin{proof}\vspace{-0.16cm}
Combining the previous three lemmas, it remains to prove that any generator of the language can be put in normal form. We do so in appendix, Section \ref{proofNFGenerators}.
\end{proof}

\begin{remark}
By unfolding the proof of \cref{existenceofthenormalform}, one can obtain a deterministic procedure to transform any diagram into its normal form. Its complexity, defined as the number of transformations by one of Equations \eqref{idbox} to \eqref{bsnbsh}, is $\mathcal O\bigl(tm^2\bigr)$, where $m$ is the number of generators (\tikzfig{beamsplitter-xs},   \tikzfig{neg-xs}, and  \tikzfig{gateU-xs}), and $t$ the number of traces in the diagram. Notice that this procedure has probably not the best possible complexity.
\end{remark}

\subsection{Completeness}

The main application of the normal forms is the proof of completeness: 

\begin{theorem}[Completeness]
For any $D,D':n\to n$, if $\interp D = \interp{D'}$ then $\textup{PBS}\vdash D = D'$. 
\end{theorem}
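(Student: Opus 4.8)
The plan is to reduce completeness to a statement about diagrams in normal form, using \cref{existenceofthenormalform}: given $D, D' : n \to n$ with $\interp D = \interp{D'}$, first rewrite both into normal forms $N$ and $N'$ with $\textup{PBS}\vdash D = N$ and $\textup{PBS}\vdash D' = N'$. By soundness (\cref{soundnessoftheequations}), $\interp N = \interp D = \interp{D'} = \interp{N'}$. So it suffices to prove that two normal forms with the same denotational semantics are provably equal: $\interp N = \interp{N'} \implies \textup{PBS}\vdash N = N'$.

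\textbf{Extracting data from a normal form.} The next step is to read off, from the semantics of a normal form, enough information to pin it down syntactically. Using \cref{decompNFsupperm}, write $N = P\circ E$ where $E = \textup{E}(U_0,V_0)\otimes\cdots\otimes \textup{E}(U_{n-1},V_{n-1})$ and $P$ is permutation-like (built from \tikzfig{filcourt-s}, \tikzfig{neg-xs}, \tikzfig{beamsplitter-xs}, \tikzfig{swap-xs}). By the adequacy theorem (\cref{adequacy}) together with \cref{welldefinednessofdenotationalsemantics}, $\interp N = \ket{c,p,x}\mapsto\ket{\tau_N(c,p)}\otimes [N]_{c,p}\ket x$, so $\interp N$ determines the permutation $\tau_N$ on $\hv\times[n]$ and all matrices $[N]_{c,p}$. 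The key computation is that for a normal form of the above shape, a photon entering wire $p$ traverses exactly the gate block $\textup{E}(U_p,V_p)$ on that wire and then is routed by $P$; one checks that $\textup{E}(U,V)$ sends $(\rightarrow,p)$ through with matrix $U$ and $(\uparrow,p)$ through (staying on the same wire, since the internal beam splitters cancel on positions by construction) with matrix $V$ — more precisely, I would verify directly from the definition of $\textup{E}(U,V)$ and \cref{adequacy} that $[\textup{E}(U,V)]_{\rightarrow,p}$ and $[\textup{E}(U,V)]_{\uparrow,p}$ are $U$ and $V$ (up to the bookkeeping in the definition), and that $\tau_{\textup{E}(U,V)}$ fixes positions. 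Consequently $[N]_{\rightarrow,p} = [P]_{\rightarrow,p}\cdot U_p$ and $[N]_{\uparrow,p} = [P]_{\uparrow,p}\cdot V_p$, while $\tau_N = \tau_P$ and the matrices $[P]_{c,p}$ are determined by $\tau_P$ alone (they are identities, since $P$ contains no gates). Hence $\interp N$ determines $\tau_N$ and the full list $(U_0,V_0,\ldots,U_{n-1},V_{n-1})$.

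\textbf{Normalising the permutation part.} The remaining obstacle is syntactic: $P$ and $P'$ may be different diagrams realising the same permutation $\tau_N = \tau_{N'}$, so I need a sub-completeness result saying that any two gate-free diagrams of the form occurring in a normal form that realise the same permutation of $\hv\times[n]$ are PBS-provably equal. The definition of normal form already constrains $P$ to a specific shape (the $\sigma_\ell$ pattern plus the optional \tikzfig{negpotentiel-s} boxes), so the real content is: (i) show every such $P$ with permutation $\tau$ can be brought to a canonical representative depending only on $\tau$, using Equations \eqref{bsbs}, \eqref{bsnnnn}, \eqref{bsnbsh}, \eqref{bsbsbs} (the purely structural equations among beam splitters, negations and swaps) plus the PROP axioms of \cref{fig:TracedProp}; and (ii) conclude that $\tau_N = \tau_{N'}$ forces the permutation parts to be provably equal. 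Combined with the matching gate data $(U_p,V_p) = (U'_p,V'_p)$ for all $p$, which gives $E = E'$ syntactically, this yields $\textup{PBS}\vdash N = P\circ E = P'\circ E' = N'$.

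\textbf{Main difficulty.} I expect step (i) — the canonicalisation of the permutation part $P$ — to be the crux, since it is essentially a completeness statement for the sub-calculus of beam splitters, negations and swaps acting on $\hv\times[n]$, and it must be shown that the five structural axioms suffice to reach a canonical form (for instance, an explicit sorting-network-style decomposition of $\tau$). The gate bookkeeping (propagating gates across beam splitters and negations via \eqref{negu} and \eqref{bsuu}, merging via \eqref{fusionuvsimple}, \eqref{idbox}) is routine, as is the handling of $n=0$ via \eqref{loopemptysimple} and \eqref{duplicateloop}; these have in effect already been dealt with inside \cref{NFcomposition}, \cref{NFtrace}, and \cref{existenceofthenormalform}, so the final argument mainly has to assemble them and settle the permutation normalisation.
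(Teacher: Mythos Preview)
Your reduction to normal forms via \cref{existenceofthenormalform} and soundness is exactly how the paper proceeds, and the idea of reading off the gate data $(U_p,V_p)$ and the permutation $\tau_N$ from $\interp N$ via adequacy is correct. However, you then set yourself an unnecessary task.

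The paper does \emph{not} prove $\textup{PBS}\vdash N=N'$ by showing a separate sub-completeness result for the permutation part $P$. Instead, it shows that two normal forms with the same semantics are \emph{syntactically equal} as diagrams (modulo the structural congruence of the traced PROP), so that $N=N'$ outright. The point you are missing is that the inductive definition of normal forms is rigid enough to be canonical: by looking at the path semantics for a particle entering on the \emph{last} wire $p=n-1$, one can read off whether the outermost layer is of type~A or type~B, the index $\ell$ of the $\sigma_\ell$ (from the output position), the value of each \tikzfig{negpotentiel-s} (from the output polarisation), and the matrices $U,V$ in the outermost $\textup{E}(U,V)$. Peeling off this layer and recursing on the inner normal form of type $n{-}1\to n{-}1$ (whose semantics is determined by the remaining path data) gives uniqueness by induction on $n$.

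Your decomposition $N=P\circ E$ from \cref{decompNFsupperm} throws away precisely this layered structure, which is why you are left needing a canonicalisation of $P$ under the structural axioms. That sub-completeness result is true (it follows from the theorem you are trying to prove), but establishing it directly is not obviously easier than the full result, and you do not actually carry it out. Exploiting the inductive shape of the normal form instead makes your ``main difficulty'' disappear.
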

\begin{proof}\vspace{-0.16cm}
There exist $N,N'$ in normal form such that $\textup{PBS}\vdash D= N$ and $\textup{PBS}\vdash D'= N'$. Moreover, by soundness, $\interp N = \interp{D} = \interp {D'} = \interp{N'}$. Finally,   one can show  that $\interp N= \interp {N'}$ implies that $N=N'$. In particular, one can show inductively that the normal form is entirely determined by its semantics by considering the path semantics for a particle located on the last input wire.
\end{proof}

\subsection{Minimality of the set of axioms}

In the following we show that each of the ten equations of Figure \ref{axioms} is necessary for the completeness of the PBS-calculus: 

\begin{theorem}[Minimality]\label{minimality}
None of Equations \eqref{idbox} to \eqref{bsnbsh} is a consequence of the others.
\end{theorem}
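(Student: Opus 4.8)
The standard technique for minimality of an equational theory is to exhibit, for each axiom, an alternative interpretation (a model) of the diagrams that validates the other nine axioms but falsifies the one under consideration. Concretely, for each $i\in\{\eqref{idbox},\ldots,\eqref{bsnbsh}\}$ I would construct a monoid-like target structure $\mathcal{M}_i$ together with a map $\interp{\cdot}_i$ from PBS-diagrams to $\mathcal{M}_i$ that is compatible with $\circ$, $\otimes$ and $Tr$, such that $\interp{\cdot}_i$ sends both sides of every axiom except $i$ to the same element, while the two sides of axiom $i$ get distinct values. Since $\textup{PBS}\vdash D_1=D_2$ means $D_1$ can be rewritten to $D_2$ using \emph{only} the axioms (and the structural congruence, which every reasonable model will respect), the existence of such an $\mathcal{M}_i$ shows axiom $i$ cannot be a consequence of the others.

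\textbf{Key steps, in order.} First I would fix a convenient general shape for the models: following \cref{welldefinednessofdenotationalsemantics}, the canonical semantics factors each diagram into a permutation $\tau_D$ of $\hv\times[n]$ together with a family of matrices $[D]_{c,p}$; most alternative models will have the same ``permutation part'' but replace the matrix monoid $\mathbb{C}^{q\times q}$ (or the way matrices are accumulated, or the booleans tracking polarisation) by something coarser or twisted. So the bulk of the work is, for each axiom, a careful choice of what to track. Typical choices: (i) to break \eqref{idbox} or \eqref{fusionuvsimple}, replace the matrix monoid by a free monoid on the gate symbols (so that $I$ is not a unit, or so that products are not re-associated/simplified) or by a structure where the relevant monoid law fails; (ii) to break \eqref{negu} or \eqref{bsuu}, use a model where gates carry a ``side'' label that the flip (resp. beam splitter) does \emph{not} commute past, e.g. tag each matrix with the polarisation at the moment it was applied; (iii) to break \eqref{bsbs}, \eqref{bsnnnn}, \eqref{bsnbsh}, \eqref{bsbsbs} — the purely ``routing'' axioms — use a model that keeps extra combinatorial data about how the particle was routed (a count of beam-splitter crossings, a word recording the sequence of reflections/transmissions, etc.) designed so that exactly one of these identities fails; (iv) to break \eqref{duplicateloop} or \eqref{loopemptysimple}, use a model sensitive to ``dead'' subdiagrams, e.g. track the multiset of all gates appearing in the diagram, or the set of positions, rather than only the gates actually traversed. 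For each $i$, after defining $\mathcal{M}_i$ and $\interp{\cdot}_i$, I would (a) check well-definedness and compatibility with $\circ,\otimes,Tr$ and the structural congruence, (b) verify the nine surviving axioms hold — this is a short case check per axiom, mostly routine, (c) evaluate the two sides of axiom $i$ and observe they differ, choosing concrete small parameters ($q$, and specific $U,V$ when the axiom involves gates) if needed.

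\textbf{Main obstacle.} The hard part is not any single model but the sheer bookkeeping: ten models, each of which must be verified against nine axioms, so roughly ninety small soundness checks, plus compatibility with trace and structural congruence for every model. The genuinely delicate cases are the routing axioms \eqref{bsbs}, \eqref{bsnnnn}, \eqref{bsnbsh}, \eqref{bsbsbs}: because they are all ``about'' the interaction of beam splitters, swaps and flips, any invariant strong enough to distinguish the two sides of one of them tends to also distinguish the two sides of another, so the extra data tracked must be tuned quite precisely (for instance, counting beam-splitter uses \emph{weighted by which wire} they act on, or recording the routing as a word up to a carefully chosen congruence). I expect most effort to go into designing these few models so that they surgically isolate a single axiom. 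The gate-involving axioms \eqref{idbox}, \eqref{negu}, \eqref{bsuu}, \eqref{fusionuvsimple}, \eqref{duplicateloop} are comparatively easy because one has the free choice of $q$ and of the witnessing matrices; \eqref{loopemptysimple} is the easiest, handled by a model that records whether the diagram has any non-identity content at all, independent of reachability.
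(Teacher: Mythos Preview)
Your overall strategy --- alternative models/invariants that validate all axioms but one --- is exactly what the paper does for most of the ten cases, and your instinct about which axioms are easy versus hard is largely right. Two points of divergence are worth flagging.

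First, for several axioms the paper uses much cheaper syntactic invariants than full semantic models: \eqref{duplicateloop} is isolated by the parity of the total number of beam splitters and negations (it is the only axiom that changes this parity); \eqref{bsnbsh} by the parity of negations alone; \eqref{bsbs} by whether any beam splitter occurs at all; and \eqref{loopemptysimple} simply by the observation that it is the only axiom equating a nonempty diagram with the empty one. These one-line arguments avoid building a model and checking nine axioms each time, so your bookkeeping estimate of ``roughly ninety small soundness checks'' is pessimistic.

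Second --- and this is the real gap --- your plan for \eqref{bsbsbs} is too optimistic. The paper does \emph{not} find a separating model for this case; instead it gives a substantially longer argument of a different nature. One first eliminates all gates (since \eqref{bsbsbs} is gate-free, every matrix may be instantiated by $I$ without loss), then drops the inverse-law and yanking axioms and works in the weaker structure of a PROTWEB, treating these two axioms as ordinary equations alongside the remaining ones. Finally one designs a terminating confluent rewriting system on gate-free diagrams in this weaker setting and checks that the two sides of \eqref{bsbsbs} are distinct normal forms. Your suggestions --- counting beam-splitter uses ``weighted by which wire they act on'', or recording routing words up to a chosen congruence --- run into the obstacle that wires have no stable identity under the structural congruence of a traced PROP, and any invariant coarse enough to respect \eqref{bsbs}, \eqref{bsnnnn} and \eqref{bsnbsh} simultaneously tends also to identify the two sides of \eqref{bsbsbs}. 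So \eqref{bsbsbs} is not merely the hardest of the delicate cases but requires a qualitatively different technique from the model-building approach you outline; this is where essentially all the work in the paper's proof lies.
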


\begin{proof}\vspace{-0.16cm}The proof is given in appendix, Section \ref{proofmin}. \end{proof} 

Notice that all equations involving matrices, except Equation \eqref{idbox}, are schemes of equations i.e. one equation for each possible matrix (or matrices). In   Theorem \ref{minimality}, we show that each of these equations, for most of the matrices, cannot be derived from the other axioms. More precisely, Equation \eqref{duplicateloop}  (resp. \eqref{loopemptysimple}) is not a consequence of the nine others for any $U$ (resp. any $U,V$); Equation  \eqref{negu}  (resp. \eqref{fusionuvsimple} ) is not a consequence of the others for any $U\neq I$ (resp. any $U,V\neq I$). Finally, if $\det(U)\neq 1$, then Equation \eqref{bsuu} is not a consequence of the others. We conjecture that the condition $\det(U)\neq 1$ can be relaxed to $U\neq I$.

\subsection{Universality}

\vspace{-0.1cm}

A PBS-diagram represents a superposition of linear maps together with a permutation of polarisations and positions. Indeed, \cref{welldefinednessofdenotationalsemantics} shows that for any diagram $D:n\to n$, $\interp D \in \S_n$, where $\S_n$ is the monoid of the linear maps $f\colon \mathcal H_n \to \mathcal H_n$ such that $f\ket{c,p,x} = \ket{\tau (c,p)}\otimes U_{c,p}\ket x$ for some permutation $\tau$ on $\hv \times [n]$ and matrices $U_{c,p}\in \mathbb C^{q\times q}$.  
We show in the following that the PBS-calculus is universal, in the sense that  any linear map in $\S_n$ can be represented by a PBS-diagram: 
\begin{theorem}
The \textup{PBS}-calculus is universal: for any $f\in \S_n$, $\exists D : n\to n$, $\interp D = f$.  
\end{theorem}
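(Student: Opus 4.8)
The plan is to prove universality constructively by showing that every element of $\S_n$ can be realised by a diagram in normal form, essentially inverting the correspondence established in \cref{adequacy} and \cref{existenceofthenormalform}. First I would observe, using \cref{decompNFsupperm}, that a normal-form diagram factors as $P\circ E$ where $E = E(U_0,V_0)\otimes\cdots\otimes E(U_{n-1},V_{n-1})$ realises a ``diagonal'' superposition of linear maps (the gate $U_i$ acting when the particle is on wire $i$ with one polarisation, $V_i$ when it has the other), and $P$ is built purely from wires, swaps, polarisation flips and beam splitters, hence realises an arbitrary permutation of $\hv\times[n]$ while applying the identity to the data register. So it suffices to show two things: (i) any permutation $\tau$ of $\hv\times[n]$ is realised by some such $P$, and (ii) composing with the diagonal part lets us hit an arbitrary $f\in\S_n$.

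For (ii), given $f\in\S_n$ with $f\ket{c,p,x} = \ket{\tau(c,p)}\otimes U_{c,p}\ket x$, I would first build, by (i), a permutation diagram $P_1$ realising a permutation $\rho$ that groups the two preimages of each wire: concretely, pick $\rho$ so that $\rho$ sends $(\rightarrow,i)$ and $(\uparrow,i)$ to the two states that $E(U_i,V_i)$ needs as inputs, for the choice $U_i := U_{\rho^{-1}(\rightarrow,i)}$, $V_i := U_{\rho^{-1}(\uparrow,i)}$ (up to the convention fixed by the definition of $\textup E$). Then $E\circ P_1$ applies the correct matrix $U_{c,p}$ to the data register for every input $(c,p)$, and leaves the particle in some position $\rho'(c,p)$ depending only on $c,p$; here $\rho'$ is a permutation of $\hv\times[n]$ determined by $\rho$ and by the fixed permutation performed by each $\textup E$-block. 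Finally compose with a second permutation diagram $P_2$ from (i) realising $\tau\circ(\rho')^{-1}$, so that $\interp{P_2\circ E\circ P_1} = f$. (One must double-check the data register is untouched by $P_1,P_2$, which holds because they are built only from the position/polarisation generators.)

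The heart of the argument is therefore (i): every permutation of $\hv\times[n]$ arises as $\tau_P$ for some $P$ built from $\tikzfig{filcourt-xs}$, $\tikzfig{neg-xs}$, $\tikzfig{beamsplitter-xs}$, $\tikzfig{swap-xs}$ and $\circ,\otimes$. I would prove this by exhibiting a generating set of $\mathfrak S(\hv\times[n])$ and realising each generator. The swap $\tikzfig{swap-xs}$ on wires $i,i+1$ realises the transposition exchanging $(c,i)\leftrightarrow(c,i+1)$ simultaneously for both $c$; the polarisation flip $\tikzfig{neg-xs}$ on wire $i$ realises $(\rightarrow,i)\leftrightarrow(\uparrow,i)$; and a beam splitter on wires $i,i+1$ realises the double transposition $(\uparrow,i)\leftrightarrow(\uparrow,i+1)$ fixing the horizontal states. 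Conjugating the beam splitter by flips on one or both wires (axiom-free, just composition) yields the other ``partial'' double transpositions, e.g. $(\rightarrow,i)\leftrightarrow(\uparrow,i+1)$. From the flips and the full swaps one generates all permutations that act identically on the two polarisation-sheets, i.e. the ``diagonal'' copy of $\mathfrak S([n])$ together with the $2^n$ flips; adding one beam-splitter-type generator, which mixes the sheets over a single adjacent pair, then generates all of $\mathfrak S(\hv\times[n])$ — this is a standard fact about wreath products / the fact that $\mathfrak S_m \wr \mathbb Z/2$ together with one ``crossing'' transposition generates $\mathfrak S_{2m}$. The main obstacle I expect is making this last group-theoretic step precise and checking that the ``crossing'' transposition I can actually build (a double transposition, since beam splitters and flips always act as even permutations on a single sheet pair, being reversible position-only maps) is genuinely enough: one needs to verify that double transpositions of the mixing type, together with the diagonal transpositions, generate the full symmetric group rather than just an index-two subgroup. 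Tracking parities carefully — and noting that a single flip is an odd permutation of $\hv\times[n]$ — resolves this, but it is the place where the bookkeeping is most delicate, so I would isolate it as a self-contained combinatorial lemma and prove it by induction on $n$, reducing routing on $n$ wires to routing on $n-1$ wires by first moving the two desired states onto the last wire.
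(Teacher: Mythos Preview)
Your approach is correct but differs from the paper's. The paper proceeds by induction on $n$, mirroring the recursive definition of normal forms: from the images of $(\rightarrow,n-1)$ and $(\uparrow,n-1)$ under $f$ one reads off which normal-form template (type A or type B) to use for the outermost layer, together with the matrices $U,V$, the wire indices $j$ (and $k$), and the optional negations; what remains is an element of $\S_{n-1}$, handled by induction. You instead decompose $f$ globally into a diagonal action followed by a pure permutation and realise each piece separately, which is equally valid and more explicit. Two simplifications are available. First, the pre-permutation $P_1$ is unnecessary: $E(U,V)$ fixes $(c,p)$ and applies $U$ on horizontal, $V$ on vertical polarisation, so $P\circ\bigl(\bigotimes_{i}E(U_{\rightarrow,i},U_{\uparrow,i})\bigr)$ already realises $f$ once $P$ realises $\tau$. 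Second, your group-theoretic worry about parities dissolves: under the identification $(\rightarrow,i)\mapsto 2i$, $(\uparrow,i)\mapsto 2i{+}1$, the flip on wire $i$ is the transposition $(2i\ \ 2i{+}1)$, and the beam splitter on wires $(i,i{+}1)$ conjugated by a flip on wire $i{+}1$ is the transposition $(2i{+}1\ \ 2i{+}2)$; these are precisely the adjacent transpositions of $\{0,\dots,2n{-}1\}$ and generate $\mathfrak S_{2n}$ directly, with no wreath-product or parity bookkeeping needed.
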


\begin{proof}\vspace{-0.16cm}
The proof relies on the normal forms: given a linear map $f\in \S_n$ one can inductively construct a diagram in normal form, by considering the image of $f$ when the particle is located on the last position ($p=n-1$).
\end{proof}

Notice that $\S_n$ is strictly included in the set of linear maps from $\mathcal H_n$ to $\mathcal H_n$. Thus while being universal for $\S_n$ the PBS-diagrams are not expressive enough to represent a (non-polarising) beam splitter for instance.

\vspace{-0.2cm}

\section{Removing the trace -- Loop unrolling}\label{trace-free}

\vspace{-0.2cm}

We consider in this section an application of the PBS-calculus. The semantics of the language points out that each trace, or feedback loop, is {used} at most twice. As a consequence, a natural question is to decide whether all loops can be unrolled, in order to transform any PBS-diagram into a trace-free PBS-diagram. Such a transformation is possible when all matrices are invertible:

\begin{proposition}\label{traceinutilesiunitaire}
Let $D:n\to n$ with $n\geq 2$ be a \textup{PBS}-diagram such that all matrices appearing in some gate \tikzfig{gateU-xs} in $D$ are invertible. Then there exists a trace-free \textup{PBS}-diagram $D'$ such that $\textup{PBS}\vdash D=D'$.
\end{proposition}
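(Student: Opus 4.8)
The plan is to reduce the problem to eliminating traces one at a time, and to show that each innermost trace can be removed using the invertibility hypothesis. First I would use \cref{existenceofthenormalform} to replace $D$ by an equivalent diagram in normal form. By \cref{decompNFsupperm}, a diagram in normal form can be written as $P\circ E$ with $E=E(U_0,V_0)\otimes\cdots\otimes E(U_{n-1},V_{n-1})$ and $P$ built out of wires, negations, beam splitters and swaps only (no gates, no traces at the top level, apart from those hidden inside the $E(U_i,V_i)$ blocks and inside the construction of $P$). Actually, inspecting the definition of normal form more carefully, the normal form \emph{itself} still contains a trace in its second ($\sigma_\ell$-type) clause, so the first genuine task is to understand exactly which traces survive in a normal-form diagram and to peel them off from the outside in.

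The key step is the following local move, call it trace elimination: if $N:n+1\to n+1$ is trace-free and all its gate matrices are invertible, then $Tr(N)$ is provably equal to a trace-free diagram on $n\ge 2$ wires. To prove this I would exploit \cref{reversibility} and the structure of $\S_{n+1}$: the semantics of $N$ is a permutation $\tau_N$ of $\hv\times[n{+}1]$ together with invertible matrices $[N]_{c,p}$ on each path (invertibility of the path matrices follows because each is a product of the gate matrices, all assumed invertible). When we trace out the last wire, the photon enters the loop at most twice by the analysis preceding \cref{determinismterm}, so $Tr(N)$ only concatenates at most two of the path matrices of $N$; using the axioms \eqref{fusionuvsimple}, \eqref{negu}, \eqref{bsuu}, \eqref{bsbs} and especially \eqref{duplicateloop} one rewrites the looped diagram so that the feedback path is made inert and then discarded. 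The invertibility is exactly what lets us ``undo'' the matrix accumulated on the first pass through the loop so that Equation \eqref{duplicateloop} (which turns a dead branch plus a loop into a straight-through wire) becomes applicable; without it the accumulated matrix on the dead branch cannot be cancelled. The hypothesis $n\ge 2$ guarantees there is room to route the two passes on genuinely distinct wires after unrolling.

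With trace elimination in hand, the proof is an induction on the number $t$ of traces in the (normal form of the) diagram. If $t=0$ we are done. Otherwise pick an innermost trace $Tr(N)$ in $D$, i.e.\ one whose argument $N$ is trace-free; apply soundness and the trace-elimination move to get $\textup{PBS}\vdash Tr(N)=N''$ with $N''$ trace-free on $\ge 2$ wires, substitute inside $D$ using that $\textup{PBS}\vdash\cdot=\cdot$ is a congruence, and recurse on the resulting diagram which has $t-1$ traces. One subtlety: an innermost trace might sit on a sub-diagram of type $1\to 1$ after the outer wires have been tensored away, i.e.\ the ``$n\ge 2$'' hypothesis could fail locally; to handle this I would first pad the whole diagram, noting that $D=D\otimes\tikzfig{filcourt-s}$ up to adding an isolated wire (removable at the end via Equation \eqref{loopemptysimple} in reverse, or rather kept since the statement only asks $n\ge 2$ globally), and more importantly use the axioms of PROP to slide wires so that every trace is taken over a diagram acting on at least two positions.

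The main obstacle I expect is the trace-elimination lemma itself, and specifically getting the bookkeeping of the ``at most two passes'' right at the level of \emph{diagrams} rather than semantics: Proposition statements before this give us the path-semantic fact that the loop is traversed $k\le 2$ times, but to rewrite the diagram syntactically we need to case-split on $k\in\{0,1,2\}$ and, in the $k=2$ case, produce an explicit unrolled diagram with two copies of the relevant gates wired through fresh positions and then prove it equal to the original using the axioms — this is where invertibility is consumed and where Equations \eqref{bsnnnn}, \eqref{bsnbsh}, \eqref{bsbsbs} are likely needed to massage the beam-splitter/negation pattern of $P$ into a form where the loop branch becomes dead code. The rest of the argument is routine once this local lemma is established.
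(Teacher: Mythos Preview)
Your proposal has a genuine gap in two places.

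First, you misread the structure of normal forms. By \cref{decompNFsupperm}, the permutation part $P$ is built using only $\tikzfig{filcourt-s}$, $\tikzfig{neg-xs}$, $\tikzfig{beamsplitter-xs}$, $\tikzfig{swap-xs}$, $\circ$ and $\otimes$; in particular $P$ is \emph{already trace-free}, and $\sigma_\ell$ is just a wire permutation with no trace in it. So after normalisation the \emph{only} traces in $D$ are the ones inside the blocks $E(U_i,V_i)$. There is no need for a general ``peel off innermost trace'' induction; the whole problem collapses to: show that $E(U,V)\otimes\tikzfig{filcourt-s}$ (which exists since $n\ge 2$) is provably equal to a trace-free diagram.

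Second, and more seriously, your proposed mechanism for that last step cannot work. You suggest that invertibility lets you ``undo the matrix accumulated on the first pass'' so that \eqref{duplicateloop} applies, without introducing any new matrices. But \cref{proddetscarresU} and the example following it show precisely that a trace-free equivalent of $E(U,I)\otimes\tikzfig{filcourt-s}$ \emph{cannot} use only the gates already present: in a trace-free diagram every gate contributes the square of its determinant to $|D|$, whereas $|E(U,I)\otimes\tikzfig{filcourt-s}|=\det U$, so unless $\det U$ is an odd root of unity you must introduce gates with new labels. The paper's proof does exactly this: it uses invertibility to guarantee that $U$ and $V$ admit square roots (via polar decomposition when needed), and then writes down an explicit trace-free $2\to 2$ diagram using $\sqrt{U}$ and $\sqrt{V}$ whose semantics equals that of $E(U,V)\otimes\tikzfig{filcourt-s}$. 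Invertibility is consumed as ``existence of square roots'', not as ``ability to cancel a pass through the loop''.
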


\begin{proof}\vspace{-0.16cm}The proof is given in appendix, Section \ref{preuvetraceinutilesiunitaire}. \end{proof}

Notice that Proposition \ref{traceinutilesiunitaire} is not true for PBS-diagrams with a single input/output. Indeed  a trace-free diagram of type $1\to 1$ is made of  generators acting on $1$ wire only, so in particular it has no polarising beam splitter and as a consequence cannot have a behaviour which depends on the polarisation.  For instance, the diagram $E(U,V)$ used in the normal forms (see Remark \ref{decompNFsupperm}) cannot be transformed into a trace-free diagram unless $U=V$.

On the other hand, PBS-diagrams involving at least one non-invertible matrix are not necessarily equivalent to a trace-free one. Indeed, we have the following property:

\begin{lemma}\label{zerooudeuxnoninversibles}
For any trace-free PBS-diagram $D$, either all $[D]_{c,p}$ are invertible or at least two of them are not.
\end{lemma}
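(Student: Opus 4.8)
The plan is to prove the statement by structural induction on the trace-free PBS-diagram $D$, tracking how the set $\{[D]_{c,p} \mid (c,p)\in\hv\times[n]\}$ evolves under the generators and under sequential and parallel composition. The key invariant I want to maintain is stronger than the statement: either all $[D]_{c,p}$ are invertible, or the non-invertible ones come in at least a pair. Since $D$ is trace-free, it is built from $\tikzfig{filcourt-s}$, $\tikzfig{neg-s}$, $\tikzfig{swap-s}$, $\tikzfig{beamsplitter-s}$, gates $\tikzfig{gateU-s}$, $\circ$, and $\otimes$ only, so the induction has finitely many cases.

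First I would handle the base cases. For $\tikzfig{filcourt-s}$, $\tikzfig{neg-s}$, $\tikzfig{swap-s}$, and $\tikzfig{beamsplitter-s}$ every path contributes the identity matrix, which is invertible, so the first alternative holds. For a gate $\tikzfig{gateU-s}$ of type $1\to1$ there is exactly one polarisation-position pair per polarisation, namely $[{\tikzfig{gateU-s}}]_{\rightarrow,0}=[{\tikzfig{gateU-s}}]_{\uparrow,0}=U$: there are \emph{two} paths (one per polarisation) and both carry $U$. Hence if $U$ is not invertible, both of the two paths are non-invertible, so the second alternative holds; this is the crucial observation that a single-wire gate is always visited along two distinct configurations. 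For the parallel composition $D_1\otimes D_2$, the multiset of path matrices of $D_1\otimes D_2$ is exactly the disjoint union of that of $D_1$ and that of $D_2$ (by rules $(\otimes1)$ and $(\otimes2)$), so if both $D_1$ and $D_2$ satisfy the invariant, so does $D_1\otimes D_2$: either all are invertible, or one of the two summands already contributes at least two non-invertible matrices.

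The main work is the case of sequential composition $D_2\circ D_1$. By Propositions~\ref{determinismterm} and \ref{reversibility} each $D_i$ acts as a permutation $\tau_{D_i}$ on $\hv\times[n]$, and $[D_2\circ D_1]_{c,p} = [D_2]_{\tau_{D_1}(c,p)}\cdot [D_1]_{c,p}$. A product of two matrices is non-invertible iff at least one factor is. So the non-invertible paths of $D_2\circ D_1$ are exactly those $(c,p)$ for which $[D_1]_{c,p}$ is non-invertible, together with those $(c,p)$ for which $[D_2]_{\tau_{D_1}(c,p)}$ is non-invertible; since $\tau_{D_1}$ is a bijection, the latter set has the same cardinality as the set of non-invertible paths of $D_2$. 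Therefore the number of non-invertible paths of $D_2\circ D_1$ is at least the maximum of the numbers for $D_1$ and for $D_2$: if either $D_1$ or $D_2$ fails to have all paths invertible, then by induction it has at least two non-invertible ones and so does $D_2\circ D_1$; otherwise all paths of $D_2\circ D_1$ are invertible. This closes the induction.

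The step I expect to be the main obstacle is making sure the gate base case is airtight, i.e.\ that a gate of type $1\to1$ genuinely contributes \emph{two} path matrices rather than one — this is where trace-freeness and $n\ge 1$ implicitly matter, and it is the entire source of the ``at least two'' in the conclusion; for $\otimes$ and $\circ$ the arithmetic is then routine. One should also double-check that the definition of $[D]_{c,p}$ ranges over all of $\hv\times[n]$ so that the pair $(\rightarrow,0),(\uparrow,0)$ really are two distinct entries for a single-wire gate. Everything else is bookkeeping with the observation that a matrix product is invertible iff both factors are, combined with the fact that each generator acts bijectively on $\hv\times[n]$.
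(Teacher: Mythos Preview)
Your proof is correct and follows essentially the same approach as the paper: a structural induction on the trace-free diagram, with the gate case supplying the ``pair'' (since both polarisations $(\rightarrow,0)$ and $(\uparrow,0)$ yield the same matrix $U$), the tensor case being a disjoint union, and the composition case handled via $[D_2\circ D_1]_{c,p}=[D_2]_{\tau_{D_1}(c,p)}[D_1]_{c,p}$ together with the bijectivity of $\tau_{D_1}$. The only omission is the empty diagram, which is harmless since there the claim is vacuous.
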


\begin{proof}\vspace{-0.16cm}The proof is given in appendix, Section \ref{preuvezerooudeuxnoninversibles}. \end{proof}

This prevents the following diagram from being equivalent to a trace-free one:

\begin{example}
If $U$ is not invertible, then the  diagram $D_U:2\to 2 = \tikzfig{filtreUriensurfil}$ is not equivalent, according to the rules of the PBS-calculus, to any trace-free diagram.
Indeed, for any $(c,p)\neq(\rightarrow,1)$ we have $[D_U]_{c,p}=I_q$, which is invertible, whereas $[D_U]_{\rightarrow,1}=U$.
\end{example}

Another interesting property is that loop unrolling, when it is possible, requires the use of matrices that were not present in the original diagram.  
This is a consequence of the following lemma:

\vspace{-0.1cm}

\begin{lemma}\label{proddetscarresU}
Given any diagram $D:n\to n$, let us define $|D|:=\displaystyle\prod_{c\in\hv,p\in[n]}\det\left([D]_{c,p}\right)$. Then for any trace-free diagram $D$, we have $|D|=\displaystyle\prod_{G\text{ gate in }D}\det\left(U(G)\right)^2$ where $U(G)$ denotes the matrix with which $G$ is labelled.
\end{lemma}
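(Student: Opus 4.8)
The plan is to prove the statement by structural induction on the trace-free diagram $D$, following the inductive definition of PBS-diagrams but with the trace case removed (so only the generators $\tikzfig{filcourt-s}$, $\tikzfig{neg-s}$, $\tikzfig{swap-s}$, $\tikzfig{beamsplitter-s}$, $\tikzfig{gateU-s}$, $\tikzfig{diagrammevide-s}$, together with $\circ$ and $\otimes$ are available). The quantity $|D|=\prod_{c\in\hv,p\in[n]}\det\left([D]_{c,p}\right)$ is a product over the $2n$ paths of the diagram, and by the adequacy theorem (\cref{adequacy}) and Propositions~\ref{determinismterm} and~\ref{reversibility} every trace-free $D:n\to n$ induces a permutation $\tau_D$ of $\hv\times[n]$ together with a matrix $[D]_{c,p}$ for each path. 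For the base cases: the empty diagram contributes an empty product $|D|=1$ and has no gates, so the right-hand side is also the empty product $1$; the wire $\tikzfig{filcourt-s}$, the negation $\tikzfig{neg-s}$ and the swap $\tikzfig{swap-s}$ have all $[D]_{c,p}=I_q$ and no gates, so both sides equal $1$; and for the gate $\tikzfig{gateU-s}$ we have $[D]_{\rightarrow,0}=[D]_{\uparrow,0}=U$, so $|D|=\det(U)^2$, which is exactly the right-hand side since $D$ has a single gate labelled $U$.

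For the inductive step with $\otimes$: if $D=D_1\otimes D_2$ with $D_i$ trace-free, then by the rules $(\otimes1)$ and $(\otimes2)$ the set of paths of $D$ is the disjoint union of the paths of $D_1$ and the paths of $D_2$ (the particle started in the $D_1$-block stays in it, and similarly for $D_2$), and the associated matrices are unchanged. Hence $|D|=|D_1|\cdot|D_2|$, and since the gates of $D$ are exactly the gates of $D_1$ together with those of $D_2$, the right-hand sides multiply as well, so the claim follows from the two induction hypotheses. The case of $\circ$ is the crux: if $D=D_2\circ D_1$, then by rule $(\circ)$ a path of $D$ starting at $(c,p)$ is the concatenation of the $D_1$-path from $(c,p)$ to $\tau_{D_1}(c,p)$ with the $D_2$-path starting at $\tau_{D_1}(c,p)$, so $[D]_{c,p}=[D_2]_{\tau_{D_1}(c,p)}\,[D_1]_{c,p}$. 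Taking determinants and the product over all $(c,p)\in\hv\times[n]$ gives $|D|=\prod_{c,p}\det\left([D_1]_{c,p}\right)\cdot\prod_{c,p}\det\left([D_2]_{\tau_{D_1}(c,p)}\right)$; since $\tau_{D_1}$ is a bijection of $\hv\times[n]$, reindexing the second product by $(c',p')=\tau_{D_1}(c,p)$ turns it into $\prod_{c',p'}\det\left([D_2]_{c',p'}\right)=|D_2|$. Hence $|D|=|D_1|\cdot|D_2|$, and again the gates of $D$ are the disjoint union of those of $D_1$ and $D_2$, so both sides multiply and the two induction hypotheses close the case.

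The only subtlety — and the point I expect to be the main thing to get right — is the bookkeeping of paths, namely that in the trace-free fragment each $[D]_{c,p}$ is obtained by multiplying together exactly the matrices encountered along one path, each gate label appearing once per visit, and that the multiset of (path, matrix-factor) incidences splits cleanly under $\circ$ and $\otimes$; the reindexing trick for $\circ$ (using that $\tau_{D_1}$ is a permutation, which is exactly Propositions~\ref{determinismterm} and~\ref{reversibility}) is what makes the product of determinants telescope correctly. Everything else is a routine induction, and it is worth noting that determinant is essential here: it is multiplicative and insensitive to the order of matrix multiplication along a path, which is why the formula takes the clean product form rather than depending on the order in which gates are traversed.
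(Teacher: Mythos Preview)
Your proof is correct and follows essentially the same route as the paper's own proof: structural induction on the trace-free diagram, with the composition case handled via $[D]_{c,p}=[D_2]_{\tau_{D_1}(c,p)}[D_1]_{c,p}$, multiplicativity of $\det$, and reindexing by the permutation $\tau_{D_1}$. The only cosmetic omission is that you list $\tikzfig{beamsplitter-s}$ among the generators but then forget to verify its base case explicitly; it is of course handled exactly like the swap (all $[D]_{c,p}=I_q$, no gates).
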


\begin{proof}\vspace{-0.16cm}
Intuitively, due to the invertibility of the PBS-diagrams (\cref{reversibility}), for each wire of a trace-free diagram $D$, there are exactly two initial configurations which are going through this particular wire. As a consequence each gate of $D$ contributes twice to $|D|$ (see  appendix, Section \ref{preuveproddetscarresU}).
\end{proof}

\begin{example}
Unless $\det(U)$ is a $k$th root of unity for some odd integer $k$, the following diagram $D_U$ does not have the same semantics as any trace-free diagram in which all gates are labelled by $U$: \tikzfig{filtreUriensurfil}.
Indeed, we have $|D_U|=\det(U)$, and by \cref{proddetscarresU}, if $D_U$ is equivalent through $PBS$ to a trace-free diagram $D_U'$ in which all gates are labelled by $U$, then we have $|D_U|=\det(U)=\det(U)^{2N}$, where $N$ is the number of gates in $D_U'$. By \cref{zerooudeuxnoninversibles}, we have $\det(U)\neq0$, so that $\det(U)^{2N-1}=1$, that is, $\det(U)$ is a $k$th root of unity with $k=2N-1$ odd (if $N=0$ then $\det(U)=1$ so the result is still true).
\end{example}

\section{Conclusion and Perspectives }

In this paper, we have introduced a rigorous framework to reason on quantum computations involving coherent control, which are sometimes informally represented by schemes involving polarising beam splitters and black boxes. The main result is the introduction of an equational theory which makes the PBS-calculus sound and complete. We have also proved that the axiomatisation is minimal in the sense that each axiom is necessary for the completeness. Moreover, we have demonstrated for instance that the PBS-calculus can be used for loop unrolling.  

So we have introduced the foundations of a formal framework, that we believe will be a useful tool to study the power and the limits of computations and protocols involving coherent control. We mention here three perspectives in the development of the PBS-calculus.

First, the expressivity of the language can be increased by adding,  for instance, a (not polarising) beam splitter  as a generator of the language, or by allowing more than one particle in the diagrams. Both are necessary for the representation of Boson sampling for instance. 

Another perspective is to allow the gates to be arbitrary quantum channels. Indeed recent results \cite{ebler2018enhanced,abbott2018communication} point out interesting and unexpected behaviours of  coherently controlled quantum channels. Our objective is to make the PBS-calculus a formal framework to explore and study such phenomena.

Finally, the calculus can be made more resource-sensitive, by allowing only the equations for which the number of occurrences of each gate (or black box) is preserved. For instance, we have seen examples in which  loop unrolling requires to introduce new gates that were not present in the initial diagram. Transforming a diagram into its normal form is another example that does not, in general, preserve the number of occurrences of each gate.

\bibliography{pbs-diagramsMFCSarxiv}

\appendix

\section{Categorical notions}\label{categoricalnotions}
\begin{figure}[hb]
$\changelargeurcentre{(D_3\circ D_2)\circ D_1 =  D_3\circ (D_2\circ D_1)}{(D_1\otimes D_1)\otimes D_1 =  D_1\otimes (D_2\otimes D)}
\quad\ |\quad\ I_k\circ D = D= D\circ I_k\quad|\quad  \tikzfig{diagrammevide-s}\otimes D = D = D\otimes \tikzfig{diagrammevide-s} \\
(D_1\otimes D_2)\otimes D_3 =  D_1\otimes (D_2\otimes D_3) \quad|\quad (D_3\circ D_1)\otimes(D_4\circ D_2) = (D_3\otimes D_4)\circ (D_1\otimes D_2) \\
\changelargeurcentre{\sigma_{1,k}\circ (I_1\otimes D) = (D\otimes I_1)\circ \sigma_{1,k}}{(D_1\otimes D))\otimes D_1 =  D_1\otimes (D_1\otimes D_3)}
\quad\quad\phantom{a}|\negphantom{a}
\changelargeurcentre{Tr(D_1\otimes D_2) = D_1\otimes Tr(D_2)}{(D_3\circ D_1)\otimes(D_4\circ D) = (D_3\otimes D_4)\circ (D_1\otimes D_2)} \\
\tikzfig{swap-xs}\circ\tikzfig{swap-xs}=I_2\quad|\quad Tr(D_2\circ(D_1\otimes\tikzfig{filcourt-s}))=Tr(D_2)\circ D_1\quad|\quad Tr((D_2\otimes\tikzfig{filcourt-s})\circ D_1)=D_2\circ Tr(D_1) \\
Tr_k((I_n\otimes D_2)\circ D_1)=Tr_k(D_1\circ(I_n\otimes D_2)) \text{ where $D_2:k\to k$}\qquad\quad|\qquad\quad Tr(\tikzfig{swap-xs})=\tikzfig{filcourt-s}$.
\caption{Structural congruence / Coherence conditions of Traced PROP,  see Definition \ref{deftracedprop} for details. $I_0 := \tikzfig{diagrammevide-s}$; $I_{k+1} := I_k\otimes  \tikzfig{filcourt-s}$. $\sigma_{1,0}:=\tikzfig{filcourt-s}$; $\sigma_{1,k+1}:=(I_k\otimes\tikzfig{swap-xs})\circ(\sigma_{1,k}\otimes\tikzfig{filcourt-s})$. $Tr_0(D):=D$; $Tr_{k+1}(D):=Tr(Tr_k(D))$.}
\label{fig:TracedProp}
\end{figure}

\begin{definition}\label{deftracedprop}
A traced PROP is a category $\mathcal C$ whose objects are the natural integers, equipped with
\begin{itemize}
\item a functor $\otimes\colon \mathcal C\times \mathcal C\to \mathcal C$ such that for any $n,m\geq 0$, $n\otimes m=n+m$
\item an arrow $\sigma\colon 2\to 2$
\item a family of functions $\noeqbreak Tr\colon \mathcal C(n+1,m+1)\to \mathcal C(n,m)$
\end{itemize}
satisfying the properties listed below.\\

Let $\sigma_{1,0}:=\id_1$; and for any $n\geq 0$, let $\sigma_{1,n+1}:=(\id_{n}\otimes\sigma)\circ(\sigma_{1,n}\otimes\id_{1})$.
Let $Tr_0(f):=f$; and for any $j\geq 0$, let $Tr_{j+1}:=Tr\circ Tr_j$.\\

That $\mathcal C$ is a category means that it satisfies the following properties:

\begin{itemize}
\item \emph{Neutrality of the identity:} for any $n,m$ and $f\colon n\to m$, \[\begin{array}{rcccl}f\circ \id_n&=&f&=&\id_m\circ f.\\\\
\tikzfig{filsf}&=&\tikzfig{f}&=&\tikzfig{ffils}\\&\end{array}\]
\item \emph{Associativity of composition:} for any $n,m,k,l\geq 0$, $f\colon n\to m$, $g\colon m\to k$ and $h\colon k\to l$, \[\begin{array}{rcl}(h\circ g)\circ f&=&h\circ(g\circ f).\\\\
\tikzfig{fboitegh}&=&\tikzfig{fgboiteh}\\&\end{array}\]
\end{itemize}

That $\otimes$ is a functor means that it satisfies the following properties:

\begin{itemize}
\item \emph{Preservation of source and target:} for any $n,m,k,l\geq 0$, $f\colon n\to m$ and $g\colon k\to l$, \[\begin{array}{c}f\otimes g\colon n+k\to m+l.\\\\
\tikzfig{fsurg}\\\end{array}\]
\item \emph{Preservation of the identity:} for any $n,m\geq 0$, \[\begin{array}{rcl}\id_n\otimes\id_m&=&\id_{n+m}.\\\\
\tikzfig{idmultiplesuridmultiple}&=&\tikzfig{idmultiples}\\&\end{array}\]
\item \emph{Preservation of composition:} for any $n,m,m',k,l,l'\geq 0$, $f_1\colon n\to m$, $f_2\colon m\to m'$, $g_1\colon k\to l$ and $g_2\colon l\to l'$, \[\begin{array}{rcl}(f_2\circ f_1)\otimes (g_2\circ g_1)&=&(f_2\otimes g_2)\circ(f_1\otimes g_1).\\\\
\tikzfig{f1f2boitesurg1g2boite}&=&\tikzfig{f1surg1boitef2surg2boite}\\&\end{array}\]
\end{itemize}

The additional required properties are:\\\\

\emph{Strict monoidal structure:}
\begin{itemize}
\item \emph{Naturality of the associator:} for any $n,n',m,m',k,k'$, $f\colon n\to n'$, $g\colon m\to m'$ and $h\colon k\to k'$, \[\begin{array}{rcl}(f\otimes g)\otimes h&=&f\otimes(g\otimes h).\\\\
\tikzfig{fsurgboitesurh}&=&\tikzfig{fsurboitegsurh}\\&\end{array}\]
\item \emph{Naturality of the left unitor:} for any $n,m\geq 0$ and $f\colon n\to m$, \[\begin{array}{rcl}\id_0\otimes f&=&f.\\\\
\tikzfig{diagrammevidesurf56}&=&\tikzfig{f56}\\&\end{array}\]
\item \emph{Naturality of the right unitor:} for any $n,m\geq 0$ and $f\colon n\to m$, \[\begin{array}{rcl}f\otimes\id_0&=&f.\\\\
\tikzfig{f56surdiagrammevide}&=&\tikzfig{f56}\\&\end{array}\]
\end{itemize}
\emph{Strict symmetric monoidal structure:}
\begin{itemize}
\item \emph{Naturality of the swap:} for any $n,m\geq 0$ and $f\colon n\to m$, \[\tikzset{tikzfig/.style={baseline=-0.25em,scale=0.4,every node/.style={scale=0.8}}}\begin{array}{rcl}\sigma_{1,m}\circ(\id_1\otimes f)&=&(f\otimes \id_1)\circ\sigma_{1,n}.\\\\
\tikzfig{id1fmultiswap}&=&\tikzfig{multiswapfid1}\\&\end{array}\]
\item \emph{Inverse law:}
\[\tikzset{tikzfig/.style={baseline=-0.25em,scale=0.35,every node/.style={scale=0.8},borddiagrammevide/.style={-, dash pattern=on \traitsdiagrammevide off \traitsdiagrammevide on \traitsdiagrammevide off \traitsdiagrammevide on \traitsdiagrammevide off 0em}}}\begin{array}{rcl}\sigma\circ\sigma&=&\id_{2}.\\\\
\tikzfig{swapswap}&=&\tikzfig{filsparalleleslongs}\\&\end{array}\]
\end{itemize}
\emph{Axioms about the trace:}
\begin{itemize}\tikzset{tikzfig/.style={baseline=-0.25em,scale=0.35,every node/.style={scale=0.7,font=\Large}}}
\item \emph{Naturality in the input:} for any $n,m,l\geq 0$, $f\colon l\to n$ and $g\colon n+1\to m+1$, \[\begin{array}{rcl}Tr(g\circ(f\otimes\id_1))&=&Tr(g)\circ f.\\\\
\tikzfig{tracefhgrandg}&=&\tikzfig{fhpuistracegrandg}\\&\end{array}\]
\item \emph{Naturality in the output:} for any $n,m,l\geq 0$, $f\colon n+1\to m+1$ and $g\colon m\to l$, \[\begin{array}{rcl}Tr((g\otimes\id_1)\circ f)&=&g\circ Tr(f).\\\\
\tikzfig{tracegrandfgh}&=&\tikzfig{tracegrandfpuisgh}\\&\end{array}\]
\item \emph{Dinaturality:} for any $n,m,i,j\geq 0$, $f\colon n+i\to m+j$ and $g\colon j\to i$, \[\begin{array}{rcl}Tr_i((\id_m\otimes g)\circ f)&=&Tr_j(f\circ(\id_n\otimes g)).\\\\
\tikzfig{tracegrandfgb}&=&\tikzfig{tracegbgrandf}\\&\end{array}\]
\item \emph{Superposing:} for any $n,m,l,p\geq 0$, $f\colon n+1\to m+1$ and $g\colon l\to p$, \[\tikzset{tikzfig/.style={baseline=-0.25em,scale=0.35,every node/.style={scale=0.68, font=\Large},borddiagrammevide/.style={-, dash pattern=on \traitsdiagrammevide off \traitsdiagrammevide on \traitsdiagrammevide off \traitsdiagrammevide on \traitsdiagrammevide off 0em}}}\begin{array}{rcl}Tr(g\otimes f)&=&g\otimes Tr(f).\\\\
\tikzfig{tracegsurgrandfboite}&=&\tikzfig{gsurtracegrandfboite}\\&\end{array}\]
\item \emph{Yanking:}
\[\begin{array}{rcl}Tr(\sigma)&=&\id_1.\\\\
\tikzfig{yankingvariantecentresurfil}&=&\tikzfig{filmoyen}\\&\end{array}\]
\end{itemize}

\end{definition}

\begin{remark}
We can define $\sigma_{0,m}:=\id_m$ and for any $n\geq0$, $\sigma_{n+1,m}:=(\sigma_{n,m}\otimes\id_1)\circ(\id_m\otimes\sigma_{1,m})$. Then the unit coherence $\sigma_{n,0}=\id_n$ and the associativity coherence $(\id_m\otimes\sigma_{n,k})\circ(\sigma_{n,m}\otimes\id_k)=\sigma_{n,m+k}$ of the multiwire swaps are consequences of the definition of $\sigma_{n,m}$. The general inverse law axiom $\sigma_{n,m}\circ\sigma_{m,n}=\id_{n+m}$ can be deduced easily from its restricted version by induction on $n$ and $m$. The general naturality of the swap, $\sigma_{n',m'}\circ(f\otimes g)=(g\otimes f)\circ\sigma_{n,m}$ for any $f\colon n\to n'$ and $g\colon m\to m'$, can be deduced from its restricted version by first iterating it to get a swap with multiple wires on both sides, then using the general inverse law axiom to flip it upside down. The vanishing axioms $Tr_0(f)=f$ and $Tr_i(Tr_j(f))=Tr_{i+j}(f)$ are consequences of the definition of $Tr_j$. The general yanking axiom $Tr_n(\sigma_{n,n})=\id_n$ can be deduced by induction on $n$ from its restricted version using a multiwire version of the Yang-Baxter equation (which is a consequence of the naturality of the swap) and the inverse law. The general (that is, with multiwire traces) versions of naturality of the trace in the input, in the output, and of superposing can be deduced by iteration of their respective restricted versions.
\end{remark}

\begin{definition}\label{defprotweb}
A \emph{traced weak braided category} is a strict monoidal category that is additionally a weak braided monoidal category in the sense of \cite{FSV2011homc} or \cite{solberg2015weak} and a right traced category in the sense of \cite{Sel2009-graphical}. A \emph{PROTWEB} is a traced weak braided category whose objects are generated from the monoidal unit and a single object by monoidal product, and identified with the natural integers.\\

Namely, a PROTWEB is a category $\mathcal C$ whose objects are the natural integers, equipped with
\begin{itemize}
\item a functor $\otimes\colon \mathcal C\times \mathcal C\to \mathcal C$ such that for any $n,m\geq 0$, $n\otimes m=n+m$
\item an arrow $\sigma\colon 2\to 2$
\item a family of functions $\noeqbreak Tr\colon \mathcal C(n+1,m+1)\to \mathcal C(n,m)$
\end{itemize}
satisfying the properties listed below.\\

Let $\sigma_{1,0}:=\id_1$; for any $n\geq 0$, let $\sigma_{1,n+1}:=(\id_{n}\otimes\sigma)\circ(\sigma_{1,n}\otimes\id_{1})$; for any $m\geq 0$, let $\sigma_{0,m}=\id_m$; and for any $n\geq0$, let $\sigma_{n+1,m}:=(\sigma_{n,m}\otimes\id_1)\circ(\id_m\otimes\sigma_{1,m})$.
Let $Tr_0(f):=f$; and for any $j\geq 0$, let $Tr_{j+1}:=Tr\circ Tr_j$.\\

On top of the fact that $\mathcal C$ is a category and that $\otimes$ is a functor and acts as the addition on objects, the required properties are, using the same names as in \cref{deftracedprop}:\\\\
\emph{Strict monoidal structure:}
\begin{itemize}
\item Naturality of the associator
\item Naturality of the left unitor
\item Naturality of the right unitor
\end{itemize}
\emph{Strict weak braided monoidal structure:}
\begin{itemize}
\item \emph{General naturality of the swap:} for any $n,n',m,m'\geq 0$, $f\colon n\to n'$ and $g\colon m\to m'$, \[\tikzset{tikzfig/.style={baseline=-0.25em,scale=0.35,every node/.style={scale=0.7,font=\Large},borddiagrammevide/.style={-, dash pattern=on \traitsdiagrammevide off \traitsdiagrammevide on \traitsdiagrammevide off \traitsdiagrammevide on \traitsdiagrammevide off 0em}}}\begin{array}{rcl}\sigma_{n'm'}\circ(f\otimes g)&=&(g\otimes f)\circ\sigma_{nm}.\\\\
\tikzfig{fgmultiswap}&=&\tikzfig{multiswapgf}\\&\end{array}\]
\end{itemize}
\emph{Axioms about the trace:}
\begin{itemize}
\item Naturality in the input
\item Naturality in the output
\item Dinaturality
\item Superposing
\end{itemize}
\end{definition}

\begin{remark}
Similarly as in the case of the axioms of a traced PROP, the unit coherence and the two associativity coherences $(\id_m\otimes\sigma_{nk})\circ(\sigma_{nm}\otimes\id_k)=\sigma_{n(m+k)}$ and $(\sigma_{nk}\otimes\id_m)\circ(\id_n\otimes\sigma_{mk})=\sigma_{n(m+k)}$ (which are both needed when $\sigma$ is not self-inverse) of the multiwire swaps are consequences of the definition of $\sigma_{nm}$, the vanishing axioms are consequences of the definition of $Tr_j$, and the general versions of naturality of the trace in the input, in the output, and of superposing can be deduced by iteration of their respective restricted versions.
\end{remark}

\begin{lemma}\label{addswapswapyanking}
A category is a traced PROP if and only if it is a PROTWEB and satisfies inverse law and yanking:
\[\tikzset{tikzfig/.style={baseline=-0.25em,scale=0.35}}\begin{array}{rcl@{\qquad}rcl}\sigma\circ\sigma&=&\id_{2}&Tr(\sigma)&=&\id_1.\\\\
\tikzfig{swapswap}&=&\tikzfig{filsparalleleslongs}&\tikzfig{yankingvariantecentresurfil}&=&\tikzfig{filmoyen}\\&\end{array}\]
\end{lemma}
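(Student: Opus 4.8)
The plan is to prove the two directions of Lemma \ref{addswapswapyanking} separately. The forward direction is essentially trivial: if $\mathcal C$ is a traced PROP, then by Definition \ref{deftracedprop} it satisfies all the structural axioms; in particular it has the naturality of the swap (the restricted version $\sigma_{1,m}\circ(\id_1\otimes f)=(f\otimes\id_1)\circ\sigma_{1,m}$), naturality of the associator and unitors, and all four trace axioms plus yanking. Moreover, as noted in the first Remark after Definition \ref{deftracedprop}, the general naturality of the swap $\sigma_{n',m'}\circ(f\otimes g)=(g\otimes f)\circ\sigma_{n,m}$ follows from the restricted version together with the inverse law; this is precisely the strict weak braided monoidal structure required of a PROTWEB. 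So every traced PROP is a PROTWEB, and by definition it satisfies inverse law and yanking. The only genuine content here is checking that the specific list of axioms defining a PROTWEB in Definition \ref{defprotweb} is a sub-list (up to the swap-naturality strengthening, which is derivable) of those defining a traced PROP.

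The reverse direction is the interesting one. Suppose $\mathcal C$ is a PROTWEB satisfying the inverse law $\sigma\circ\sigma=\id_2$ and yanking $Tr(\sigma)=\id_1$. I need to recover all the traced-PROP axioms of Definition \ref{deftracedprop}. The monoidal and functoriality axioms, the unitors, the associator, and the four trace axioms (naturality in input, naturality in output, dinaturality, superposing) are all already part of the PROTWEB definition, so nothing is needed there. The inverse law and yanking are assumed by hypothesis. The one remaining axiom to derive is the restricted naturality of the swap, $\sigma_{1,m}\circ(\id_1\otimes f)=(f\otimes\id_1)\circ\sigma_{1,m}$ for $f\colon n\to m$ — but this is an immediate specialisation of the general naturality of the swap (take one of the two morphisms to be $\id_1$), which is one of the PROTWEB axioms. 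Hence all traced-PROP axioms hold, and $\mathcal C$ is a traced PROP.

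The subtlety I expect to be the main obstacle is bookkeeping about what exactly is \emph{derivable} versus \emph{assumed} in each of the two definitions, given that a PROTWEB uses the \emph{general} naturality of the swap as an axiom whereas a traced PROP uses only the \emph{restricted} version and derives the general one (but only using the inverse law). Concretely: in a traced PROP, the general swap-naturality is a theorem, not an axiom, so for the forward direction I must invoke that derivation to certify the PROTWEB structure. Conversely, a PROTWEB by itself need not have $\sigma$ self-inverse, so the two associativity coherences of multiwire swaps that hold automatically in the traced-PROP setting must be re-examined; but as the Remark after Definition \ref{defprotweb} records, both versions follow from the \emph{definition} of $\sigma_{n,m}$ alone, independently of whether $\sigma$ is self-inverse. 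So once the inverse law is added back as a hypothesis, the extra generality collapses harmlessly. The cleanest way to present the proof is therefore to state that the axiom lists of the two structures differ only in (i) traced PROP asks for restricted swap-naturality while PROTWEB asks for general swap-naturality, and (ii) traced PROP asks for inverse law and yanking while PROTWEB does not — and then observe that adding inverse law and yanking to a PROTWEB supplies (ii) directly and makes (i) a wash, since general naturality specialises to restricted naturality and, in the presence of the inverse law, restricted naturality re-derives general naturality as in the traced-PROP Remark.

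\begin{proof}
Both structures are categories $\mathcal C$ with objects the natural integers, equipped with a functor $\otimes$ acting as addition on objects, an arrow $\sigma\colon 2\to2$, and a family $Tr\colon\mathcal C(n{+}1,m{+}1)\to\mathcal C(n,m)$, and both require the strict monoidal structure (naturality of the associator and of the two unitors) and the four trace axioms (naturality in the input, naturality in the output, dinaturality, superposing). The two definitions differ only in the following points: a traced PROP additionally requires the restricted naturality of the swap $\sigma_{1,m}\circ(\id_1\otimes f)=(f\otimes\id_1)\circ\sigma_{1,m}$, the inverse law $\sigma\circ\sigma=\id_2$, and yanking $Tr(\sigma)=\id_1$; a PROTWEB instead requires the general naturality of the swap $\sigma_{n',m'}\circ(f\otimes g)=(g\otimes f)\circ\sigma_{n,m}$, and requires neither the inverse law nor yanking.

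Suppose first that $\mathcal C$ is a traced PROP. Then $\mathcal C$ satisfies the inverse law and yanking by definition, so it suffices to show that $\mathcal C$ is a PROTWEB, i.e.\ that it satisfies the general naturality of the swap. This is exactly the derivation recorded in the first Remark following \cref{deftracedprop}: from the restricted naturality one first obtains, by iteration, a swap with multiple wires on each side, and then, using the inverse law to flip the resulting diagram upside down, one gets $\sigma_{n',m'}\circ(f\otimes g)=(g\otimes f)\circ\sigma_{n,m}$ for arbitrary $f\colon n\to n'$ and $g\colon m\to m'$. All other PROTWEB axioms are shared with the traced PROP axioms. Hence $\mathcal C$ is a PROTWEB satisfying the inverse law and yanking.

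Conversely, suppose $\mathcal C$ is a PROTWEB satisfying the inverse law and yanking. The inverse law and yanking are then available by hypothesis, and the strict monoidal structure and the four trace axioms are part of the PROTWEB definition, so it only remains to establish the restricted naturality of the swap. But taking $g=\id_1$ (and hence $m=m'=1$) in the general naturality of the swap of the PROTWEB structure yields precisely $\sigma_{n',1}\circ(f\otimes\id_1)=(\id_1\otimes f)\circ\sigma_{n,1}$; flipping this upside down with the inverse law, or equivalently applying the general naturality with the roles of the two tensor factors exchanged, gives $\sigma_{1,m}\circ(\id_1\otimes f)=(f\otimes\id_1)\circ\sigma_{1,m}$ for $f\colon n\to m$, which is the required axiom. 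Therefore all the axioms of \cref{deftracedprop} hold, and $\mathcal C$ is a traced PROP.
\end{proof}
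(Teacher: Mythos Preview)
Your proposal is correct and follows the same approach as the paper: compare the axiom lists of \cref{deftracedprop} and \cref{defprotweb}, observe that they share the monoidal and trace axioms, and handle the one genuine difference (restricted versus general naturality of the swap) by noting that the general version is derivable in a traced PROP (the Remark after \cref{deftracedprop}) while the restricted version is an instance of the general one. The paper's own proof is a single sentence to this effect; your version simply spells out the bookkeeping. One cosmetic point: in the converse direction, taking $f=\id_1$ in the general naturality gives the restricted form directly, so the detour via $g=\id_1$ and then flipping with the inverse law is unnecessary (though not wrong, and you do mention the direct route as an alternative).
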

\begin{proof}\vspace{-0.16cm}
This follows directly from comparing the lists of axioms given respectively in \cref{deftracedprop,defprotweb}, and from the fact that the general naturality of the swap is a consequence of the axioms of traced PROP.
\end{proof}

\section{Proofs}

\subsection{Semantics}

\subsubsection{Proof of Propositions \ref{determinismterm} and \ref{reversibility}}\label{preuvedeterminismtermreversibility}
We first prove the two propositions without assuming the axioms of traced PROP. At the end we will have to prove that any two diagrams equivalent modulo the axioms of traced PROP have the same path semantics.

Not assuming the axioms of traced PROP implies that for any diagram $D$, we are in exactly one of the following cases:
\begin{itemize}
\item $D=\tikzfig{diagrammevide-xs},\tikzfig{filcourt-s},\tikzfig{swap-xs},\tikzfig{gateU-xs}\text{ or }\tikzfig{beamsplitter-xs}$
\item there exist unique $D_1$ and $D_2$ such that $D=D_2\circ D_1$
\item there exist unique $D_1$ and $D_2$ such that $D=D_1\otimes D_2$
\item there exists a unique $D'$ such that $D=Tr(D')$.
\end{itemize}

We prove both propositions together by structural induction on $D$. \\

If $D=\tikzfig{diagrammevide-s}$ then $\hv\times[n]$ is empty so both propositions hold.\\

If $D$ is a generator then we have $n=1$ if $D=\tikzfig{filcourt-s},\tikzfig{neg-xs}\text{ or }\tikzfig{gateU-xs}$ and $n=2$ if $D=\tikzfig{swap-xs}\text{ or }\tikzfig{beamsplitter-xs}$, and in any case it is easy to see that both propositions hold.\\

If $D=D_2\circ D_1$, then for any $(c,p)\in\hv\times[n]$, by induction hypothesis there exist unique $(c',p')\in\hv\times[n]$ and $U\in\C^{q\times q}$ such that $(D_1,c,p)\xRightarrow{U}(c',p')$, and again by induction hypothesis there exist unique $(c'',p'')\in\hv\times[n]$ and $V\in\C^{q\times q}$ such that $(D_2,c',p')\xRightarrow{V}(c'',p'')$. Therefore, there is exactly one way of meeting the premises of the only rule that can reduce $(D,c,p)$ and these premises completely determine the conclusion of the rule, so \cref{determinismterm} holds for $D$.

Similarly, for any $(c,p)\in\hv\times[n]$, by induction hypothesis there exist unique $(c',p')\in\hv\times[n]$ and $U\in\C^{q\times q}$ such that $(D_2,c',p')\xRightarrow{U}(c,p)$, and again by induction hypothesis there exist unique $(c'',p'')\in\hv\times[n]$ and $V\in\C^{q\times q}$ such that $(D_1,c'',p'')\xRightarrow{V}(c,p)$. Therefore, there is exactly one way to meet the premises of the only rule that can reduce $D$ to get a reduction with right-hand side $(c,p)$. These premises completely determine the conclusion of the rule, so \cref{reversibility} holds for $D$.\\

If $D=D_1\otimes D_2$ with $D_1:n_1\to n_1$ and $D_2:n-n_1\to n-n_2$, let $(c,p)\in\hv\times[n]$.

If $p<n_1$, then by induction hypothesis there exist unique $(c',p')\in\hv\times[n_1]$ and $U\in\C^{q\times q}$ such that $(D_1,c,p)\xRightarrow{U}(c',p')$, so that there is exactly one rule that allows us to reduce $(D,c,p)$ (Rule $\otimes1$), and exactly one way to meet its premises, so \cref{determinismterm} holds for $D$. If $p\geq n_1$, then by induction hypothesis there exist unique $(c',p')\in\hv\times[n-n_1]$ and $U\in\C^{q\times q}$ such that $(D_2,c,p-n_1)\xRightarrow{U}(c',p')$, so that there is exactly one rule that allows us to reduce $(D,c,p)$ (Rule $\otimes2$), and exactly one way to meet its premises, so \cref{determinismterm} holds for $D$.

Similarly, if $p<n_1$, then by induction hypothesis there exist unique $(c',p')\in\hv\times[n_1]$ and $U\in\C^{q\times q}$ such that $(D_1,c',p')\xRightarrow{U}(c,p)$, so that there is exactly one rule that allows us to reduce $D$ and get $(c,p)$ (Rule $\otimes1$), and exactly one way to meet its premises, so \cref{reversibility} holds for $D$. If $p\geq n_1$, then by induction hypothesis there exist unique $(c',p')\in\hv\times[n-n_1]$ and $U\in\C^{q\times q}$ such that $(D_2,c,p-n_1)\xRightarrow{U}(c',p')$, so that there is exactly one rule that allows us to reduce $D$ and get $(c,p)$ (Rule $\otimes2$), and exactly one way to meet its premises, so \cref{reversibility} holds for $D$.\\

If $D=Tr(D')$ with $D':n+1\to n+1$, then for any $(c_0,p_0)\in\hv\times[n]$, by induction hypothesis of \cref{determinismterm} there exist unique $(c_1,p_1)\in\hv\times[n+1]$ and $U_0\in\C^{q\times q}$ such that $(D',c_0,p_0)\xRightarrow{U_0}(c_1,p_1)$. If $p_1<n$, then there is exactly one reduction from $(D,c_0,p_0)$ which comes from applying Rule $\mathsf T_0$, so \cref{determinismterm} holds for $D$. If $p_1=n$, then again by induction hypothesis of \cref{determinismterm} there exist unique $(c_2,p_2)\in\hv\times[n+1]$ and $U_1\in\C^{q\times q}$ such that $(D',c_1,n)\xRightarrow{U_1}(c_2,p_2)$. If $p_2<n$, then there is exactly one reduction from $(D,c_0,p_0)$, which comes from applying Rule $\mathsf T_1$, so \cref{determinismterm} holds for $D$.

By uniqueness in the induction hypothesis of \cref{reversibility}, since $(D',c_0,p_0)\xRightarrow{U_0}(c_1,n)$, $(D',c_1,n)\xRightarrow{U_1}(c_2,p_2,U_1)$ and $(c_0,p_0)\neq(c_1,n)$, we have $(c_1,n)\neq(c_2,p_2)$, so that if $p_2=n$ then $c_2=\bar c_1$. In this case, again by induction hypothesis of \cref{determinismterm}, there exist unique $(c_3,p_3)\in\hv\times[n+1]$ and $U_2\in\C^{q\times q}$ such that $(D',\bar c_1,n)\xRightarrow{U_2}(c_3,p_3)$. Again by uniqueness in the induction hypothesis of \cref{reversibility}, since $(D',c_0,p_0)\xRightarrow{U_0}(c_1,n)$ and $(c_0,p_0)\neq(\bar c_1,n)$ we have $(c_3,p_3)\neq(c_1,n)$, and since $(D',c_1,n)\xRightarrow{U_1}(\bar c_1,n)$ and $(c_1,n)\neq(\bar c_1,n)$ we have $(c_3,p_3)\neq(\bar c_1,n)$. Therefore, we cannot have $p_3=n$, so $p_3<n$ and then there is exactly one reduction from $(D,c_0,p_0)$, which comes from applying Rule $\mathsf T_2$. So \cref{determinismterm} holds for $D$.

Similarly, by induction hypothesis of \cref{reversibility} there exist unique $(c_1,p_1)\in\hv\times[n+1]$ and $U_0\in\C^{q\times q}$ such that $(D',c_1,p_1)\xRightarrow{U_0}(c_0,p_0)$. If $p_1<n$, then there is exactly one reduction from $D$ with right-hand side $(c_0,p_0)$, which comes from applying Rule $\mathsf T_0$. So \cref{reversibility} holds for $D$. If $p_1=n$, then again by induction hypothesis of \cref{reversibility} there exist unique $(c_2,p_2)\in\hv\times[n+1]$ and $U_1\in\C^{q\times q}$ such that $(D',c_2,p_2)\xRightarrow{U_1}(c_1,n)$. If $p_2<n$, then there is exactly one reduction from $D$ with right-hand side $(c_0,p_0)$, which comes from applying Rule $\mathsf T_1$. So \cref{reversibility} holds for $D$.

By uniqueness in the induction hypothesis of \cref{determinismterm}, since $(D',c_1,n)\xRightarrow{U_0}(c_0,p_0)$, $(D',c_2,p_2)\xRightarrow{U_1}(c_1,n)$ and $(c_1,n)\neq(c_0,p_0)$, we have $(c_1,n)\neq(c_2,p_2)$, so that if $p_2=n$ then $c_2=\bar c_1$. In this case, again by induction hypothesis of \cref{reversibility}, there exist unique $(c_3,p_3)\in\hv\times[n+1]$ and $U_2\in\C^{q\times q}$ such that $(D',c_3,p_3)\xRightarrow{U_2}(\bar c_1,n)$. Again by uniqueness in the induction hypothesis of \cref{determinismterm}, since $(D',c_1,n)\xRightarrow{U_0}(c_0,p_0)$ and $(c_0,p_0)\neq(\bar c_1,n)$ we have $(c_3,p_3)\neq(c_1,n)$, and since $(D',\bar c_1,n)\xRightarrow{U_1}(c_1,n)$ and $(c_1,n)\neq(\bar c_1,n)$ we have $(c_3,p_3)\neq(\bar c_1,n)$. Therefore, we cannot have $p_3=n$, so $p_3<n$ and then there is exactly one reduction from $D$ with right-hand side $(c_0,p_0)$, which comes from applying Rule $\mathsf T_2$. So \cref{reversibility} holds for $D$.\\

To finish proving the result, we have to check that two diagrams equivalent modulo the axioms of traced PROP have the same path semantics. To do this, it suffices to check for each of the axioms given in Figure \ref{fig:TracedProp} that both sides have the same operational semantics, which is straightforward.

\subsubsection{Proof of Proposition \ref{welldefinednessofdenotationalsemantics} and Theorem \ref{adequacy}}\label{proofad}

We first prove the following three lemmas:

\begin{lemma}\label{unseultermenonnuldindiceconnu}
Let $n\geq 0$ and $f\in\S_{n+1}$, and let $\tau$ be the permutation and $U_{c,p}$ the family of matrices, such that $f=\ket{c,p,y}\mapsto\ket{\tau(c,p)}\otimes U_{c,p}\ket{y}$. For any $(c,p,y)\in\hv\times[n]\times[q]$, the series $\displaystyle\sum_{k\in \mathbb N}\pi_{1} \circ (f \circ \pi_{0})^k \circ  f \circ \iota(\ket{c,p,y})$ has at most one nonzero term (exactly one if $f$ is injective), of index $k_1-1$, where $k_1$ is the smallest $k\geq 1$ such that $\tau^k(c,p)\in\hv\times[n]$, or equivalently, the smallest $k\geq 1$ such that $f^k(\ket{c,p,y})\in\H_n$. Moreover, we have $k_1\geq 3$.
\end{lemma}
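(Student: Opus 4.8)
The plan is to expand the composition defining each summand and to track the position register through it. First I would note that the summand of index $k$ equals $\pi_1\circ f\circ(\pi_0\circ f)^k\circ\iota(\ket{c,p,y})$, so it applies $f$ exactly $k+1$ times, with a $\pi_0$ inserted between consecutive applications and a single $\pi_1$ at the end. Writing $(c_j,p_j):=\tau^j(c,p)$, so that $(c_0,p_0)=(c,p)$ with $p_0=p<n$, every intermediate $\pi_0$ annihilates any component whose position is not $n$, while the final $\pi_1$ annihilates the component at position $n$. Hence the summand can be nonzero only if $p_1=\cdots=p_k=n$ and $p_{k+1}<n$. The sequence $p_1,p_2,\ldots$ has a well-defined least index at which it first drops below $n$, namely $k_1$, so exactly one value of $k$, namely $k=k_1-1$, can make the summand survive, and every other summand vanishes. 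This establishes the ``at most one nonzero term'' claim and locates the surviving term at index $k_1-1$.

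For the ``exactly one when $f$ is injective'' refinement, I would observe that the matrix carried by the surviving summand is $U_{c_{k_1-1},p_{k_1-1}}\cdots U_{c_0,p_0}$ applied to $\ket y$. If $f$ is injective then each $U_{c,p}$ is invertible, so this product is invertible and does not send $\ket y$ to $0$; the surviving summand is then genuinely nonzero. The same computation reconciles the two descriptions of $k_1$: since $f^k(\ket{c,p,y})=\ket{\tau^k(c,p)}\otimes\big(U_{c_{k-1},p_{k-1}}\cdots U_{c_0,p_0}\big)\ket y$, its position lies in $[n]$ exactly when $\tau^k(c,p)\in\hv\times[n]$, which is the criterion driving the index.

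Finally I would bound $k_1$, using reversibility (\cref{reversibility}): $\tau$ is a permutation of $\hv\times[n+1]$, and the only configurations at position $n$ are $(\rightarrow,n)$ and $(\uparrow,n)$, which makes $\hv$ a two-element arena for a pigeonhole argument. Suppose, for contradiction, that $p_1=p_2=p_3=n$, so the particle reaches the traced wire three times in succession: $\tau(c_0,p_0)=(c_1,n)$, $\tau(c_1,n)=(c_2,n)$ and $\tau(c_2,n)=(c_3,n)$. Two of $c_1,c_2,c_3$ must then coincide, and in each case injectivity of $\tau$ forces an equality either between $(c_1,n)$ and $(c_2,n)$, or between a position-$n$ configuration and the initial $(c_0,p_0)$ with $p_0<n$, both absurd. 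Hence the position sequence contains at most two consecutive $n$'s before dropping below $n$; equivalently the particle traverses the traced wire at most twice, which is exactly the bound on $k_1$, namely $k_1\le 3$. The hard part is this last step: one must rule out all three ways the three position-$n$ configurations could collide, combining the injectivity of $\tau$ with the fact that $p_0$ is strictly below $n$, after which the pigeonhole on $\hv$ closes the argument.
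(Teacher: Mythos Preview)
Your proof is correct and follows essentially the same approach as the paper: track the position register through the composition, observe that $\pi_0$ kills everything not at position $n$ while $\pi_1$ kills everything at position $n$, and use a pigeonhole argument on the two-element set $\hv\times\{n\}$ together with injectivity of $\tau$ to bound $k_1$. Two small remarks: (i) you assert that $k_1$ is well-defined before proving the bound, whereas the paper first argues existence via periodicity of $(\tau^k(c,p))_k$ and the fact that $\tau^0(c,p)\in\hv\times[n]$ --- your contradiction argument for $p_1=p_2=p_3=n$ does in fact yield existence a posteriori, so this is only a presentational point; (ii) note that the lemma's last clause ``$k_1\geq 3$'' is a typo in the paper for ``$k_1\leq 3$'', which is what both you and the paper actually prove.
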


\begin{lemma}\label{stabilitedeSparT}
For any $n\geq 0$ and $f\in\S_{n+1}$, $\T(f)$ is well-defined and $\T(f)\in\S_n$.
\end{lemma}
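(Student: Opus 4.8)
The plan is to unwind the definition of $\T$ against the structural description of elements of $\S_{n+1}$ provided by \cref{unseultermenonnuldindiceconnu}. Write $f=\big(\ket{c,p,y}\mapsto\ket{\tau(c,p)}\otimes U_{c,p}\ket y\big)$, with $\tau$ a permutation of $\hv\times[n+1]$ and matrices $U_{c,p}\in\C^{q\times q}$. For well-definedness: $\T(f)=\sum_{k\in\N}\pi_1\circ(f\circ\pi_0)^k\circ f\circ\iota$ is an a priori infinite sum of linear maps $\H_n\to\H_n$, so by linearity it is enough to observe that, applied to any basis vector $\ket{c,p,y}$ of $\H_n$, only finitely many summands are nonzero; this is precisely the first assertion of \cref{unseultermenonnuldindiceconnu} (at most one nonzero term, of index $k_1-1$, where $k_1=k_1(c,p)$ is the least $k\geq 1$ with $\tau^k(c,p)\in\hv\times[n]$, an integer which exists because $(c,p)$ lies on its own finite $\tau$-orbit).

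Next I would compute $\T(f)$ explicitly on basis vectors. Fixing $(c,p,y)$ with $p\in[n]$, \cref{unseultermenonnuldindiceconnu} reduces $\T(f)\ket{c,p,y}$ to the single term $\pi_1\circ(f\circ\pi_0)^{k_1-1}\circ f\circ\iota(\ket{c,p,y})$, and evaluating it is routine bookkeeping: $\iota$ is the canonical embedding, $\pi_0$ acts as the identity on $\ket{c',p',y'}$ whenever $p'=n$ — which holds for each of the $k_1-1$ intermediate iterates $\tau^i(c,p)$, $1\leq i\leq k_1-1$, by minimality of $k_1$ — and $\pi_1$ acts as the identity when $p'<n$, which holds for $\tau^{k_1}(c,p)$. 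Hence
\[
\T(f)\ket{c,p,y}=\ket{\tau^{k_1(c,p)}(c,p)}\otimes\big(U_{\tau^{k_1(c,p)-1}(c,p)}\cdots U_{\tau(c,p)}U_{c,p}\big)\ket y,
\]
where the product is read with the convention that it degenerates to $U_{c,p}$ when $k_1(c,p)=1$ (and the identity still holds should this term vanish). This already exhibits $\T(f)$ in the shape required for membership in $\S_n$, with $\tau'(c,p):=\tau^{k_1(c,p)}(c,p)$ and $U'_{c,p}$ the displayed product.

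It then remains — and this is the only part not handed to us directly by \cref{unseultermenonnuldindiceconnu} — to check that $\tau'$ is a permutation of the finite set $\hv\times[n]$. By construction $\tau'$ is the first-return map of the permutation $\tau$ of $\hv\times[n+1]$ to the subset $\hv\times[n]$: it sends this subset into itself by definition of $k_1$, and I would establish injectivity directly. If $\tau'(a)=\tau'(b)$ with return times $i=k_1(a)\leq j=k_1(b)$, then $\tau^i(a)=\tau^j(b)$ gives $\tau^{j-i}(b)=a$; were $j>i$, this would place the element $a\in\hv\times[n]$ among $\tau^1(b),\dots,\tau^{j-1}(b)$, contradicting the minimality of $j$. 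Hence $i=j$ and $a=b$. An injective self-map of a finite set is bijective, so $\tau'$ is a permutation, and therefore $\T(f)\in\S_n$. The main obstacle is this first-return argument; everything else is bookkeeping layered on top of \cref{unseultermenonnuldindiceconnu}.
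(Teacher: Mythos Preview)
Your proof is correct and follows the same overall structure as the paper: reduce to the explicit formula for $\T(f)$ on basis vectors, then verify that the induced map on polarisation--position pairs is a permutation. The explicit formula you derive is exactly the content of \cref{calculdutermenonnulavecpermutation}, which the paper proves jointly with \cref{unseultermenonnuldindiceconnu} and then simply cites; you have essentially reproduced that computation inline.

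The one genuine difference is in how bijectivity of the first-return map $\tau'$ is established. The paper constructs an explicit inverse, namely the first-return map $(\tau^{-1})^*$ of $\tau^{-1}$ to $\hv\times[n]$, and checks that $(\tau^{-1})^*\circ\tau^*=\id$ (and symmetrically). You instead prove injectivity directly via a minimality argument and then invoke finiteness of $\hv\times[n]$. Your route is slightly shorter and avoids the two-sided inverse verification; the paper's route is more constructive and makes explicit that the trace operation is compatible with inversion in $\S$, which is conceptually pleasant even if not used here. Either argument is perfectly adequate for the lemma.
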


\begin{lemma}\label{calculdutermenonnulavecpermutation}
Let $n\geq 0$ and $f\in\S_{n+1}$. Let $\tau$ be the permutation and $U_{c,p}$ the family of matrices, such that $f=\ket{c,p,y}\mapsto\ket{\tau(c,p)}\otimes U_{c,p}\ket{y}$. For any $(c,p,y)\in\hv\times[n]\times[q]$, we have $\T(f)(\ket{c,p,y})=\ket{\tau^{k_1}(c,p)}\otimes U_{\tau^{k_1-1}(c,p)}\cdots U_{c,p}\ket{y}$, where $k_1$ is the smallest $k\geq 1$ such that $\tau^k(c,p)\in\hv\times[n]$.
\end{lemma}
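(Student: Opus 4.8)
The plan is to prove the three lemmas together, essentially as a single induction-free unpacking of the definitions of $\T$, $\iota$, $\pi_0$ and $\pi_1$, since all three statements are really about the behaviour of a single map $f\in\S_{n+1}$ written in the canonical form $\ket{c,p,y}\mapsto\ket{\tau(c,p)}\otimes U_{c,p}\ket y$. First I would fix $(c,p,y)\in\hv\times[n]\times[q]$ and compute $f^k\circ\iota(\ket{c,p,y})$ for increasing $k$. Since $\iota$ is just the inclusion, $\iota(\ket{c,p,y})=\ket{c,p,y}\in\H_n\subseteq\H_{n+1}$, and by induction on $k$ one gets $f^k(\ket{c,p,y})=\ket{\tau^k(c,p)}\otimes U_{\tau^{k-1}(c,p)}\cdots U_{c,p}\ket y$, where here $\tau^k(c,p)$ may have position component $n$. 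The point is then to track the position component: $\pi_0$ kills a vector whose position is $<n$ and is the identity (up to relabelling) on position $n$, while $\pi_1$ does the opposite. Thus in the expression $\pi_1\circ(f\circ\pi_0)^k\circ f\circ\iota(\ket{c,p,y})$, the factor $(f\circ\pi_0)$ applied $j$ times forces the intermediate positions $\tau(c,p),\dots,\tau^{j}(c,p)$ to all equal $n$, and the final $\pi_1$ forces $\tau^{k+1}(c,p)$ to have position $<n$. Hence the term of index $k$ is nonzero precisely when $\tau^{k+1}(c,p)\in\hv\times[n]$ and $\tau^i(c,p)$ has position $n$ for all $1\le i\le k$; such a $k$ is unique, namely $k=k_1-1$ with $k_1$ the least $k\ge 1$ with $\tau^k(c,p)\in\hv\times[n]$. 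This proves Lemma~\ref{unseultermenonnuldindiceconnu}, modulo the claim $k_1\ge 3$.

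For the bound $k_1\ge 3$: we must show that starting from a position in $[n]$, the orbit under $\tau$ cannot return to $[n]$ in one or two steps. Here I would invoke Propositions~\ref{determinismterm} and~\ref{reversibility}, which give that $\tau$ (the permutation $\tau_{Tr(D)}$-style data, but here abstractly the permutation of $f=\interp{D'}$) is well-defined; more precisely I would argue directly from the shape of $\pi_0,\pi_1$: if $\tau(c,p)\in\hv\times[n]$ then the trace loop is never taken and this is the $k=0$ term, but we must also exclude that the very first application already lands in $[n]$ in a way contradicting determinism — actually the cleaner route is the combinatorial one used elsewhere in the paper, namely that a particle entering the trace wire cannot leave and re-enter it without first changing polarisation, and a polarisation in $\hv$ takes at least two flips to cycle, giving three passes minimum. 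I expect this to be the one genuinely subtle point, and I would phrase it as: if $\tau(c,p)$ or $\tau^2(c,p)$ had position $<n$, one could build a shorter derivation, contradicting uniqueness in Proposition~\ref{determinismterm} applied to the configuration where the particle sits on the traced wire; alternatively one observes the position-$n$ ``slots'' of $\tau$ form an involution-free structure forcing length $\ge 3$.

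Given Lemma~\ref{unseultermenonnuldindiceconnu}, Lemmas~\ref{stabilitedeSparT} and~\ref{calculdutermenonnulavecpermutation} follow immediately. For Lemma~\ref{calculdutermenonnulavecpermutation}: the unique nonzero term has index $k_1-1$, and plugging into $f^{k_1}(\ket{c,p,y})=\ket{\tau^{k_1}(c,p)}\otimes U_{\tau^{k_1-1}(c,p)}\cdots U_{c,p}\ket y$ and noting that $\pi_1$ acts as the identity here (since $\tau^{k_1}(c,p)\in\hv\times[n]$) gives exactly the stated formula $\T(f)(\ket{c,p,y})=\ket{\tau^{k_1}(c,p)}\otimes U_{\tau^{k_1-1}(c,p)}\cdots U_{c,p}\ket y$. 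For Lemma~\ref{stabilitedeSparT}: the sum defining $\T(f)$ is pointwise a single nonzero term, so it is well-defined (no convergence issue), it is linear, and the formula just obtained shows $\T(f)=\ket{c,p,y}\mapsto\ket{\tau'(c,p)}\otimes U'_{c,p}\ket y$ with $\tau'(c,p):=\tau^{k_1(c,p)}(c,p)$ and $U'_{c,p}:=U_{\tau^{k_1-1}(c,p)}\cdots U_{c,p}$. It remains only to check $\tau'$ is a permutation of $\hv\times[n]$: it is well-defined because $k_1$ is finite (the orbit of $(c,p)$ under $\tau$ is finite and, by $k_1\ge 3$ applied along the orbit, eventually re-enters $[n]$), and it is a bijection since one can run the same construction with $f^{-1}$ — which also lies in $\S_{n+1}$ — to produce a two-sided inverse. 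That closes all three lemmas.
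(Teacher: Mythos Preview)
Your argument for Lemma~\ref{calculdutermenonnulavecpermutation} itself is correct and follows the paper's own proof closely: compute $f^k(\ket{c,p,y})=\ket{\tau^k(c,p)}\otimes U_{\tau^{k-1}(c,p)}\cdots U_{c,p}\ket y$, observe that $(f\circ\pi_0)^k\circ f\circ\iota(\ket{c,p,y})$ equals $f^{k+1}(\ket{c,p,y})$ for $k\le k_1-1$ (because $\pi_0$ is the identity on position $n$), vanishes for $k\geq k_1$ (because $\pi_0$ kills position $<n$ at step $k_1$), and the final $\pi_1$ kills all surviving terms except the one of index $k_1-1$. This is exactly what the paper does.

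There is, however, a genuine error in your discussion of the bound on $k_1$. The statement of Lemma~\ref{unseultermenonnuldindiceconnu} in the paper reads ``$k_1\geq 3$'', and you spend a paragraph trying to argue for this; but this is a typo --- the paper's own proof establishes $k_1\leq 3$, which is the correct (and relevant) direction. The argument is simply that $\tau(c,p),\ldots,\tau^{k_1-1}(c,p)$ are pairwise distinct (by injectivity of $\tau$ and minimality of $k_1$) and all lie in the two-element set $\hv\times\{n\}$, so $k_1-1\leq 2$. Your attempts to prove $k_1\geq 3$ (via contradictions with determinism, ``involution-free structure'', etc.) cannot succeed, since $k_1=1$ is perfectly possible: if $\tau(c,p)$ already has position $<n$, the particle never enters the trace wire and the $k=0$ term is the only nonzero one. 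Fortunately this bound plays no role in the proof of Lemma~\ref{calculdutermenonnulavecpermutation}; it is only used to justify the restriction $k\in\{0,1,2\}$ in the path-semantics rule $(\mathsf T_k)$.
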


\begin{proof}[Proof of \cref{unseultermenonnuldindiceconnu,calculdutermenonnulavecpermutation}]\vspace{-0.16cm}\binoppenalty10000
Let $(c,p,y)\in\hv\times[n]\times[q]$ and let $k_1$ be the smallest $k\geq 1$ such that $\tau^k(c,p)\in\hv\times[n]$. Since the sequence $(\tau^{k}(c,p))_{k\in\mathbb{N}}$ is periodic and $\tau^0(c,p)=(c,p)\in\hv\times[n]$, $k_1$ exists. Since $\tau$ is injective, if there were $\noeqbreak{} 1\leq k'<k''\leq k_1$ such that $\tau^{k'}(c,p)=\tau^{k''}(c,p)$, this would mean that $\noeqbreak \tau^{k''-k'}(c,p)=(c,p)\in\hv\times[n]$, with $1\leq k''-k'<k_1$, which contradicts the definition of $k_1$. Therefore, the couples $\tau(c,p),\tau^{2}(c,p),...,\tau^{k_1-1}(c,p)$ are all different. By definition of $k_1$, these couples are all in the set $\hv\times\{n\}$, which has only two elements, so that $k_1\leq 3$.

Let us prove by finite induction that for every $k\in\{0,...,k_1-1\}$, we have\\ $(f \circ \pi_{0})^k \circ  f \circ \iota(\ket{c,p,y})=f^{k+1}(\ket{c,p,y})$. This is obviously true for $k=0$, and assuming that this is true for some $0\leq k<k_1-1$, we have $(f \circ \pi_{0})^{k+1} \circ  f \circ \iota(\ket{c,p,y})=f(\pi_{0}((f \circ \pi_{0})^{k} \circ  f \circ \iota(\ket{c,p,y})))=f(\pi_{0}(f^{k+1}(\ket{c,p,y})))$, and by definition of $k_1$, we have $f^{k+1}(\ket{c,p,y})\in\hv\times\{n\}$ so that $\pi_{0}(f^{k+1}(\ket{c,p,y}))=f^{k+1}(\ket{c,p,y})$, and consequently $(f \circ \pi_{0})^{k+1} \circ  f \circ \iota_j(\ket{c,p,y})=f^{k+2}(\ket{c,p,y})$. This finishes the induction.

Additionally, for any $k\in\N$, we have $f^k(\ket{c,p,y})=\ket{\tau^k(c,p)}\otimes U_{\tau^{k-1}(c,p)}\cdots U_{c,p}\ket{y}$.

For any $k<k_1-1$, by definition of $k_1$, we have $\tau^{k+1}(c,p)\in\hv\times\{n\}$ so that $\pi_1(f^{k+1}(\ket{c,p,y}))=0$, that is, the term of index $k$ of the series is zero.

We have $\tau^{k_1}(c,p)\in\hv\times[n]$, so that the term of index $k_1-1$ of the series is not zero unless $U_{\tau^{k_1-1}(c,p)}\cdots U_{c,p}\ket{y}=0$, and this term is equal to $\pi_1(f^{k_1}(\ket{c,p,y}))=\ket{\tau^{k_1}(c,p)}\otimes U_{\tau^{k_1-1}(c,p)}\cdots U_{c,p}\ket{y}$.

For any $k\geq k_1$, we have $(f \circ \pi_{0})^k \circ  f \circ \iota(\ket{c,p,y})=(f \circ \pi_{0})^{k-k_1}\circ f(\pi_0(f^{k_1}(\ket{c,p,y})))$, and since $\tau^{k_1}(c,p)\in\hv\times[n]$, we have $\pi_0(f^{k_1}(\ket{c,p,y}))=0$, so that the term of index $k$ of the series is zero.
\end{proof}

\begin{proof}[Proof of \cref{stabilitedeSparT}]\vspace{-0.16cm}
Well-definedness is a direct consequence of \cref{calculdutermenonnulavecpermutation}. Given $f\in\S_{n+1}$, by \cref{calculdutermenonnulavecpermutation} there exist a family of matrices $V_{c,p}$ such that $\T(f)=\ket{c,p,y}\mapsto\ket{\tau^*(c,p)}\otimes V_{c,p}\ket{y}$, where $\tau^*:\hv\times[n]\to\hv\times[n]::(c,p)\mapsto\tau^{k_1}(c,p)$ with $k_1$ the smallest $k\geq 1$ such that $\tau^k(c,p)\in\hv\times[n]$. What we have to prove is that $\tau^*$ is a permutation, that is, that it is a bijection. 

We claim that this is the case and that its inverse is $(\tau^{-1})^*:\hv\times[n]\to\hv\times[n]::(c,p)\mapsto(\tau^{-1})^{k_2}(c,p)$ with $k_2$ the smallest $k\geq 1$ such that $(\tau^{-1})^k(c,p)\in\hv\times[n]$.

Indeed, let $(c,p)\in\hv\times[n]$ and $k_1$ be the smallest $k\geq 1$ such that $\tau^k(c,p)\in\hv\times[n]$. Then for any $k\in\{1,...,k_1-1\}$, we have $(\tau^{-1})^k(\tau^*(c,p))=(\tau^{-1})^k(\tau^{k_1}(c,p))=\tau^{k_1-k}(c,p)$, which, by definition of $k_1$, is not in $\hv\times[n]$ because $1\leq k_1-k< k_1$. We also have $(\tau^{-1})^{k_1}(\tau^*(c,p))=(\tau^{-1})^{k_1}(\tau^{k_1}(c,p))=(c,p)$, which is in $\hv\times[n]$. Therefore, the smallest $k\geq 1$ such that $(\tau^{-1})^{k}(\tau^*(c,p))\in\hv\times[n]$ is $k_1$, so that $(\tau^{-1})^{*}(\tau^*(c,p))=(\tau^{-1})^{k_1}(\tau^{k_1}(c,p))=(c,p)$. This proves that $(\tau^{-1})^*\circ\tau^*=\id$. We can prove in the same way that $\tau^*\circ(\tau^{-1})^*=\id$, which proves our claim.
\end{proof}

\paragraph{Proof of \cref{welldefinednessofdenotationalsemantics}}

First, we do not assume the axioms of traced PROP and we prove by structural induction that for any diagram $D:n\to n$, $\interp D$ is well-defined and in $\S_n$.

If $D=\tikzfig{diagrammevide-xs},\tikzfig{filcourt-s},\tikzfig{neg-xs},\tikzfig{swap-xs}\text{ or }\tikzfig{beamsplitter-xs}$, then this is a direct consequence of the definition of $\interp.$.

If $D=D_2\circ D_1$, then by induction hypothesis, $\interp{D_1}$ and $\interp{D_2}$ are well-defined and in $\S_n$. By definition we have $\interp{D}=\interp{D_2}\circ\interp{D_1}$, and it is easy to see that $\S_n$ is closed under composition.

If $D=D_1\otimes D_2$, with $D_1:n_1\to n_1$, then $D_2:n-n_1\to n-n_1$ and by induction hypothesis, $\interp{D_1}$ and $\interp{D_2}$ are well-defined and we have $\interp{D_1}\in\S_{n_1}$ and $\interp{D_2}\in\S_{n-n_1}$. It is easy to see that for any $f\in\S_m$ and $g\in\S_k$ we have $f\boxplus g\in\S_{m+k}$, so that $\interp{D}:=\interp{D_1}\boxplus\interp{D_2}\in\S_n$.

If $D=Tr(D')$, then by induction hypothesis, $\interp{D'}$ is well-defined and in $\S_{n+1}$. By \cref{stabilitedeSparT} this implies that, $\interp{D}:=\T(\interp{D'})$ is well-defined and in $\S_n$.

The last thing to prove is (still not assuming the axioms of traced PROP) that two diagrams that are equivalent modulo the axioms of traced PROP have the same denotational semantics. For this it suffices to remark that the proof of \cref{adequacy} does not need to assume the axioms of traced PROP, so \cref{adequacy} still holds if we do not assume them. Then, since two diagrams equivalent modulo these axioms have the same path semantics (see \cref{preuvedeterminismtermreversibility}), that is, the same permutation $\tau_D$ and matrices $[D]_{c,p}$, by \eqref{adequacy} they have the same denotational semantics.

\paragraph{Proof of \cref{adequacy}}

We proceed by structural induction on $D$.
\begin{itemize}
\item If $D=\tikzfig{filcourt-s}$, then we have $\tau_D=\id$, $[D]_{c,p}=I_q$ for every $c,p$, and $\interp{D}=\ket{c,0,x}\mapsto\ket{c,0,x}$, so the result holds.
\item If $D=\tikzfig{neg-s}$, then we have $\tau_D=\begin{array}{l}(\rightarrow,p)\mapsto(\uparrowlarge,p)\\(\uparrowlarge,p)\mapsto(\rightarrow,p)\end{array}$, $[D]_{c,p}=I_q$ for every $c,p$, and $\interp{D}=\begin{array}{l}\ket{\rightarrow,p,y}\mapsto\ket{\uparrowlarge,p,y}\\\ket{\uparrowlarge,p,y}\mapsto\ket{\rightarrow,p,y}\end{array}$, so the result holds.
\item If $D=\tikzfig{swap-s}$, then we have $\tau_D=(c,p)\mapsto(c,1-p)$, $[D]_{c,p}=I_q$ for every $c,p$, and $\interp{D}=\ket{c,p,y}\mapsto\ket{c,1-p,y}$, so the result holds.
\item If $D=\tikzfig{beamsplitter-s}$, then we have $\tau_D=\begin{array}{l}(\rightarrow,p)\mapsto(\rightarrow,p)\\(\uparrowlarge,p)\mapsto(\uparrowlarge,1-p)\end{array}$, $[D]_{c,p}=I_q$ for every $c,p$, and $\interp{D}=\begin{array}{l}\ket{\rightarrow,p,y}\mapsto\ket{\rightarrow,p,y}\\\ket{\uparrowlarge,p,y}\mapsto\ket{\uparrowlarge,1-p,y}\end{array}$, so the result holds.
\item If $D=\tikzfig{gateU-s}$, then we have $\tau_D=\id$, $[D]_{c,p}=U$ for every $c,p$, and $\interp{D}=\ket{c,p,y}\mapsto \ket{c,p}\otimes U\ket{y}$, so the result holds.
\item If $D=D_2\circ D_1$, then on the one hand, for any $(c,p)\in\hv\times[n]$, we have\linebreak $(D_1,c,p)\xRightarrow{[D_1]_{c,p}}\tau_{D_1}(c,p)$ and  $(D_2,\tau_{D_1}(c,p))\xRightarrow{[D_2]_{\tau_{D_1}(c,p)}}\tau_{D_2}(\tau_{D_1}(c,p))$, so by Rule $(\circ)$ we have $(D,c,p)\xRightarrow{[D_2]_{\tau_{D_1}(c,p)}[D_1]_{c,p}}\tau_{D_2}(\tau_{D_1}(c,p))$, so that $\tau_D=\tau_{D_2}\circ\tau_{D_1}$ and $[D]_{c,p}=[D_2]_{\tau_{D_1}(c,p)}[D_1]_{c,p}$.
On the other hand, by induction hypothesis, we have $\interp{D_1}=\ket{c,p,y}\mapsto\ket{\tau_{D_1}(c,p)}\otimes [D_1]_{c,p}\ket{y}$ and $\interp{D_2}=\ket{c,p,y}\mapsto\ket{\tau_{D_2}(c,p)}\otimes [D_2]_{c,p}\ket{y}$.
Therefore, for any $(c,p,y)\in\hv\times[n]\times[q]$ we have $\interp D(\ket{c,p,y})=\interp{D_2}(\interp{D_1}(\ket{c,p,y}))=\interp{D_2}(\ket{\tau_{D_1}(c,p)}\otimes[D_1]_{c,p}\ket{y}=\ket{\tau_{D_2}(\tau_{D_1}(c,p))}\otimes[D_2]_{\tau_{D_1}(c,p)}[D_1]_{c,p}\ket{y}$.
So the result holds for $D$.
\item If $D=D_1\otimes D_2 $ with $D_1:n_1\to n_1$, then on the one hand, we have\\ $\tau_D=(c,p)\mapsto\begin{cases}\tau_{D_1}(c,p)&\text{if $p<n_1$}\\(c',p'+n_1)&\text{if $p\geq n_1$, where $(c',p')=\tau_{D_2}(c,p-n_1)$}\end{cases}$ and for any\linebreak $(c,p)\in\hv\times[n]$ we have $[D]_{c,p}=\begin{cases}[D_1]_{c,p}&\text{if $p<n_1$}\\ [D_2]_{c,p-n_1}&\text{if $p\geq n_1$}\end{cases}$. On the other hand, by induction hypothesis, we have $\interp{D_1}=\ket{c,p,y}\mapsto\ket{\tau_{D_1}(c,p)}\otimes U_{c,p}\ket{y}$ and $\interp{D_2}=\ket{c,p,y}\mapsto\ket{\tau_{D_2}(c,p)}\otimes [D_2]_{c,p}\ket{y}$,\\
so that $\interp{D}=\interp{D_1}\boxplus\interp{D_2}=\ket{c,p,y}\mapsto\begin{cases}\ket{\tau_{D_1}(c,p)}\otimes [D_1]_{c,p}\ket{y}&\text{if $p<n_1$}\\\ket{c',p'+n_1}\otimes [D_2]_{c,p-n_1}\ket{y}&\text{if $p\geq n_1$}\end{cases}$ where $(c',p')=\tau_{D_2}(c,p-n_1)$. So the result holds for $D$.
\item If $D=Tr(D')$, let $(c,p,y)\in\hv\times[n]\times[q]$, and let $k_1$ be the smallest $k\geq 1$ such that $\tau_{D'}^k(c,p)\in\hv\times[n]$. On the one hand, if we write $\tau_{D'}^k(c,p)$ as $(c_{k},p_k)$, then for all $i\in\{0,...,k_1-1\}$ we have $(D',c_i,p_i)\xRightarrow{[D']_{c_i,p_i}}(c_{i+1},p_{i+1})$, and by definition of $k_1$, we have $\tau_{D'}^{i+1}(c,p)\notin\hv\times[n]$, that is, $p_{i+1}=n$, if and only if $i<k_1$. Therefore, by Rule $(\mathsf T_{k_1})$, we have $(Tr(D'),c,p)\xRightarrow{[D']_{\tau_{D'}^{k_1-1}(c,p)}\cdots[D']_{c,p}}(\tau_{D'}(c,p))$. On the other hand, by induction hypothesis we have $\interp D'=\ket{c,p,y}\mapsto\ket{\tau_{D'}(c,p)}\otimes[D']_{c,p}\ket{y}$. By \cref{calculdutermenonnulavecpermutation}, this implies that $\interp D(\ket{c,p,y})=\T(\interp{D'})(\ket{c,p,y})=\ket{\tau_{D'}^{k_1}(c,p)}\otimes [D']_{\tau_{D'}^{k_1-1}(c,p)}\cdots[D']_{c,p}\ket{y}$. So the result holds for $D$.
\end{itemize}

\subsection{Equational theory -- PBS-calculus }

\subsubsection{Proof of Proposition \ref{soundnessoftheequations}}\label{proofsoundness}

\begin{definition}\label{defcongruence}
A \emph{congruence} is an equivalence relation $\mathrel{\mathcal R}$ on the set of diagrams such that if $D_1\mathrel{\mathcal R}D_1'$ and $D_2\mathrel{\mathcal R}D_2'$ then $(D_2\circ D_1)\mathrel{\mathcal R}(D_2'\circ D_1')$ and $(D_1\otimes D_2)\mathrel{\mathcal R} (D_1'\otimes D_2')$, and if $D\mathrel{\mathcal R}D'$ then $Tr(D)\mathrel{\mathcal R}Tr(D')$.
\end{definition}

Let $\sim$ be the relation such that $D_1\sim D_2$ if and only if $\interp{D_1}=\interp{D_2}$ and $\approx$ be the relation such that $D_1\approx D_2$ if and only if $\textup{PBS}\vdash D_1=D_2$. By definition, $\approx$ is the smallest congruence preserving Equations \eqref{idbox} to \eqref{bsnbsh}. It is clear that $\sim$ is a congruence, so it suffices to prove that it preserves Equations \eqref{idbox} to \eqref{bsnbsh} too. This can be done easily by using the graphical way to compute the denotational semantics provided by \cref{adequacy}.

\subsubsection{Normal forms}

\paragraph{Proof of \cref{NFcomposition}}\label{proofcompleteness_comp}

We have to show the following result:

\begin{lemma}\label{NFajout}
For any diagram $N:n\to n$ in normal form and any diagram $g$ of the form $(\tikzfig{filcourt-s}^{\otimes i})\otimes h\otimes (\tikzfig{filcourt-s}^{\otimes n-i-1})$ with $h=\tikzfig{filcourt-s},\tikzfig{neg-xs}\text{ or }E(U,V)$, or $(\tikzfig{filcourt-s}^{\otimes i})\otimes h\otimes (\tikzfig{filcourt-s}^{\otimes n-i-2})$ with $h=\tikzfig{swap-xs}\text{ or }\tikzfig{beamsplitter-xs}$, there exists $N'$ in normal form such that $\textup{PBS}\vdash g\circ N=N'$.
\end{lemma}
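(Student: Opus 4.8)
The plan is to prove Lemma~\ref{NFajout} by structural induction on the normal form $N$, integrating the layer $g$ into $N$ one normal-form constructor at a time. If $g = \tikzfig{filcourt-s}^{\otimes n}$ there is nothing to do, so I may assume $g$ carries exactly one non-trivial generator $h\in\{\tikzfig{neg-xs},\,E(U,V),\,\tikzfig{swap-xs},\,\tikzfig{beamsplitter-xs}\}$. The base case $N=\tikzfig{diagrammevide-xs}$ (so $n=0$) is vacuous, since no layer $g$ of the prescribed shape exists on $0$ wires. For the inductive step, $N:n\to n$ with $n\geq 1$ is obtained from a smaller normal form $N_0:(n-1)\to(n-1)$ by one of the two normal-form constructors (say \emph{type~A} and \emph{type~C}, the latter only when $n\geq 2$); in both cases $N$ factors, up to the structural congruence of Figure~\ref{fig:TracedProp}, as $Q\circ(N_0\otimes L)$, where $L:1\to 1$ is the new-wire content $\tikzfig{negpotentiel-s}\circ E(U,V)$ and $Q$ is a ``routing prefix'' built only from $\tikzfig{filcourt-s}$, $\tikzfig{neg-xs}$, $\tikzfig{beamsplitter-xs}$, $\tikzfig{swap-xs}$ ($Q=\id$ in type~A, and $Q$ is the $\sigma_\ell$/beam-splitter pattern in type~C).

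I then split on the support of $g$, into three buckets. \emph{First}, if $g$ acts entirely on the first $n-1$ wires, one commutes $g$ downwards: it slides freely past $L$ (disjoint wires, structural congruence) and past $Q$ using Equations \eqref{bsbs}, \eqref{bsbsbs}, \eqref{bsnnnn}, \eqref{bsnbsh} to move its generator across the beam splitters and negations of the routing prefix; this leaves $g_0\circ N_0$ for a layer $g_0$ of the prescribed form, to which the induction hypothesis applies, and the same normal-form constructor rebuilds the required $N'$. \emph{Second}, if $g$ is a single-wire generator sitting on the last wire ($g=\id_{n-1}\otimes h$ with $h=\tikzfig{neg-xs}$ or $h=E(U'',V'')$), then $g\circ N = Q\circ(N_0\otimes(h\circ L))$ and it suffices to rewrite $h\circ\tikzfig{negpotentiel-s}\circ E(U,V)$ back into the canonical shape $\tikzfig{negpotentiel-s}\circ E(U',V')$: for $h=\tikzfig{neg-xs}$ this uses $\tikzfig{neg-s}\tikzfig{neg-s}=\tikzfig{filcourt-s}$ (Example~\ref{ex:doubleneg}) together with~\eqref{negu} to push the negation through the loop, and for $h=E(U'',V'')$ it reduces to fusing labels via~\eqref{fusionuvsimple} once the now-adjacent beam splitters inside the composed loops have been cancelled by~\eqref{bsbs} (with~\eqref{idbox} handling identity labels). \emph{Third}, the genuinely interesting case is when $g$ is a $\tikzfig{swap-xs}$ or a $\tikzfig{beamsplitter-xs}$ straddling wires $n-2$ and $n-1$, so that it interacts with the head of the normal form.

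In that last case the head of the diagram must be reorganised. The approach is to slide $L$ and $Q$ so that the incoming $\tikzfig{swap-xs}$ or $\tikzfig{beamsplitter-xs}$ confronts only generators of compatible type, and then to collapse the resulting configuration back into a normal form — possibly of the \emph{other} type — using \eqref{bsbs} to cancel a pair of beam splitters, \eqref{bsbsbs} to trade two beam splitters for swaps, \eqref{bsnnnn} and \eqref{bsnbsh} to relocate negations across beam splitters, \eqref{duplicateloop} to discard the dead branch created when a filter $E(U,V)$ is duplicated by the routing and to merge the resulting beam splitters, and \eqref{fusionuvsimple} to fuse any labels this induces. One then checks separately that the permutation parameter $\sigma_\ell$ of the type~C form is updated consistently, and that a $\tikzfig{swap-xs}$ on the new wire which interchanges type~A and type~C preserves the normal-form shape. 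I expect the main obstacle to be exactly this bookkeeping: showing that \emph{every} configuration arising when a beam splitter or swap meets the $\sigma_\ell$/beam-splitter pattern of a type~C normal form can be rewritten, using only Equations~\eqref{idbox}--\eqref{bsnbsh}, into one of the two allowed normal-form shapes — a finite but delicate case analysis, for which the precise definitions of the normal form and of $\sigma_\ell$ have been tailored so that the rewriting goes through.
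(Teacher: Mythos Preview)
Your overall strategy---induction on $n$, case split on the generator $h$, and pushing $g$ inward until the induction hypothesis applies---matches the paper's, but the factorisation you build the argument on is incorrect and this breaks your case split. Both normal-form constructors (type~A as well as type~B/C) carry an outer permutation: type~A ends with a $\sigma_j$, and type~B ends with a $\sigma_k\circ(\sigma_j\otimes\tikzfig{filcourt-s})$ followed by the beam splitter. So your claim ``$Q=\id$ in type~A'' is false in general, and the three buckets based on the \emph{initial} support of $g$ are the wrong partition: sliding a single-wire $h$ past $\sigma_j$ moves it to wire $\sigma_j^{-1}(i)$, which may well be the distinguished last wire (or one of the last two in type~B) even when $i<n-1$. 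The paper does exactly this: it first slides $h$ through $\sigma_j$ (and $\sigma_k$), and only then splits on where $h$ \emph{lands}.

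This also means your second bucket is incomplete. When $h=E(U',V')$ lands on one of the two special wires of a type~B form, it does not simply fuse with the existing $E(U,V)$ on the bottom wire; it must first commute with the $\tikzfig{negpotentiel-s}$ (swapping its labels) and then pass through the beam splitter, which \emph{splits} it into two filters---one staying on the bottom wire to merge with $E(U,V)$, the other going up into $D$ to be absorbed by the induction hypothesis. These steps require dedicated derived equations (the paper proves $\tikzfig{neg-xs}\,E(U,V)=E(V,U)\,\tikzfig{neg-xs}$, two ``$E$ through $\tikzfig{beamsplitter-xs}$'' rules, and a fusion rule $E(U,V)\,E(U',V')=E(U'U,V'V)$), none of which follow directly from \eqref{fusionuvsimple} and \eqref{bsbs} alone. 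Your third bucket likewise needs more than a reference to \eqref{bsbs}--\eqref{bsnbsh}: the paper's case analysis for $h=\tikzfig{beamsplitter-xs}$ has eight sub-cases (depending on which of the wires $i_1,i_2$ after sliding hit the distinguished positions) and relies on several further derived equations to reshape the two interacting beam splitters into the type-A or type-B pattern. The architecture of your proof is right; what is missing is the correct pivot point for the case distinction and the auxiliary equalities that make each branch close.
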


We proceed by induction on $n$.

If $n=0$, then there is no such $g$ so the result trivially holds.

If $n\geq 1$, we write $N$ in the form
\[\tikzfig{typeAfiltreUV}\quad\text{or}\quad\tikzfig{typeCfiltreUV}.\]
We call these two forms type A and B respectively.

By induction hypothesis we only have to prove that $g\circ N$ can be put in the form
\[\tikzfig{typeAfiltreUVinductionD}\quad\text{or}\quad\tikzfig{typeCfiltreUVinductionD}\]
for some diagram $D:n-1\to n-1$ built using\tikzfig{diagrammevide-xs},\tikzfig{filcourt-s},\tikzfig{neg-xs},\tikzfig{swap-xs},\tikzfig{beamsplitter-xs}, $E(U',V')$, $\circ$ and $\otimes$.

To prove this, we proceed by case distinction:

\begin{itemize}

\item If $h=\tikzfig{filcourt-s}$, then $g\circ N=N$, so there is nothing to do.

\item If $h=\tikzfig{neg-xs}$, then we slide it through $\sigma_j$ ($\sigma_k$ and $\sigma_j$ if $N$ is of type B),
\begin{itemize}
\item if it does not arrive on the last wire if $N$ is of type A, or one of the last two wires if $N$ is of type B, then we get the desired form with $D=(\tikzfig{filcourt-s}^{\otimes i'})\otimes \tikzfig{neg-xs}\otimes (\tikzfig{filcourt-s}^{\otimes n-i'-2})$
\item if it arrives on the last wire (resp. on one of the last two wires), then it merges with the\tikzfig{negpotentiel-s} on its wire and changes its value: if\tikzfig{negpotentiel-s} is\tikzfig{filcourt-s} then $h$ simply takes its place, and if\tikzfig{negpotentiel-s} is\tikzfig{neg-xs} then the two negations cancel by the following equation, which is proved in \cref{derivationnegneg} to be a consequence of the axioms of the $\textup{PBS}$-calculus:
\stepcounter{equation}
\begin{equation}\tag{\ref{negneg}}{\tikzset{tikzfig/.style={baseline=-0.25em,scale=\echellefils,every node/.style={scale=0.8}}}\begin{array}{rcl}\tikzfig{negneg}&=&\tikzfig{filmoyen}\end{array}}\end{equation}
\end{itemize}

\item If $h=E(U',V')$, then we slide it through $\sigma_j$ ($\sigma_k$ and $\sigma_j$ if $N$ is of type B),
\begin{itemize}
\item if it does not arrive on the last wire if $N$ is of type A, or one of the last two wires if $N$ is of type B, then we get the desired form with $D=(\tikzfig{filcourt-s}^{\otimes i'})\otimes h\otimes (\tikzfig{filcourt-s}^{\otimes n-i'-2})$
\item if it arrives on the last wire (resp. on one of the last two wires), then it commutes with the\tikzfig{negpotentiel-s} on its wire, trivially if\tikzfig{negpotentiel-s} is\tikzfig{filcourt-s}, and by the following equation, that we prove below to be a consequence of the axioms of the $\textup{PBS}$-calculus, if\tikzfig{negpotentiel-s} is\tikzfig{neg-xs}:\footnote{In the equations, $U,V,U'\text{ and }V'$ stand for generic matrices, not necessarily related to the context.}
\begingroup
\tikzset{tikzfig/.style={baseline=-0.25em,scale=\echellefils,every node/.style={scale=0.8}}}
\begin{equation}\labeletpreuve{neguv}{\begin{array}{rcl}
\tikzfig{negfiltreUV}&=&\tikzfig{filtreVUneg}
\end{array}}\end{equation}
then, if $N$ is of type B, it passes through the beam splitter by one of the following two equations:
\begin{equation}\labeletpreuve{bsuvh}{\begin{array}{rcl}
\tikzfig{grandbsfiltreUVhaut}&=&\tikzfig{filtreUIhautfiltreIVbasgrandbs}
\end{array}}\end{equation}
\begin{equation}\labeletpreuve{bsuvb}{\begin{array}{rcl}
\tikzfig{grandbsfiltreUVbas}&=&\tikzfig{filtreIVhautfiltreUIbasgrandbs}
\end{array}}\end{equation}
finally, the top part becomes part of $D$, and the bottom part merges with the $E(U,V)$ from $N$ by the following equation:
\begin{equation}\labeletpreuve{fusionuv}{\begin{array}{rcl}
\tikzfig{filtreUVfiltreUprimeVprime}&=&\tikzfig{filtreUprimeUVprimeV}
\end{array}}\end{equation}
\end{itemize}
\endgroup

\item If $h=\tikzfig{swap-xs}$, then by manipulating the wires according to the axioms of traced PROP, we can write $g\circ N$ in one of the desired forms, with $D$ being a permutation of the wires (that is, a composition of tensor products of\tikzfig{swap-xs} and\tikzfig{filcourt-s}).

\item If $h=\tikzfig{beamsplitter-xs}$ then we look at the indices $i_1$ and $i_2$ of the wires to which $h$ is connected on the other side of $\sigma_j$ (on the other side of $\sigma_k\circ(\sigma_j\otimes\tikzfig{filcourt-s})$ if $N$ is of type B). The wire $i_1$ is connected to the top wire of $h$ and the wire $i_2$ to the bottom wire of $h$.
\begin{itemize}
\item If $i_1,i_2<n-1$ in the case of type A ($i_1,i_2<n-2$ in the case of type B), then $i_2=i_1+1$ and we can slide the beam splitter across $\sigma_j$ ($\sigma_k$ and $\sigma_j$ in the case of type B) to put $N$ in the desired form with $D=(\tikzfig{filcourt-s}^{\otimes i'})\otimes h\otimes (\tikzfig{filcourt-s}^{\otimes n-i'-3})$.
\item If $N$ is of type A and $i_2=n-1$, then by manipulating the wires we can write $g\circ N$ in the form
\[\tikzfig{typeAfiltreUVinductionD1bs}\]
where $D_1$ and $D_2$ are permutations of the wires. Then, if \tikzfig{negpotentiel-s} is \tikzfig{neg-xs}, we apply the following equation:
\begin{equation}\labeletpreuve{nbs}{\tikzset{tikzfig/.style={baseline=-0.25em,scale=\echellefils,every node/.style={scale=0.8}}}\begin{array}{rcl}
\tikzfig{nbbeamsplitter}&=&\tikzfig{beamsplitternnnx}
\end{array}}\end{equation}
and the\tikzfig{neg-xs} on the left is composed with $D_1$ to give us $D$. Finally, we get the desired form by manipulation of the wires.
\item If $N$ is of type A and $i_1=n-1$, then by manipulating the wires, and applying once the following equation :
\begin{equation}\labeletpreuve{swapbsx}{\tikzset{tikzfig/.style={baseline=-0.25em,scale=\echellefils,every node/.style={scale=0.8}}}\begin{array}{rcl}
\tikzfig{swapbeamsplitter}&=&\tikzfig{beamsplitterswap}
\end{array}}\end{equation}
we can write $g\circ N$ in the form
\[\tikzfig{typeAfiltreUVinductionD1bs}\]
where $D_1$ and $D_2$ are permutations of the wires. Then we proceed as in the previous case.
\item If $N$ is of type B, $i_1<n-2$ and $i_2=n-2$, then by manipulating the wires we can write $g\circ N$ in the form
\[\tikzfig{typeCfiltreUVinductionD1bsnpbsesc}\]
where $D_1$ and $D_2$ are permutations of the wires. Then, according to the\!\tikzfig{negpotentiel-s} between the two beam splitters, we use one of the following two equations:
\begin{equation}\labeletpreuve{swapbsbsvar}{\tikzset{tikzfig/.style={baseline=-0.25em,scale=\echellefils,every node/.style={scale=0.8}}}\begin{array}{rcl}\tikzfig{bsbsmonte}&=&\tikzfig{xbsbsxdescendhaut}\end{array}}\end{equation}
\begin{equation}\labeletpreuve{bsnbsesc}{\tikzset{tikzfig/.style={baseline=-0.25em,scale=\echellefils,every node/.style={scale=0.8}}}\begin{array}{rcl}\tikzfig{bsbnmbsh}&=&\tikzfig{nhbshnmxhbsbnm}\end{array}}\end{equation}
Immediately in the second case, or after a few manipulation of wires in the first case, we get the desired form.
\item If $N$ is of type B, $i_2<n-2$ and $i_1=n-2$, then by manipulating the wires and using once Equation \eqref{swapbsx}, we can write $g\circ N$ in the same form as in the previous case. Then we proceed in the same way.
\item If $N$ is of type B, $i_1<n-2$ and $i_2=n-1$, then by manipulating the wires we can write $g\circ N$ in the form
\[\tikzfig{typeCfiltreUVinductionD1bsxbs}\]
where $D_1$ and $D_2$ are permutations of the wires. Then if the \tikzfig{negpotentiel-s} between the two beam splitters is\tikzfig{filcourt-s}, then we apply the following equation:
\begin{equation}\labeletpreuve{bsxbs}{\tikzset{tikzfig/.style={baseline=-0.25em,scale=\echellefils,every node/.style={scale=0.8}}}\begin{array}{rcl}\tikzfig{bsxbspointeenhaut}&=&\tikzfig{bsbsxpointeenbas}\end{array}}\end{equation}
if the \tikzfig{negpotentiel-s} between the two beam splitters is\tikzfig{neg-xs}, then we proceed as follows:
\begin{equation}\notag\tikzset{tikzfig/.style={baseline=-0.25em,scale=\echellefils,every node/.style={scale=0.8}}}\begin{array}{rcl}\tikzfig{bsnxbspointeenhaut}&\eqeqref{nbs}&\tikzfig{nhbsbxhbsbnmnbxb}\\\\
&\eqeqref{bsxbs}&\tikzfig{nhbshnhbsbxhnbxb}\end{array}\end{equation}
which gives us the desired form after some manipulation of wires.
\item If $N$ is of type B, $i_2<n-2$ and $i_1=n-1$, then by manipulating the wires and applying Equation \eqref{swapbsx} we write $g\circ N$ in the same form as in the previous case, and we proceed in the same way.
\item If $N$ is of type B, $i_1=n-2$ and $i_2=n-1$, then by manipulating the wires, we can write $g\circ N$ in the following form:
\[\tikzfig{typeCfiltreUVbsnpnpbs}\]
then we apply one of the following equations:
\begingroup
\tikzset{tikzfig/.style={baseline=-0.25em,scale=\echellefils,every node/.style={scale=0.8}}}
\begin{equation}\tag{\ref{bsbs}}\begin{array}{rcl}\hspace{0.35cm}\tikzfig{beamsplitterbeamsplitter}&=&\tikzfig{filsparalleleslongs}\end{array}\end{equation}
\begin{equation}\tag{\ref{bsnbsh}}\begin{array}{rcl}\tikzfig{beamsplitternhautbeamsplitter}&=&\tikzfig{beamsplitternnhaut}\end{array}\end{equation}
\begin{equation}\labeletpreuve{bsnbsb}{\begin{array}{rcl}
\tikzfig{beamsplitternbasbeamsplitter}&=&\tikzfig{nhbsnbx}
\end{array}}\end{equation}
\begin{equation}\labeletpreuve{bsnnbs}{\begin{array}{rcl}
\tikzfig{bsnnbs}&=&\tikzfig{nnswap}
\end{array}}\end{equation}
\endgroup
In the four cases, this gives us the desired form, after a few manipulation of wires if necessary.
\item If $N$ is of type B, $i_1=n-1$ and $i_2=n-2$, then by manipulating the wires and applying Equation \eqref{swapbsx} once, we can write $g\circ N$ in the same form as in the previous case and proceed in the same way. This finishes the case distinction.
\end{itemize}
\end{itemize}

\begingroup
\tikzset{tikzfig/.style={baseline=-0.25em,scale=\echellefils,every node/.style={scale=0.8}}}
It remains to prove Equations \eqref{neguv} to \eqref{bsnnbs}.

\phantomsection\label{preuveswapbsx}To prove Equation \eqref{swapbsx}, we have, by Equation \eqref{bsnnnn}:
\begin{equation}\notag\label{preuveswapbsx}\begin{array}{rcl}\tikzfig{xbsnhnbbsnhnbnhnbbsnhnbboite}&=&\tikzfig{xbsbsxnhnbbsnhnbboite}\end{array}\end{equation}
by Equations \eqref{negneg} and \eqref{bsbs}, and inverse law, this implies that
\begin{equation}\notag\begin{array}{rcl}\tikzfig{swapbeamsplitter}&=&\tikzfig{beamsplitternnnn}\end{array}\end{equation}
which, together with Equation \eqref{bsnnnn}, implies Equation \eqref{swapbsx}.\\\\

\phantomsection\label{preuveneguv}To prove Equation \eqref{neguv}, we have:
\begin{longtable}{RCL}\!\!\!\!\!\!\!\!\!\!\!\tikzfig{negfiltreUV}&\eer{negneg}&\tikzfig{tracenbnhnbbsnhnbnhnbUhVbbsnhpuisnh}\\\\
&\eer{negu}&\tikzfig{tracenbnhnbbsnhnbUhVbnhnbbsnhpuisnh}\\\\
&\overset{\text{dinaturality}}{=}&\tikzfig{tracenhnbbsnhnbUhVbnhnbbsnhnbpuisnh}\\\\
&\eqeqref{bsnnnn}&\tikzfig{tracebsxUhVbbsxpuisnh}\\\\
&\eqeqref{swapbsx}&\tikzfig{tracebsxUhVbxbspuisnh}\\\\
&\overset{\text{naturality of}}{\overset{\text{the swap,}}{\eqexpl{\text{inverse law}}}}&\tikzfig{filtreVUneg}\end{longtable}

\phantomsection\label{preuvenbs}To prove Equation \eqref{nbs}, we have:
\begin{longtable}{RCL}\tikzfig{nbbeamsplitter}&\eqeqref{negneg}&\tikzfig{nhnhnbbsnhnbnhnb}\\\\
&\eqeqref{bsnnnn}&\tikzfig{nhbsnhnbx}\end{longtable}

\phantomsection\label{preuvebsnbsb}To prove \cref{bsnbsb}, we have:
\begin{longtable}{RCL}\tikzfig{beamsplitternbasbeamsplitter}&\eqeqref{nbs}&\tikzfig{bsnhbsnhnbx}\\\\
&\eqdeuxeqref{bsnbsh}{negneg}&\tikzfig{nhbsnbx}\end{longtable}

\phantomsection\label{preuvebsnnbs}To prove \cref{bsnnbs}, we have:
\begin{longtable}{RCL}
\tikzfig{bsnnbs}&\eqeqref{negneg}&\tikzfig{bsnhnbbsnhnbnhnb}\\\\
&\overset{\eqref{bsnnnn},}{\overset{\text{naturality of}}{\eqexpl{\text{the swap}}}}&\tikzfig{bsbsnhnbx}\\\\
&\eqeqref{bsbs}&\tikzfig{nnswap}\end{longtable}

\phantomsection\label{preuveswapbsbsvar}To prove \cref{swapbsbsvar}, we have:

\begin{longtable}{RCL}\tikzfig{bsbsmonte}&\overset{\text{inverse law},}{\eqeqref{swapbsx}}&\tikzfig{xhxhbsbxhbshxh}\\\\
&\eqeqref{bsbsbs}&\tikzfig{xhbshbsbbshbshxh}\\\\
&\eqeqref{bsbs}&\tikzfig{xbsbsxdescendhaut}\end{longtable}

\subparagraph{Ancillary equations}
To prove the remaining equations, we need some ancillary equations:

\begin{lemma}
The following equations are consequences of the axioms of the \textup{PBS}-calculus:

\begin{equation}\labeletpreuve{bsbsbsvar}{\begin{array}{rcl}
\tikzfig{bsbsbspointeenhaut}&=&\tikzfig{xbsxpointeenhaut}
\end{array}}\end{equation}

\begin{equation}\label{swapbsbsbvar}\begin{array}{rcl}
\tikzfig{bsbsmonte}&=&\tikzfig{xbsbsxdescendbas}
\end{array}\end{equation}
\end{lemma}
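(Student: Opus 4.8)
The plan is to derive \eqref{bsbsbsvar} first, as the ``flipped'' counterpart of the three-wire axiom \eqref{bsbsbs}, and then to obtain \eqref{swapbsbsbvar} by replaying the derivation of \eqref{swapbsbsvar}, with \eqref{bsbsbsvar} used in place of \eqref{bsbsbs}.

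For \eqref{bsbsbsvar}, the point is that each side of \eqref{bsbsbsvar} is the vertical mirror of the corresponding side of \eqref{bsbsbs}: reflecting\tikzfig{bsbsbspointeenbas-s} across a horizontal axis gives\tikzfig{bsbsbspointeenhaut}, and similarly\tikzfig{xbsxpointeenbas-s} is mirrored to\tikzfig{xbsxpointeenhaut}. Such a reflection of a three-wire diagram is realised, up to deformation, by pre- and post-composing with the wire-reversal permutation $\rho$ on three wires (a composite of swaps), which is self-inverse by the inverse law. So I would take \eqref{bsbsbs}, conjugate both sides by $\rho$, and then push the two copies of $\rho$ inwards through the generators: swaps move past swaps and past beam splitters by naturality of the swap, and the crucial ingredient is that a polarising beam splitter \emph{commutes} with\tikzfig{swap-xs} --- an immediate consequence of \eqref{swapbsx} --- so that conjugating an individual beam splitter by a swap leaves it unchanged, once the resulting superfluous pair of swaps is cancelled via the inverse law $\sigma\circ\sigma=\id$. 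After carrying these slides through, the conjugate of the left-hand side collapses to\tikzfig{bsbsbspointeenhaut} and the conjugate of the right-hand side to\tikzfig{xbsxpointeenhaut}, which is exactly \eqref{bsbsbsvar}.

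Equation \eqref{swapbsbsbvar} then follows by the same three-step pattern used for \eqref{swapbsbsvar}: first rewrite the left-hand side\tikzfig{bsbsmonte} using the inverse law and \eqref{swapbsx} so as to expose a sub-configuration matching the left-hand side of the three-wire equation; then apply \eqref{bsbsbsvar} (instead of \eqref{bsbsbs}) to trade two beam splitters for two swaps; finally absorb the remaining pair of adjacent beam splitters with \eqref{bsbs}, and a little wire manipulation brings the diagram into the form\tikzfig{xbsbsxdescendbas}.

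The step I expect to be the main obstacle is purely bookkeeping: under the conjugation by $\rho$ one must track exactly which pair of wires each of the three beam splitters and two swaps acts on, and verify that every individual slide is licensed either by naturality of the swap or by the commutation of a beam splitter with a swap. Once the auxiliary fact that\tikzfig{beamsplitter-xs}\ commutes with\tikzfig{swap-xs}\ is isolated --- which is immediate from \eqref{swapbsx} together with the inverse law --- the remaining wire manipulations are routine and entirely in the spirit of those already carried out for \eqref{swapbsx}, \eqref{nbs}, \eqref{bsnnbs} and \eqref{swapbsbsvar}.
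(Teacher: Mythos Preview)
Your proposal is correct and matches the paper's approach: the paper derives \eqref{bsbsbsvar} precisely by conjugating \eqref{bsbsbs} with swaps (inserting $\sigma\circ\sigma=\id$ via the inverse law, sliding swaps through beam splitters via naturality and \eqref{swapbsx}, applying \eqref{bsbsbs}, and undoing the swaps), and then states that \eqref{swapbsbsbvar} is obtained by replaying the proof of \eqref{swapbsbsvar} with \eqref{bsbsbsvar} in place of \eqref{bsbsbs}. Your higher-level description in terms of conjugation by the wire-reversal permutation $\rho$ is exactly what the paper's explicit intermediate steps implement.
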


\begin{proof}\vspace{-0.16cm}
\phantomsection\label{preuvebsbsbsvar}To prove \cref{bsbsbsvar}, we have:
\begin{longtable}{RCL}\tikzfig{bsbsbspointeenhaut}&\overset{\text{inverse law},}{\overset{\text{naturality of}}{\overset{\text{the swap},\eqref{swapbsx}}{=}}}&\tikzfig{xbxhxbbshbsbbshxbxhxb}\\\\
&\eqeqref{bsbsbs}&\tikzfig{xbxhxbxhbsbxhxbxhxb}\\\\
&\overset{\text{naturality of}}{\overset{\text{the swap},\eqref{swapbsx},}{\eqexpl{\text{inverse law}}}}&\tikzfig{xbsxpointeenhaut}
\end{longtable}

The proof of \cref{swapbsbsbvar} is obtained by rotating the proof of \cref{swapbsbsvar} by 180° (it uses \cref{bsbsbsvar} instead of Equation \eqref{bsbsbs}).
\end{proof}

\phantomsection\label{preuvebsnbsesc}To prove \cref{bsnbsesc}, we have :
\begin{longtable}{RCL}\tikzfig{bsbnmbsh}&\eqeqref{nbs}&\tikzfig{nhbsbbshnhnmxh}\\\\
&\eqeqref{swapbsbsvar}&\tikzfig{nhxhbshbsbxhnhnmxh}\\\\
&\eqeqref{swapbsx}&\tikzfig{nhbshxhbsbxhnhnmxh}\\\\
&\overset{\text{naturality of}}{\overset{\text{the swap,}}{\eqexpl{\text{inverse law}}}}&\tikzfig{nhbshnmxhbsbnm}\end{longtable}

\phantomsection\label{preuvebsxbs}To prove \cref{bsxbs}, we have:
\begin{longtable}{RCL}\!\!\!\!\!\!\!\!\!\tikzfig{bsxbspointeenhaut}&\overset{\text{inverse law,}}{\eqeqref{swapbsx}}&\tikzfig{xbbsbxbxhbsb}\\\\
&\overset{\text{naturality of}}{\overset{\text{the swap}}{=}}&\tikzfig{xbbsbbshxbxh}\\\\
&\overset{\eqref{swapbsbsbvar},}{\eqexpl{\text{inverse law}}}&\tikzfig{bsbsxpointeenbas}\end{longtable}

\subparagraph{Ancillary equations}
To prove the remaining equations, we need additional ancillary equations:
\begin{lemma}
The following equations are consequences of the axioms of the $\textup{PBS}$-calculus:
\begin{equation}\labeletpreuve{bsxbsvar}{\begin{array}{rcl}
\tikzfig{bsxbspointeenbas}&=&\tikzfig{xbsbspointeenhaut}
\end{array}}\end{equation}
\begin{equation}\labeletpreuve{loopI}{\begin{array}{rcl}
\tikzfig{boucletraverseI}&=&\tikzfig{filmoyen}
\end{array}}\end{equation}
\begin{equation}\labeletpreuve{splituv}{\begin{array}{rcl}
\tikzfig{bouclerebondisUboucletraverseV}&=&\tikzfig{filtreUV}
\end{array}}\end{equation}

\begin{equation}\labeletpreuve{rebondisbsb}{\begin{array}{rcl}
\tikzfig{bouclerebondisUbasgrandbs}&=&\tikzfig{grandbsbouclerebondisUbas}
\end{array}}\end{equation}

\begin{equation}\labeletpreuve{traversebshb}{\begin{array}{rcl}
\tikzfig{boucletraverseUhautgrandbs}&=&\tikzfig{grandbsboucletraverseUbas}
\end{array}}\end{equation}

\begin{equation}\labeletpreuve{rebondisbsh}{\begin{array}{rcl}
\tikzfig{bouclerebondisUhautgrandbs}&=&\tikzfig{grandbsbouclerebondisUhaut}
\end{array}}\end{equation}

\begin{equation}\labeletpreuve{traversebsbh}{\begin{array}{rcl}
\tikzfig{boucletraverseUbasgrandbs}&=&\tikzfig{grandbsboucletraverseUhaut}
\end{array}}\end{equation}

\begin{equation}\labeletpreuve{bouclesrebondistraversecommutent}{\begin{array}{rcl}\tikzfig{bouclerebondisUboucletraverseV}&=&\tikzfig{boucletraverseVbouclerebondisU}\end{array}}\end{equation}

\begin{equation}\labeletpreuve{fusionboucletraverse}{\begin{array}{rcl}\tikzfig{boucletraverseUboucletraverseV}&=&\tikzfig{boucletraverseVU}
\end{array}}\end{equation}

\begin{equation}\labeletpreuve{fusionbouclerebondis}{\begin{array}{rcl}\tikzfig{bouclerebondisUbouclerebondisV}&=&\tikzfig{bouclerebondisVU}
\end{array}}\end{equation}
\end{lemma}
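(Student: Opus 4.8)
The plan is to derive the eleven equations in a suitable order, using only the axioms of Figure~\ref{axioms}, the structural congruence of a traced PROP (dinaturality, naturality of the trace in input and output, superposing, yanking, inverse law), and the auxiliary equations already established above, in particular \eqref{negneg}, \eqref{swapbsx}, \eqref{nbs}, \eqref{bsbsbsvar}, \eqref{swapbsbsvar} and \eqref{swapbsbsbvar}.

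First I would dispatch \eqref{bsxbsvar}: it is the $180^\circ$-rotated form of \eqref{bsxbs}, so its derivation is obtained by rotating the derivation of \eqref{bsxbs}, using \eqref{bsbsbsvar} in place of \eqref{bsbsbs} together with the inverse law and naturality of the swap --- exactly the device already used to pass from \eqref{bsbsbs} to \eqref{bsbsbsvar} and from \eqref{swapbsbsvar} to \eqref{swapbsbsbvar}. Then \eqref{loopI}: since the gate carries the identity, Equation~\eqref{idbox} turns it into a bare wire, after which the feedback loop is removed by dinaturality, naturality of the trace and yanking (equivalently, \eqref{duplicateloop} with $U=I$ followed by \eqref{loopemptysimple} to discard the resulting dead branch).

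Next comes the key equation \eqref{splituv}, the decomposition of the normal-form building block $E(U,V)$ (see \cref{decompNFsupperm}) into a ``rebound'' loop carrying $U$ precomposed with a ``traverse'' loop carrying $V$. I would unfold the definition of $E(U,V)$, use \eqref{bsuu} and \eqref{negu} to move the two matrices onto the reflected and transmitted branches of the beam splitter, and then split the single trace into two (nested) traces by dinaturality and \eqref{bsbs}; identifying the reflected branch with the rebound gadget and the transmitted branch with the traverse gadget yields the claim. The four sliding equations \eqref{rebondisbsb}, \eqref{traversebshb}, \eqref{rebondisbsh}, \eqref{traversebsbh} all have the same structure --- a one-wire loop gadget sitting on one input of a two-wire beam splitter commutes to the other side --- and are handled uniformly: use naturality of the trace (in input and output) and dinaturality to absorb the beam splitter into the trace, rearrange the two stacked beam splitters with \eqref{bsbs}, \eqref{bsnnnn}, \eqref{swapbsx} or \eqref{bsbsbs} (and relocate a polarisation flip with \eqref{nbs} in the bottom-wire cases), then pull the trace back out; the four cases differ only in which rearrangement axiom is invoked.

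Finally, \eqref{bouclesrebondistraversecommutent} follows because the rebound and traverse loops route the photon into disjoint polarisation sectors, so once both are written as traces, dinaturality and superposing let the two feedback loops be reordered, with \eqref{bsnnnn} and \eqref{bsbs} used to check the routing; and \eqref{fusionboucletraverse}, \eqref{fusionbouclerebondis} follow by using naturality of the trace to amalgamate the two loops into one, \eqref{fusionuvsimple} (or the already proved \eqref{fusionuv}) to multiply the now-adjacent gates, and \eqref{bsbs} together with the sliding equations above to clean up the duplicated beam splitters. The main obstacle is \eqref{splituv}: it is the only equation relating the opaque normal-form gadget $E(U,V)$ to the loop gadgets, so it requires the most careful tracking of the photon's reflected and transmitted branches and of how the single feedback loop must be split; the sliding equations \eqref{rebondisbsb}--\eqref{traversebsbh} are conceptually routine but are the most calculation-heavy step.
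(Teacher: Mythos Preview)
Your overall order and the rotation argument for \eqref{bsxbsvar} match the paper (the paper swaps \eqref{swapbsbsvar} for \eqref{swapbsbsbvar} rather than \eqref{bsbsbsvar} for \eqref{bsbsbs}, but that is cosmetic). Two points, however, are genuine gaps.

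First, your parenthetical ``or the already proved \eqref{fusionuv}'' is circular: in the paper \eqref{fusionuv} is derived \emph{from} \eqref{splituv}, \eqref{bouclesrebondistraversecommutent}, \eqref{fusionboucletraverse} and \eqref{fusionbouclerebondis}, not the other way round, so it is not available here.

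Second, and more importantly, your plan for \eqref{fusionboucletraverse} is missing the key step. After the two traverse loops are amalgamated into a single double trace, the gates $U$ and $V$ sit on \emph{different} feedback wires and are not adjacent; neither \eqref{fusionuvsimple} nor \eqref{bsbs} nor the sliding equations will bring them together. The paper's route is: use \eqref{traversebshb} to push one beam splitter inside, then apply \eqref{duplicateloop} to create a second copy of a gate on the same wire, then \eqref{bsuu} and \eqref{fusionuvsimple} to merge, then \eqref{bsbsbsvar} and \eqref{loopemptysimple} to discard the leftover dead loop. The appearance of \eqref{duplicateloop} here is essential and not guessable from your outline. (Your first approach to \eqref{loopI} has a similar problem: after \eqref{idbox} removes the gate you are left with $Tr(\text{beam splitter})$, which is not disposed of by yanking alone; the paper uses \eqref{duplicateloop}, \eqref{idbox}, \eqref{bsbs}, \eqref{loopemptysimple}, which your parenthetical alternative almost captures.)

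Finally, you overwork \eqref{bouclesrebondistraversecommutent}: in the paper it is a one-line corollary of \eqref{rebondisbsh}, since the rebound loop slides straight through the beam splitter of the traverse loop. Your sector-disjointness argument would also work but is unnecessary.
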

\begin{proof}\vspace{-0.16cm}
\phantomsection\label{preuvebsxbsvar}The proof of Equation \eqref{bsxbsvar} is obtained by rotating the proof of Equation \eqref{bsxbs} by 180° (it uses Equation \eqref{swapbsbsvar} instead of Equation \eqref{swapbsbsbvar}).\\\\

\phantomsection\label{preuveloopI}To prove \cref{loopI}, we have:
\begin{longtable}{RCL}
\tikzfig{boucletraverseI}&\eqeqref{duplicateloop}&\tikzfig{filtrerienIetI}\\\\
&\eqeqref{idbox}&\tikzfig{filtrerienrienetI}\\\\\\
&\eqeqref{bsbs}&\tikzfig{filsurbouclevideI}\\\\
&\eqeqref{loopemptysimple}&\tikzfig{filmoyen}
\end{longtable}

\phantomsection\label{preuvesplituv}To prove \cref{splituv}, we have:
\begin{longtable}{RCL}\!\!\!\!\!\!\!\!\!\tikzfig{bouclerebondisUboucletraverseV}&\overset{\text{naturality of}}{\overset{\text{the swap}}{=}}&\tikzfig{trace2bshbsbxhUmVb}\\\\
&\eqeqref{bsxbs}&\tikzfig{trace2bsbxhbsbUmVb}\\\\
&\overset{\text{dinaturality,}}{\overset{\text{naturality of}}{\overset{\text{the swap,}}{\overset{\text{yanking}}{=}}}}&\tikzfig{filtreUV}
\end{longtable}

\phantomsection\label{preuverebondisbsb}To prove \cref{rebondisbsb}, we have:
\hspace{-1em}\begin{longtable}{RCL}\tikzfig{bouclerebondisUbasgrandbs}&\eqeqref{swapbsx}&\tikzfig{tracexbbsbgrandbshUb}\\\\
&\eqeqref{swapbsbsbvar}&\tikzfig{tracexbxbgrandbshbsbxbUb}\\\\
&\eqexpl{\text{inverse law}}&\tikzfig{grandbsbouclerebondisUbas}\end{longtable}

\phantomsection\label{preuvetraversebshb}To prove \cref{traversebshb}, we have:
\begin{longtable}{RCL}\tikzfig{boucletraverseUhautgrandbs}&\overset{\text{axioms of}}{\overset{\text{the traced}}{\overset{\text{PROP}}{=}}}&\tikzfig{tracexbbshxbbshUb}\\\\
&\eqeqref{bsxbsvar}&\tikzfig{tracexbxbbshbsbUb}\\\\
&\eqexpl{\text{inverse law}}&\tikzfig{grandbsboucletraverseUbas}\end{longtable}

\phantomsection\label{preuverebondisbsh}To prove \cref{rebondisbsh}, we have:
\begin{longtable}{RCL}\tikzfig{bouclerebondisUhautgrandbs}&\overset{\text{inverse law},\eqref{swapbsx},}{\overset{\text{naturality of}}{\eqexpl{\text{the swap}}}}&\tikzfig{grandxbouclerebondisUbgrandbsgrandx}\\\\
&\eqeqref{rebondisbsb}&\tikzfig{grandxgrandbsbouclerebondisUbgrandx}\\\\
&\overset{\eqref{swapbsx},}{\overset{\text{naturality of}}{\overset{\text{the swap,}}{\eqexpl{\text{inverse law}}}}}&\tikzfig{grandbsbouclerebondisUhaut}\end{longtable}

\phantomsection\label{preuvetraversebsbh}To prove \cref{traversebsbh}, we have:
\begin{longtable}{RCL}\tikzfig{boucletraverseUbasgrandbs}&\overset{\text{inverse law},\eqref{swapbsx},}{\overset{\text{naturality of}}{\eqexpl{\text{the swap}}}}&\tikzfig{grandxboucletraverseUhgrandbsgrandx}\\\\
&\eqeqref{traversebshb}&\tikzfig{grandxgrandbsboucletraverseUbgrandx}\\\\
&\overset{\eqref{swapbsx},}{\overset{\text{naturality of}}{\overset{\text{the swap,}}{\eqexpl{\text{inverse law}}}}}&\tikzfig{grandbsboucletraverseUhaut}\end{longtable}

\phantomsection\label{preuvebouclesrebondistraversecommutent}Equation \eqref{bouclesrebondistraversecommutent} is a direct consequence of Equation \eqref{rebondisbsh}.\\\\

\phantomsection\label{preuvefusionboucletraverse}To prove \cref{fusionboucletraverse}, we have:
\begin{longtable}{RCL}\!\!\!\!\!\!\!\!\!\!\!\!\!\!\!\tikzfig{boucletraverseUboucletraverseV}&\eqeqref{traversebshb}&\tikzfig{trace2bshbsbVmUb}\\\\
&\eqeqref{duplicateloop}&\tikzfig{trace2bshbsbUbbsbVmVb}\\\\
&\eqdeuxeqref{bsuu}{fusionuvsimple}&\tikzfig{trace2bshbsbVmVUbbsb}\\\\
&\eqexpl{\text{dinaturality}}&\tikzfig{trace2bsbbshbsbVmVUb}\\\\
&\eqeqref{bsbsbsvar}&\tikzfig{trace2xbbshxbVmVUb}\\\\
&\overset{\text{dinaturality},}{\overset{\text{naturality of}}{\overset{\text{the swap,}}{\eqexpl{\text{inverse law}}}}}&\tikzfig{bouclevideUdansboucletraverseVU}\\\\
&\eqeqref{loopemptysimple}&\tikzfig{boucletraverseVU}
\end{longtable}

\phantomsection\label{preuvefusionbouclerebondis}To prove \cref{fusionbouclerebondis}, we have:
\begin{longtable}{RCL}\!\!\!\!\!\!\!\tikzfig{bouclerebondisUbouclerebondisV}&\eqeqref{bsnnnn}&\tikzfig{boucletraversennnnUboucletraversennnnV}\\\\
&\overset{\text{dinaturality},}{\eqexpl{\eqref{negu},\eqref{negneg}}}&\tikzfig{negboucletraverseUboucletraverseVneg}\\\\
&\eqeqref{fusionboucletraverse}&\tikzfig{negboucletraverseVUneg}\\\\
&\overset{\eqref{negneg},}{\overset{\text{dinaturality},}{\eqexpl{\eqref{negu},\eqref{bsnnnn}}}}&\tikzfig{bouclerebondisVU}
\end{longtable}

\end{proof}

Then we are ready to prove the last three equations:

\phantomsection\label{preuvebsuvh}To prove Equation \eqref{bsuvh}, we have:
\begin{longtable}{RCL}\tikzfig{grandbsfiltreUVhaut}&\eqeqref{splituv}&\tikzfig{grandbsbouclerebondisUhboucletraverseVh}\\\\
&\eqdeuxeqref{rebondisbsh}{traversebsbh}&\tikzfig{bouclerebondisUhboucletraverseVbgrandbs}\\\\
&\eqquatreeqref{loopI}{splituv}{duplicateloop}{idbox}&\tikzfig{filtreUIhautfiltreIVbasgrandbs}\end{longtable}

\phantomsection\label{preuvebsuvb}\cref{bsuvb} is proved in the same way as \cref{bsuvh}, using Equations \eqref{rebondisbsb} and \eqref{traversebshb} instead of \eqref{rebondisbsh} and \eqref{traversebsbh}.\\\\

\phantomsection\label{preuvefusionuv}To prove Equation \eqref{fusionuv}, we have:
{\small{\tikzset{tikzfig/.style={baseline=-0.25em,scale=\echellefils,xscale=0.80455,every node/.style={scale=0.8}}}\begin{longtable}{RCL}\tikzfig{filtreUVfiltreUprimeVprime}&\eqeqref{splituv}&\tikzfig{brUbtVbrUprimebtVprime}\\\\
&\eqeqref{bouclesrebondistraversecommutent}&\tikzfig{brUbrUprimebtVbtVprime}\\\\
&\eqdeuxeqref{fusionbouclerebondis}{fusionboucletraverse}&\tikzfig{bouclerebondisUprimeUboucletraverseVprimeV}\\\\
&\eqeqref{splituv}&\tikzset{tikzfig/.style={baseline=-0.25em,scale=\echellefils,every node/.style={scale=0.8}}}\tikzfig{filtreUprimeUVprimeV}\end{longtable}}}
\endgroup

\paragraph{Proof of \cref{NFtrace}}\label{proofcomptrace}

We write $N$ in the form
\[\tikzfig{typeAfiltreUV}\quad\text{or}\quad\tikzfig{typeCfiltreUV}.\]
As in Section \ref{proofcompleteness_comp}, we call these two forms type A and B respectively.

We proceed by case distinction:

\begin{itemize}
\item If $N$ is of type A and $j=n-1$, then we apply one of the following two equations, that we prove below to be consequences of the axioms of the PBS-calculus:
\begin{equation}\labeletpreuve{loopempty}{\tikzset{tikzfig/.style={baseline=-0.25em,scale=\echellefils,every node/.style={scale=0.8},borddiagrammevide/.style={-, dash pattern=on \traitsdiagrammevide off \traitsdiagrammevide on \traitsdiagrammevide off \traitsdiagrammevide on \traitsdiagrammevide off 0em}}}\begin{array}{rcl}
\tikzfig{bouclevidefiltreUV}&=&\tikzfig{diagrammevide}
\end{array}}\end{equation}
\begin{equation}\labeletpreuve{loopnegempty}{\tikzset{tikzfig/.style={baseline=-0.25em,scale=\echellefils,every node/.style={scale=0.8},borddiagrammevide/.style={-, dash pattern=on \traitsdiagrammevide off \traitsdiagrammevide on \traitsdiagrammevide off \traitsdiagrammevide on \traitsdiagrammevide off 0em}}}\begin{array}{rcl}
\tikzfig{bouclevidefiltreUVneg}&=&\tikzfig{diagrammevide}\end{array}}\end{equation}
\item If $N$ is of type A and $j\neq n-1$, then we slide the $E(U,V)$ and the\tikzfig{negpotentiel-s} through the trace and $\sigma_j$, then integrate them to $N'$ by \cref{NFajout}. Finally, we remove the trace by yanking, which gives us a normal form after a few additional manipulation of wires.
\item If $N$ is of type B and $k=n-1$, then we apply one of the following two equations:
\tikzset{tikzfig/.style={baseline=-0.25em,scale=\echellefils,every node/.style={scale=0.8}}}
\begin{equation}\labeletpreuve{truvbs}{\begin{array}{rcl}
\tikzfig{filtreUVdansbouclebs}&=&\tikzfig{filtreIV}
\end{array}}\end{equation}
\begin{equation}\labeletpreuve{truvbsnb}{\begin{array}{rcl}
\tikzfig{filtreUVdansbouclebsnb}&=&\tikzfig{filtreIVU}
\end{array}}\end{equation}
then we conclude by \cref{NFajout} and manipulation of wires.
\item If  $N$ is of type B, $k<n-1$ and $j=n-2$, then we apply one of the following two equations:
\begin{equation}\labeletpreuve{truvbsx}{\begin{array}{rcl}
\tikzfig{filtreUVdansbouclebsx}&=&\tikzfig{filtreUI}
\end{array}}\end{equation}
\begin{equation}\labeletpreuve{truvbsnhx}{\begin{array}{rcl}
\tikzfig{filtreUVdansbouclebsnhx}&=&\tikzfig{filtreUVI}
\end{array}}\end{equation}
then we conclude by \cref{NFajout} and manipulation of wires.
\item If N is of type B, $k<n-1$ and $j<n-2$, let\!\tikzfig{Dgenerique} represent $E(U,V)$. We proceed as follows:
\begin{longtable}{RCL}\tikzfig{traceDgeneriquebbsbxhxb}&\overset{\text{dinaturality},}{\overset{\text{naturality of}}{\eqexpl{\text{the swap}}}}&\tikzfig{traceDgeneriquehbsbxhxb}\\\\
&\overset{\text{naturality of}}{\eqexpl{\text{ the swap}}}&\tikzfig{traceDgeneriquehxhxbbsh}\\\\
&\overset{\text{yanking}}{=}&\tikzfig{Dgeneriquehxbs}\end{longtable}
then we conclude by applying \cref{NFajout} three times and manipulating the wires.
\end{itemize}

\begingroup
\tikzset{tikzfig/.style={baseline=-0.25em,scale=\echellefils,every node/.style={scale=0.8},borddiagrammevide/.style={-, dash pattern=on 3.699pt off 3.699pt on 3.699pt off 3.699pt on 3.699pt off 0pt}}}
It remains to prove Equations \eqref{loopempty} to \eqref{truvbsnhx}.

\phantomsection\label{preuveloopempty}To prove \cref{loopempty}, we have:
\begin{longtable}{RCL}\!\!\!\!\!\!\!\!\!\tikzfig{bouclevidefiltreUV}&\eqexpl{\text{dinaturality}}&\tikzfig{bouclevideUhVbbsbs}\\\\
&\eqeqref{bsbs}&\tikzfig{bouclevideVdansbouclevideU}\\\\
&\eqeqref{loopemptysimple}&\tikzfig{diagrammevide}\end{longtable}

\phantomsection\label{preuveloopnegempty}To prove \cref{loopnegempty}, we have:
\begin{longtable}{RCL}\!\!\!\!\!\!\!\!\!\tikzfig{bouclevidefiltreUVneg}&\eqexpl{\text{dinaturality}}&\tikzfig{bouclevideUhVbbsnhbs}\\\\
&\eqeqref{bsnbsh}&\tikzfig{bouclevideUhVbnhbsnh}\\\\
&\eqexpl{\text{dinaturality}}&\tikzfig{bouclevidenhUhVbnhbs}\\\\
&\eqdeuxeqref{negu}{negneg}&\tikzfig{bouclevideUhVbbs}\\\\
&\eqeqref{duplicateloop}&\tikzfig{bouclevideUhbsVbbsUb}\\\\
&\eqexpl{\text{dinaturality}}&\tikzfig{bouclevideUhUbbsVbbs}\\\\
&\eqdeuxeqref{bsuu}{fusionuvsimple}&\tikzfig{bouclevidebsUhVUbbs}\\\\
&\eqeqref{loopempty}&\tikzfig{diagrammevide}\end{longtable}

\phantomsection\label{preuvetruvbs}To prove \cref{truvbs}, we have:
\begin{longtable}{RCL}\!\!\!\!\!\!\!\!\!\tikzfig{filtreUVdansbouclebs}&\eqeqref{swapbsbsbvar}&\tikzfig{trace2bsbUmVbxbbshbsbxb}\\\\
&\eqexpl{\text{dinaturality}}&\tikzfig{trace2bsbxbbsbUmVbxbbsh}\\\\
&\overset{\eqref{swapbsx},\eqref{bsbs},}{\overset{\text{inverse law}}{=}}&\tikzfig{trace2VmUbbsh}\\\\
&\eqtroiseqref{loopemptysimple}{duplicateloop}{idbox}&\tikzfig{filtreIV}\end{longtable}

\phantomsection\label{preuvetruvbsnb}To prove \cref{truvbsnb}, we have:
\begingroup
\tikzset{tikzfig/.style={baseline=-0.25em,scale=\echellefils,xscale=0.92222,every node/.style={scale=0.8}}}
\begin{longtable}{RCL}\tikzfig{filtreUVdansbouclebsnb}
&\eqeqref{swapbsbsbvar}&\tikzfig{trace2bsbUmVbxbbshbsbxbnm}\\\\
&\eqexpl{\text{dinaturality}}&\tikzfig{trace2bsbxbnmbsbUmVbxbbsh}\\\\
&\overset{\eqref{swapbsx},\eqref{bsnbsh},}{\overset{\text{naturality of}}{\overset{\text{the swap},\eqref{swapbsx},}{\eqexpl{\text{inverse law}}}}}&\tikzfig{trace2nbbsbnbVmUbbsh}\\\\
&\overset{\text{dinaturality},}{\overset{\eqref{negu},\eqref{negneg}}{=}}&\tikzfig{trace2bsbVmUbbsh}\\\\
&\overset{\eqref{duplicateloop}}{=}&\tikzfig{trace2bsbUbbsbVmVbbsh}\\\\
&\eqdeuxeqref{bsuu}{fusionuvsimple}&\tikzfig{filtreVVUdansbouclebs}\\\\
&\eqeqref{truvbs}&\tikzfig{filtreIVU}\end{longtable}
\endgroup

\phantomsection\label{preuvetruvbsx}To prove \cref{truvbsx}, we have:
\begin{longtable}{RCL}\tikzfig{filtreUVdansbouclebsx}
&\eqeqref{bsnnnn}&\tikzfig{trace2bsbUmVbbsbnhnmbshnhnm}\\\\
&\overset{\text{dinaturality,}}{\overset{\eqref{neguv},\eqref{negneg}}{=}}&\tikzfig{trace2bsbVmUbbsbnhbshnh}\\\\
&\eqeqref{truvbs}&\tikzfig{negfiltreIUneg}\\\\
&\eqdeuxeqref{neguv}{negneg}&\tikzfig{filtreUI}\end{longtable}

\phantomsection\label{preuvetruvbsnhx}To prove \cref{truvbsnhx}, we have:
\begingroup
\tikzset{tikzfig/.style={baseline=-0.25em,scale=\echellefils,xscale=0.9827,every node/.style={scale=0.8}}}
\begin{longtable}{RCL}\tikzfig{filtreUVdansbouclebsnhx}
&\overset{\text{naturality of}}{\overset{\text{the swap,}}{\eqexpl{\eqref{bsnnnn},\eqref{negneg}}}}&
\tikzfig{trace2bsbUmVbbsbnhnmbshnh}\\\\
&\overset{\eqref{neguv},}{\overset{\text{dinaturality}}{=}}&\tikzfig{trace2bsbUmVbbsbnhbshnhnm}\\\\
&\eqeqref{truvbsnb}&\tikzfig{negfiltreIUVneg}\\\\
&\eqdeuxeqref{neguv}{negneg}&\tikzfig{filtreUVI}\end{longtable}
\endgroup

\paragraph{Normal forms of the generators} 
\label{proofNFGenerators}
The following equations are consequences of the axioms of the PBS-calculus, and allow us to put the generators in normal form:
\begin{equation}\labeletpreuve{spliti}{\begin{array}{rcl}\tikzfig{filcourt}&=&\tikzfig{filtreIIcentresurfil}
\end{array}}\end{equation}
\begin{equation}\label{NFneg}\begin{array}{rcl}\tikzfig{neg}&=&\tikzfig{filtreIInegcentresurfil}
\end{array}\end{equation}
\begin{equation}\labeletpreuve{splitu}{\begin{array}{rcl}\tikzfig{gateU}&=&\tikzfig{filtreUUcentresurfil}
\end{array}}\end{equation}
\begin{equation}\label{NFswap}\begin{array}{rcl}\tikzfig{swap}&=&\tikzfig{filtreIIhautfiltreIIbasgrandx}
\end{array}\end{equation}
\begin{equation}\label{NFbs}\begin{array}{rcl}\tikzfig{beamsplitter}&=&\tikzfig{filtreIIhautfiltreIIbasgrandbs}
\end{array}\end{equation}

\phantomsection\label{preuvesplitu}To prove Equation \eqref{splitu}, we have:
\begin{longtable}{RCL}\tikzfig{gateU}&\eqeqref{loopemptysimple}&\tikzfig{gateUsurbouclevideU}\ \ \eer{bsbs}\ \ \tikzfig{traceUhUbbsbs}\\\\
&\eqeqref{bsuu}&\tikzfig{filtreUU}\end{longtable}

\phantomsection\label{preuvespliti}To prove Equation \eqref{spliti}, we have:
\begin{longtable}{RCL}\tikzfig{filcourt}&\eqeqref{idbox}&\tikzfig{gateI}\ \ \eer{splitu}\ \ \tikzfig{filtreIIcentresurfil}\end{longtable}

Equations \eqref{NFneg}, \eqref{NFswap} and \eqref{NFbs} are direct consequences of Equation \eqref{spliti}.
\endgroup

\subsubsection{Proof of Theorem \ref{minimality}}\label{proofmin}

We prove for each equation that it is not a consequence of the others in a dedicated lemma. For Equations \eqref{idbox}, \eqref{negu}, \eqref{bsnnnn} and \eqref{fusionuvsimple}, the proof follows a common pattern: we introduce a new semantics $\varinterp{.}$ and check that it preserves every equation except the one that we want to prove to be independent from the others. In each case, \cref{tracedpropfunctorcongruence} gives us that the consequences of the preserved equations are preserved too, which proves that the unpreserved equation is not a consequence of the others.

\begin{lemma}\label{tracedpropfunctorcongruence}
Let $\varinterp .$ be a function mapping any diagram $D:n\to n$ to a linear map $\varinterp{D}\in\S_n$, defined inductively in the same way as $\interp.$ except maybe in the case of \tikzfig{beamsplitter-xs}, \tikzfig{neg-xs} and \tikzfig{gateU-xs}. Let $A$ be a set of equations of the form $D_1=D_2$ where $D_1,D_2$ are $\textup{PBS}$-diagrams, such that every equation of $A$ is preserved by $\varinterp.$ (that is, for every equation $D_1=D_2$ in $A$ we have $\varinterp{D_1}=\varinterp{D_2}$). Then $A$ is sound with respect to $\varinterp.$, that is, for any two diagrams $D_1,D_2:n\to n$, if $A\vdash D_1=D_2$ then $\varinterp{D_1}=\varinterp{D_2}$.
\end{lemma}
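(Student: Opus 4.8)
The plan is to run the same argument as in the soundness proof of \cref{soundnessoftheequations}. Let $\mathrel{\mathcal R}$ be the relation on diagrams defined by $D_1\mathrel{\mathcal R}D_2$ iff $\varinterp{D_1}=\varinterp{D_2}$. I would show that $\mathrel{\mathcal R}$ is a congruence in the sense of \cref{defcongruence} and that it contains every equation of Figure \ref{fig:TracedProp} together with every equation of $A$. Since $A\vdash\cdot=\cdot$ is defined exactly like $\textup{PBS}\vdash\cdot=\cdot$, with $A$ in place of Figure \ref{axioms} --- that is, as the smallest congruence satisfying the equations of Figure \ref{fig:TracedProp} and of $A$ --- this gives $A\vdash D_1=D_2\Rightarrow D_1\mathrel{\mathcal R}D_2$, i.e.\ $\varinterp{D_1}=\varinterp{D_2}$, which is the claim.

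First I would note that $\mathrel{\mathcal R}$ is obviously an equivalence relation, and that it is a congruence because, by hypothesis, $\varinterp{.}$ has the same inductive clauses as $\interp{.}$ on the three operations: $\varinterp{D_2\circ D_1}=\varinterp{D_2}\circ\varinterp{D_1}$, $\varinterp{D_1\otimes D_2}=\varinterp{D_1}\boxplus\varinterp{D_2}$ and $\varinterp{Tr(D)}=\T(\varinterp{D})$. From these identities it is immediate that $\varinterp{D_1}=\varinterp{D_1'}$ and $\varinterp{D_2}=\varinterp{D_2'}$ imply $\varinterp{D_2\circ D_1}=\varinterp{D_2'\circ D_1'}$ and $\varinterp{D_1\otimes D_2}=\varinterp{D_1'\otimes D_2'}$, and that $\varinterp{D}=\varinterp{D'}$ implies $\varinterp{Tr(D)}=\varinterp{Tr(D')}$.

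Then I would verify that $\mathrel{\mathcal R}$ contains the two relevant families of equations. For $A$ this is exactly the hypothesis that every equation of $A$ is preserved by $\varinterp{.}$. For Figure \ref{fig:TracedProp}, the key observation is that each of those equations is built solely from generic sub-diagrams, identity wires, the swap \tikzfig{swap-xs}, and the operations $\circ$, $\otimes$, $Tr$ --- it never involves \tikzfig{beamsplitter-xs}, \tikzfig{neg-xs} or \tikzfig{gateU-xs}. Since $\varinterp{.}$ coincides with $\interp{.}$ on the swap, the identity and the empty diagram and uses the same $\circ$, $\boxplus$, $\T$, the computation showing that the two sides of each such equation have the same image under $\varinterp{.}$ is word-for-word the (straightforward) one already carried out for $\interp{.}$; in particular it is independent of how \tikzfig{beamsplitter-xs}, \tikzfig{neg-xs} and \tikzfig{gateU-xs} are interpreted. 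The same observation is also what guarantees that $\varinterp{.}$ is, as stated, a well-defined function on diagrams taken modulo the structural congruence. Putting the three points together yields the result.

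I do not expect any real obstacle: this is the routine ``a sound set of axioms has only sound consequences'' pattern, already instantiated for $\interp{.}$ in the proof of \cref{soundnessoftheequations}. The only point that deserves to be spelled out is the remark that the structural congruence of Figure \ref{fig:TracedProp} never mentions the three generators that $\varinterp{.}$ is allowed to re-interpret, so that soundness of $\varinterp{.}$ with respect to the structural congruence holds automatically, with no case-by-case recheck for each particular $\varinterp{.}$.
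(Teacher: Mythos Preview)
Your proposal is correct and follows essentially the same approach as the paper: define the kernel relation of $\varinterp{.}$, show it is a congruence because the inductive clauses for $\circ$, $\otimes$, $Tr$ depend only on the images of the subdiagrams, and conclude by minimality of $A\vdash\cdot=\cdot$. The only cosmetic difference is that the paper packages your explicit check of the traced-PROP equations into the one-line remark that ``the same proof as for $\interp{.}$ shows that $\varinterp{.}$ is well-defined'' (i.e.\ descends to diagrams modulo structural congruence), after which $A\vdash\cdot=\cdot$ is simply the smallest congruence containing $A$; your version unfolds this by noting that Figure~\ref{fig:TracedProp} never mentions the three re-interpretable generators, which amounts to the same thing.
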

\begin{proof}\vspace{-0.16cm}
The same proof as for $\interp.$ shows that $\varinterp.$ is well-defined.

By definition, $A\vdash .=.$ is the smallest congruence satisfying the equations of $A$. Since $\varinterp{D_2\circ D_1}$ and $\varinterp{D_1\otimes D_2}$ only depend on $\varinterp{D_1}$ and $\varinterp{D_2}$, and $\varinterp{Tr(D)}$ only depends on $\varinterp{D}$, the relation $\sim$, defined as $D_1\sim D_2$ if and only if $\varinterp{D_1}=\varinterp{D_2}$, is a congruence. Therefore, it contains $A\vdash .=.$, which is what we wanted to prove.
\end{proof}

\begin{lemma}\label{independanceidbox}
Equation \eqref{idbox} is not a consequence of Equations \eqref{negu} to \eqref{bsnbsh}.
\end{lemma}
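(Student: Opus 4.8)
The plan is to apply the recipe described just before \cref{tracedpropfunctorcongruence}. Write $A$ for the set of Equations \eqref{negu} to \eqref{bsnbsh}; I will exhibit a variant semantics $\varinterp{\cdot}$ that preserves every equation of $A$ but not Equation \eqref{idbox}. Then \cref{tracedpropfunctorcongruence} makes $A$ sound with respect to $\varinterp{\cdot}$, so $A\vdash\tikzfig{filcourt-s2}=\tikzfig{gateI-s}$ is impossible, which is exactly the statement.

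For $\varinterp{\cdot}$ I would take the semantics defined by the same induction as $\interp{\cdot}$ but with the single change $\varinterp{\,\tikzfig{gateU-s}\,} := 0$ (the zero map, independently of $U$); equivalently, $\varinterp{D} = \interp{\widehat D}$, where $\widehat D$ is $D$ with every gate relabelled by the zero matrix $0\in\mathbb C^{q\times q}$. This immediately gives that $\varinterp{\cdot}$ is well-defined and that $\varinterp{D}\in\S_n$ (via \cref{welldefinednessofdenotationalsemantics}), so $\varinterp{\cdot}$ has the shape required by \cref{tracedpropfunctorcongruence}, which only permits changes at \tikzfig{beamsplitter-xs}, \tikzfig{neg-xs} and \tikzfig{gateU-xs}.

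Next I would check that every equation $D_1=D_2$ of $A$ is preserved. The key point is that the relabelled pair $\widehat{D_1}=\widehat{D_2}$ is again an instance of the same axiom scheme: nothing changes in the gate-free axioms \eqref{bsnnnn}, \eqref{bsbs}, \eqref{bsbsbs}, \eqref{bsnbsh}; for \eqref{negu}, \eqref{bsuu}, \eqref{duplicateloop}, \eqref{loopemptysimple} it is the instance in which all matrix parameters are set to $0$; and for \eqref{fusionuvsimple} it is the instance $U=V=0$, which is a valid instance precisely because $0\cdot 0=0$. Hence $\varinterp{D_1}=\interp{\widehat{D_1}}=\interp{\widehat{D_2}}=\varinterp{D_2}$ by soundness (\cref{soundnessoftheequations}). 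On the other hand, $\varinterp{\,\tikzfig{filcourt-s2}\,}$ is the identity map on $\mathcal H_1$ whereas $\varinterp{\,\tikzfig{gateI-s}\,}=0$, and these differ because $\mathcal H_1\neq\{0\}$ (since $q\geq 1$); thus \eqref{idbox} is not preserved, which concludes. The only slightly delicate step is the bookkeeping that relabelling all gates by $0$ turns each axiom of $A$ into another valid instance — notably for \eqref{fusionuvsimple}, where multiplicativity of the collapse map ($0\cdot 0=0$) is exactly what is needed; everything else is routine.
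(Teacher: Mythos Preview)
Your proposal is correct and is essentially the paper's own proof: both choose the variant semantics that interprets every gate by the zero map and then invoke \cref{tracedpropfunctorcongruence}. The only cosmetic difference is that you package the axiom checks via the relabelling $\varinterp{D}=\interp{\widehat D}$ and appeal to soundness, whereas the paper verifies each axiom of $A$ by a short direct computation; both arguments amount to the same verification.
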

\begin{proof}\vspace{-0.16cm}
Let us define $\varinterp .$ inductively in the same way as $\interp .$, except in the case of\tikzfig{gateU-xs}, for which we define $\varinterp{\tikzfig{gateU-xs}}:\H_1\to\H_1::\ket{c,0,y}\mapsto0$.\\
Equations \eqref{negu}, \eqref{bsuu} and \eqref{fusionuvsimple} are preserved by $\varinterp.$ because both sides are interpreted by the zero map. Equation \eqref{loopemptysimple} is preserved because both side are interpreted by the unique map $\H_0\to\H_0$. Equation \eqref{duplicateloop} is preserved because both sides are interpreted as $\begin{cases}\ket{\rightarrow,0,y}&\mapsto\ket{\rightarrow,0,y}\\\ket{\uparrow,0,y}&\mapsto 0\end{cases}$. Finally, Equations \eqref{bsnnnn}  and \eqref{bsbs} to \eqref{bsnbsh} are preserved because both sides are interpreted in the same way as by $\interp .$. As a consequence, by \cref{tracedpropfunctorcongruence}, all consequences of equations \eqref{negu} to \eqref{bsnbsh} are preserved by $\varinterp.$. By contrast, Equation \eqref{idbox} is not preserved by $\varinterp.$ because one side is interpreted by the identity whereas the other side is interpreted by the zero map. Hence, Equation \eqref{idbox} is not a consequence of Equations \eqref{negu} to \eqref{bsnbsh}.
\end{proof}

\begin{lemma}
If $U\neq I$, then Equation \eqref{negu} is not a consequence of Equations \eqref{idbox} and \eqref{bsuu} to \eqref{bsnbsh}.
\end{lemma}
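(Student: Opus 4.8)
The plan is to reuse the pattern of \cref{independanceidbox}: I would introduce an alternative semantics $\varinterp{.}$ that agrees with $\interp{.}$ on every generator except the gate, and show that $\varinterp{.}$ preserves all of \eqref{idbox} and \eqref{bsuu}--\eqref{bsnbsh} but not \eqref{negu}. Concretely, I would define $\varinterp{\tikzfig{gateU-xs}}\colon\H_1\to\H_1$ by $\ket{\rightarrow,0,y}\mapsto\ket{\rightarrow,0}\otimes U\ket y$ and $\ket{\uparrow,0,y}\mapsto\ket{\uparrow,0,y}$, so that a gate acts as its matrix on a horizontally polarised particle and as the identity on a vertically polarised one. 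Since this map lies in $\S_1$, \cref{tracedpropfunctorcongruence} guarantees that $\varinterp{.}$ is well defined, and it then suffices to verify the two points above.

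That \eqref{negu} is not preserved is a one-line computation: the two sides are $\tikzfig{neg-xs}$ and $\tikzfig{gateU-xs}$ composed in the two orders, so on $\ket{\rightarrow,0,x}$ one side first applies the gate, to a horizontal particle, and then the flip, yielding $\ket{\uparrow,0}\otimes U\ket x$, whereas the other first applies the flip and then the gate, now to a vertical particle which does nothing, yielding $\ket{\uparrow,0,x}$; as $U\neq I$ these disagree on some basis vector. For the remaining nine axioms: \eqref{bsnnnn}, \eqref{bsbs}, \eqref{bsbsbs} and \eqref{bsnbsh} contain no gate, so $\varinterp{.}$ coincides with $\interp{.}$ on both sides and soundness of $\interp{.}$ (\cref{soundnessoftheequations}) applies; \eqref{loopemptysimple} is an equality of maps $\H_0\to\H_0$ with $\H_0=\{0\}$, so both sides are the zero map; \eqref{idbox} holds because $\varinterp{\tikzfig{gateI-s}}$ is the identity of $\H_1$; \eqref{fusionuvsimple} holds because on the horizontal branch the two successive gates contribute the product $VU$ and on the vertical branch they contribute $I$, exactly as the single gate $\tikzfig{gateVU-s}$ does; and \eqref{bsuu} holds because a gate commutes with a beam splitter no matter how it acts on the data register, the beam splitter leaving the polarisation untouched.

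The only axiom needing a genuine argument, and the step I expect to be the main obstacle, is \eqref{duplicateloop}. Here I would use that neither $\tikzfig{filtrerienUetV-s}$ nor $\tikzfig{boucletraverseU-s}$ contains a polarisation flip, so along every path the polarisation is constant; together with soundness of $\interp{.}$, which forces the two diagrams to have the same path semantics, this implies that in each of them the gate $V$ is never visited while the gate $U$ is visited exactly once, by the branch of one fixed polarisation $c_0$ and never by the other. Consequently, under $\varinterp{.}$ both sides fix $\ket{\bar c_0,0,x}$, and both send $\ket{c_0,0,x}$ to $\ket{c_0,0}\otimes U\ket x$ when $c_0=\rightarrow$ and to $\ket{c_0,0,x}$ when $c_0=\uparrow$; in either case the two sides coincide, so \eqref{duplicateloop} is preserved. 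With all nine axioms preserved, \cref{tracedpropfunctorcongruence} gives that every consequence of \eqref{idbox} and \eqref{bsuu}--\eqref{bsnbsh} is preserved by $\varinterp{.}$, whereas \eqref{negu} is not; therefore \eqref{negu} is not a consequence of the others.
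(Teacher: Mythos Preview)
Your proposal is correct and follows essentially the same approach as the paper: an alternative semantics $\varinterp{.}$ that modifies only the gate, letting it act as $U$ on one polarisation and as the identity on the other. The paper makes the dual choice to yours---it sets $\varinterp{\tikzfig{gateU-xs}}:=\interp{\tikzfig{boucletraverseU}}$, which acts as the identity on $\rightarrow$ and as $U$ on $\uparrow$---but this is an immaterial symmetry, and the verification of \eqref{idbox} and \eqref{bsuu}--\eqref{bsnbsh} goes through identically; for \eqref{duplicateloop} in particular the paper simply appeals to the graphical path characterisation rather than your slightly indirect inference via soundness, but your reasoning there is also valid.
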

\begin{proof}\vspace{-0.16cm}
Let us define $\varinterp .$ inductively in the same way as $\interp .$, except in the case of\tikzfig{gateU-xs}, for which we define $\varinterp{\tikzfig{gateU-xs}}:=\interp{\tikzfig{boucletraverseU}}=\begin{cases}\ket{\rightarrow,0,y}&\mapsto\ket{\rightarrow,0,y}\\\ket{\uparrow,0,y}&\mapsto\ket{\uparrow,0}\otimes U\ket y\end{cases}$.

Equation \eqref{negu} is not satisfied unless $U=I$, because the left-hand side is interpreted as $\begin{cases}\ket{\rightarrow,0,y}&\mapsto\ket{\uparrow,0}\otimes U\ket{y}\\\ket{\uparrow,0,y}&\mapsto\ket{\rightarrow,0,y}\end{cases}$ whereas the right-hand side is interpreted as\smallskip\linebreak $\begin{cases}\ket{\rightarrow,0,y}&\mapsto\ket{\uparrow,0,y}\\\ket{\uparrow,0,y}&\mapsto\ket{\rightarrow,0}\otimes U\ket{y}\end{cases}$. By using the graphical characterisation of the denotational semantics adapted to $\varinterp.$, it is easy to check that Equations \eqref{idbox} and \eqref{bsuu} to \eqref{bsnbsh} are preserved by $\varinterp.$. By \cref{tracedpropfunctorcongruence}, this implies that all consequences of these equations are preserved by $\varinterp.$, so that Equation \eqref{negu} is not a consequence of them.
\end{proof}

\begin{lemma}
If $\det(U)\neq 1$, then Equation \eqref{bsuu} is not a consequence of Equations \eqref{idbox}, \eqref{negu} and \eqref{duplicateloop} to \eqref{bsnbsh}.
\end{lemma}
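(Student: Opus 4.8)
The plan is to reuse the device of \cref{tracedpropfunctorcongruence}, just as in \cref{independanceidbox} and the lemmas that follow: I will exhibit an alternative interpretation $\varinterp{\cdot}$ of PBS-diagrams that coincides with $\interp{\cdot}$ except on the generators $\tikzfig{gateU-xs}$, $\tikzfig{beamsplitter-xs}$ and $\tikzfig{neg-xs}$, that is preserved by each of \eqref{idbox}, \eqref{negu}, \eqref{duplicateloop}, \eqref{fusionuvsimple}, \eqref{loopemptysimple}, \eqref{bsbs}, \eqref{bsbsbs}, \eqref{bsnnnn}, \eqref{bsnbsh}, but not by \eqref{bsuu} when $\det U \neq 1$. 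By \cref{tracedpropfunctorcongruence} every consequence of the nine preserved equations is then preserved by $\varinterp{\cdot}$, hence \eqref{bsuu} is not such a consequence.

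The first ingredient is a determinant-sensitive reading of the gate. I would set $\varinterp{\tikzfig{gateU-xs}} := \ket{c,0,x}\mapsto \ket{c,0}\otimes\Phi(U)\ket x$, where $\Phi$ is a unital monoid endomorphism of $(\mathbb C^{q\times q},\times)$ depending on $U$ only through $\det U$ --- a natural candidate being $\Phi(U):=\mathrm{diag}(\det U,1,\ldots,1)$, so that $\Phi(I)=I$ and $\Phi(VU)=\Phi(V)\Phi(U)$. Unitality gives \eqref{idbox} and the homomorphism property gives \eqref{fusionuvsimple}. But changing the gate alone is not enough: $\varinterp{\cdot}$ would then just be $\interp{\cdot}$ with every gate label $U$ replaced by $\Phi(U)$, and \eqref{bsuu} for $U$ would reduce to \eqref{bsuu} for $\Phi(U)$, which is sound for $\interp{\cdot}$. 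So the beam splitter must also act on the data register, and --- since \eqref{bsuu} contains no negation --- one is free to adjust the negation as well, but only so as to leave the remaining equations intact.

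Concretely I would take $\varinterp{\tikzfig{beamsplitter-xs}}$ to be the usual beam splitter followed by the action on the data register of a fixed permutation matrix $B$ not commuting with the diagonal matrices $\Phi(U)$ (for instance a transposition that moves the first coordinate), and adjust $\varinterp{\tikzfig{neg-xs}}$ by a data-register matrix so that the relations between the data-actions of the beam splitter and of the negation that are forced by \eqref{bsnnnn} and \eqref{bsnbsh} are met, while keeping $\Phi(U)$ commuting with the negation's data-action, as needed for \eqref{negu}. A convenient choice keeps all these matrices diagonal or permutation matrices, so that $B^2=I$ and the required commutations hold by inspection. Then $\varinterp{\tikzfig{beamsplitter-xs}}\in\S_2$, $\varinterp{\cdot}$ is well defined and valued in $\S_n$ by the same induction as for $\interp{\cdot}$, and \eqref{bsbs}, \eqref{bsbsbs}, \eqref{bsnnnn}, \eqref{bsnbsh}, \eqref{loopemptysimple}, \eqref{duplicateloop} are preserved: the extra data-register factors contributed by the beam splitters cancel in pairs along each path (using $B^2=I$), and on any branch where a gate or a beam splitter differs between the two sides of \eqref{duplicateloop} or \eqref{loopemptysimple} that branch carries no photon, exactly as in the standard semantics. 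For \eqref{bsuu}, by contrast, the left-hand side maps $\ket{c,p,x}$ to $\ket{\tau(c,p)}\otimes B\,\Phi(U)\ket x$ (gate, then beam splitter) and the right-hand side to $\ket{\tau(c,p)}\otimes\Phi(U)\,B\ket x$ (beam splitter, then gate), and $B\,\Phi(U)=\Phi(U)\,B$ holds precisely when $\det U=1$; hence \eqref{bsuu} is not preserved when $\det U\neq 1$, which proves the claim.

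The main obstacle is the coordinated design of the modified $\tikzfig{beamsplitter-xs}$ and $\tikzfig{neg-xs}$: their data-actions must be strong enough that the residual non-commutativity shows up in \eqref{bsuu} exactly under the condition $\det U\neq 1$, yet weak enough to leave the purely beam-splitter equations \eqref{bsbs}, \eqref{bsbsbs} and, above all, the negation equations \eqref{bsnnnn} and \eqref{bsnbsh} untouched --- the latter tightly couple the two data-actions, so the matrices must simultaneously satisfy those braiding/commutation identities and fail to commute with the determinant-diagonals $\Phi(U)$.
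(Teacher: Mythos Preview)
Your approach has a genuine gap: the scheme of \cref{tracedpropfunctorcongruence} cannot be made to work here. The obstruction is Equation~\eqref{bsbsbs}. Suppose the modified beam splitter acts on the data register by matrices $B_{c,p}$ (depending on input polarisation and port). Tracking the path $(\rightarrow,0)$ through the two sides of \eqref{bsbsbs} (with the three beam splitters on wires $(0,1),(1,2),(0,1)$ on the left, and swap--beam splitter--swap on the right), the left-hand side contributes $B_{\rightarrow,0}^2$ while the right-hand side contributes $B_{\rightarrow,0}$, forcing $B_{\rightarrow,0}=B_{\rightarrow,0}^2$; the path $(\rightarrow,1)$ gives $B_{\rightarrow,1}^{\,3}=I$ on the left against $I$ on the right, and together with \eqref{bsbs} (which gives $B_{\rightarrow,p}^2=I$) one obtains $B_{\rightarrow,p}=I$. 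A symmetric check on the $\uparrow$-paths forces $B_{\uparrow,p}=I$ as well. Hence any $\varinterp{\cdot}$ in the sense of \cref{tracedpropfunctorcongruence} that validates \eqref{bsbs} and \eqref{bsbsbs} must interpret the beam splitter with trivial data action, and then \eqref{bsuu} is automatically preserved, regardless of how you interpret $\tikzfig{gateU-xs}$ and $\tikzfig{neg-xs}$. So your ``main obstacle'' is not merely delicate design --- it is an outright impossibility.

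The paper therefore abandons the alternative-semantics template for this lemma and instead builds a \emph{syntactic} invariant: to a diagram $D$ it assigns $d(D)$, the product of $\det\bigl(U(G)\bigr)$ over all gates $G$ lying on \emph{used} wires (wires actually traversed by some input photon). One then checks by hand that each of the nine remaining axioms, applied inside an arbitrary ambient diagram, leaves $d$ unchanged (they either do not touch gates, or touch gates only on dead wires, or --- for \eqref{fusionuvsimple} --- replace $\det U\cdot\det V$ by $\det(VU)$). The argument is closed by noting that \eqref{splitu} (which follows from the full axiom set and hence would follow from the nine if \eqref{bsuu} did) sends a single used gate $U$ to the two-gate diagram $E(U,U)$, changing $d$ from $\det U$ to $(\det U)^2$; these differ when $\det U\neq 1$. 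The crucial point is that $d(D)$ is \emph{not} a function of $\interp{D}$ --- it sees the multiset of gate labels along used wires --- which is exactly why it can separate the two sides of \eqref{bsuu} while every functorial $\varinterp{\cdot}$ cannot.
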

\begin{proof}\vspace{-0.16cm}
Since Equation \eqref{splitu} is a consequence of Equations \eqref{idbox} to \eqref{bsnbsh}, to prove that Equation \eqref{bsuu} is not a consequence of Equations \eqref{idbox}, \eqref{negu} and \eqref{duplicateloop} to \eqref{bsnbsh}, it suffices to prove that Equation \eqref{splitu} is not a consequence of these equations.

Given a diagram $D:n\to n$, let us say that a wire in $D$ is \emph{used} if there exists $c\in\{\rightarrow,\uparrow\}$ and $p\in[n]$ such that an input photon with classical polarisation $c$ and position $p$ passes through this wire. Let us define $d(D)$ as the product of all determinants of the matrices labelling the gates that are on used wires of $D$.

Let us fix a diagram $D$ and consider the effect of applying the axioms inside it. It is easy to check that all axioms of traced PROP, as well as Equations \eqref{idbox}, \eqref{negu}, \eqref{bsnnnn} and \eqref{bsbs} to \eqref{bsnbsh} preserve the gates of $D$ and the fact that their wire is used or not. Equation \eqref{loopemptysimple} can only add or remove gates on unused wires. Equation \eqref{duplicateloop} adds or removes\tikzfig{gateV-xs} on an unused wire and does change the fact that the wire of\tikzfig{gateU-xs} is used or not, indeed, in the patterns on both sides of the equation, the wire of\tikzfig{gateU-xs} is used if and only if it is possible to have a photon with polarisation $\uparrow$ arrive at the input of the pattern. Applying Equation \eqref{fusionuvsimple} replaces\tikzfig{gateU-xs} and\tikzfig{gateV-xs} by\tikzfig{gateVU-xs} (or\tikzfig{gateVU-xs} by\tikzfig{gateU-xs} and\tikzfig{gateV-xs}) on a given wire, which does not change $d(D)$. Thus, applying Equations \eqref{idbox}, \eqref{negu} and \eqref{duplicateloop} to \eqref{bsnbsh} does not change $d(D)$. On the other hand, we have $d\left(\tikzfig{gateU-xs}\right)=\det(U)$ and $d\left(\tikzfig{filtreUU-s}\right)=\det(U)^2$. Hence, as soon as $\det(U)\neq 1$, Equation \eqref{splitu} changes $d(D)$, so that it is not a consequence of \eqref{idbox}, \eqref{negu} and \eqref{duplicateloop} to \eqref{bsnbsh}, which is what we wanted to prove.
\end{proof}

\begin{lemma}
For any $U$, Equation \eqref{duplicateloop} is not a consequence of Equations \eqref{idbox} to \eqref{bsuu} and \eqref{bsnnnn} to \eqref{bsnbsh}.
\end{lemma}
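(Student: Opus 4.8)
The plan is to reuse the recipe applied above to Equations~\eqref{idbox}, \eqref{negu} and \eqref{bsuu}: exhibit a function on diagrams that is left unchanged by the structural congruence and by all of Equations~\eqref{idbox}, \eqref{negu}, \eqref{bsuu}, \eqref{bsnnnn}, \eqref{fusionuvsimple}, \eqref{loopemptysimple}, \eqref{bsbs}, \eqref{bsbsbs}, \eqref{bsnbsh}, but is changed by Equation~\eqref{duplicateloop}; by \cref{tracedpropfunctorcongruence} (or by the ad hoc argument used for Equation~\eqref{bsuu}) this shows that \eqref{duplicateloop} is not derivable from the others. The distinctive feature of \eqref{duplicateloop} is that its left-hand side carries an extra gate $V$ sitting on a wire that no input configuration ever visits, whereas its right-hand side does not; so the invariant must be able to ``see'' gates lying on such \emph{dead} internal wires. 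I would therefore try an invariant in the spirit of the $d(\cdot)$ of the \eqref{bsuu} proof, but counting dead rather than live wires --- e.g.\ the product of $\det(U(G))$ over the gates $G$ that lie on a wire which carries the particle only while it loops through a trace (if at all). On the left of \eqref{duplicateloop} this value gains a factor $\det(V)$, on the right it does not, so the two sides differ whenever $\det(V)\neq 1$, which is all that is needed since the statement fixes only $U$.

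A plain $\S_n$-valued semantics of the kind permitted by \cref{tracedpropfunctorcongruence} cannot be used here, because such semantics record only the matrices met along the particle's path, so the dead gate $V$ is invisible to them (just as it is to $\interp{\cdot}$, consistently with \eqref{duplicateloop} being sound). Hence I would argue directly, case by case: Equation~\eqref{idbox} only adds or deletes an $I$-labelled gate ($\det I=1$); Equations~\eqref{negu}, \eqref{bsuu}, \eqref{bsnnnn}, \eqref{bsbsbs}, \eqref{bsnbsh} permute generators without creating, destroying, or reviving any gate's wire; Equation~\eqref{fusionuvsimple} fuses two gates on a single wire, which multiplies the same set of determinants; and the structural-congruence moves merely deform the picture.

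The main obstacle is the handling of \eqref{loopemptysimple} and \eqref{bsbs}, the only equations that actually delete a gate or rewire internal connections. Equation~\eqref{loopemptysimple} discards an isolated loop together with its gate, and --- worse --- Equation~\eqref{bsbs} can annihilate a pair of beam splitters and thereby detach an otherwise dead gate into such an isolated loop, so that \eqref{bsbs} followed by \eqref{loopemptysimple} really does remove a dead gate in some configurations. The invariant must therefore be framed so that it never counts a gate that can be detached in this way (for instance, gates on disconnected $0\to 0$ components are not counted, and the notion of ``dead wire'' is taken invariant under cancellation of beam-splitter pairs), while it still counts the particular dead gate of \eqref{duplicateloop}, whose surrounding beam splitters genuinely merge rather than cancelling and hence cannot be removed. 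Making that definition precise and verifying the \eqref{loopemptysimple} and \eqref{bsbs} cases against it is the technical crux; the rest is bookkeeping.
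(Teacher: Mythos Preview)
Your proposal is far more elaborate than necessary and leaves the hardest step unfinished, whereas the paper's proof is one line: the invariant is the \emph{parity of the total number of polarising beam splitters plus negations}. Every axiom of the traced PROP and each of Equations~\eqref{idbox}--\eqref{bsuu} and \eqref{bsnnnn}--\eqref{bsnbsh} preserves this parity (for instance \eqref{bsnnnn} has $1+4$ versus $1+0$, \eqref{bsbs} has $2$ versus $0$, \eqref{bsbsbs} has $3$ versus $1$, \eqref{bsnbsh} has $2+1$ versus $1+2$, and the remaining equations involve no beam splitters or negations at all on either side). Equation~\eqref{duplicateloop}, by contrast, has two beam splitters on the left and one on the right, so it flips the parity. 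This settles the lemma for \emph{every} pair $(U,V)$, with no case analysis on the matrices.

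Your determinant-of-dead-gates idea runs into exactly the obstacle you flag: \eqref{loopemptysimple} genuinely deletes a gate on a dead wire, and combinations of \eqref{bsbs} with the other axioms can detach a dead gate into an isolated loop, so any invariant that counts such gates must be carefully engineered to ignore precisely those deletable ones while still seeing the $V$ of \eqref{duplicateloop}. You do not actually carry this out; you only assert it is ``the technical crux''. Moreover, even if you succeeded, your argument would only separate the two sides when $\det(V)\neq 1$, whereas the parity argument works uniformly --- so your route proves strictly less at strictly more cost. The lesson here is that since \eqref{duplicateloop} is the only axiom that changes the beam-splitter count by an odd number, a purely syntactic parity count suffices and no semantic invariant is needed.
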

\begin{proof}\vspace{-0.16cm}
This is clear, because Equations \eqref{idbox} to \eqref{bsuu} and \eqref{bsnnnn} to \eqref{bsnbsh}, as well as the axioms of traced PROP, preserve the parity of the total number of\tikzfig{beamsplitter-xs} and\tikzfig{neg-xs} in a given diagram, whereas Equation \eqref{duplicateloop} changes this parity.
\end{proof}

\begin{lemma}
Equation \eqref{bsnnnn} is not a consequence of Equations \eqref{idbox} to \eqref{duplicateloop} and \eqref{fusionuvsimple} to \eqref{bsnbsh}.
\end{lemma}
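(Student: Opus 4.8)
The plan is to follow the common pattern already used for Equations \eqref{idbox}, \eqref{negu} and \eqref{fusionuvsimple}: I will exhibit a semantics $\varinterp{\cdot}$ satisfying the hypotheses of \cref{tracedpropfunctorcongruence} (so defined exactly like $\interp{\cdot}$ except possibly on the beam splitter, the negation, and the gates) which preserves all of Equations \eqref{idbox} to \eqref{duplicateloop} and \eqref{fusionuvsimple} to \eqref{bsnbsh}, but \emph{not} Equation \eqref{bsnnnn}. Once this is done, \cref{tracedpropfunctorcongruence} makes the set of these nine equations sound for $\varinterp{\cdot}$, hence every consequence of them is preserved by $\varinterp{\cdot}$; since \eqref{bsnnnn} is not preserved, it cannot be such a consequence, which is the claim.

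To design $\varinterp{\cdot}$, observe first that it cannot differ from $\interp{\cdot}$ only on the gates, since Equation \eqref{bsnnnn} contains none; the modification must therefore touch the interpretation of the beam splitter and/or of the negation. A reasonable first guess is to keep $\interp{\tikzfig{gateU-xs}}$ unchanged and to record an extra scalar datum on the state (a sign, or more generally a root of unity) tied to the action of the beam splitter and/or of the negation on the polarisation, chosen so that along every admissible path it telescopes to $1$ in the configurations occurring in \eqref{duplicateloop}, \eqref{bsbs}, \eqref{bsbsbs}, and \eqref{bsnbsh}, but not in \eqref{bsnnnn}. With such a $\varinterp{\cdot}$ in hand one checks, using the graphical way of computing $\varinterp{\cdot}$ (the analogue of \cref{adequacy} for $\varinterp{\cdot}$, valid by the same reasoning as in the proof of \cref{soundnessoftheequations}), that: Equations \eqref{idbox}, \eqref{fusionuvsimple} and \eqref{loopemptysimple} are preserved because $\varinterp{\cdot}$ and $\interp{\cdot}$ agree on all generators occurring in them; Equation \eqref{negu} is preserved because the extra datum is a scalar and commutes with the gate; Equations \eqref{duplicateloop}, \eqref{bsbs} and \eqref{bsbsbs} are preserved because the extra factors accumulated along each admissible path cancel out; and Equation \eqref{bsnbsh} is preserved because its two sides induce the same permutation of polarisations and positions and, being the two ``three‑CNOT'' decompositions of the same swap of the polarisation and position registers, make a particle starting in any fixed configuration interact the same number of times with the beam splitters and the negations. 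By contrast, on the two sides of \eqref{bsnnnn} a particle starting in a fixed configuration is reflected, respectively flipped, a different number of times, so the accumulated data differ and $\varinterp{\cdot}$ separates the two diagrams.

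The step I expect to be the main obstacle is precisely the construction of $\varinterp{\cdot}$: one must pin down a scalar decoration of the beam splitter (and decide whether the negation needs to be altered as well) that is simultaneously compatible with the self‑inverse law \eqref{bsbs}, with the three‑wire identity \eqref{bsbsbs}, and — most delicately — with \eqref{bsnbsh}, which up to the structural congruence is the only other axiom in the list coupling beam splitters and negations, while remaining incompatible with \eqref{bsnnnn}; in particular, a naive uniform phase on the negation alone will not do, since it affects \eqref{bsnbsh} and \eqref{bsnnnn} in the same way. Once a suitable $\varinterp{\cdot}$ is fixed, the remaining soundness checks are routine case analyses over the admissible paths via \cref{adequacy}.
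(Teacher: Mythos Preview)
Your high–level strategy is the right one and matches the paper: build an alternative semantics $\varinterp{\cdot}$ satisfying the hypotheses of \cref{tracedpropfunctorcongruence}, preserved by all axioms except \eqref{bsnnnn}. But you never actually construct $\varinterp{\cdot}$; you only sketch a direction (``record an extra scalar datum \ldots\ tied to the action of the beam splitter and/or of the negation'') and explicitly flag its construction as the main obstacle. That is a genuine gap, because this construction \emph{is} the proof.

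Moreover, the direction you propose is both harder than necessary and of doubtful feasibility. A uniform global phase cannot work: from \eqref{bsbs} one gets $\omega_{\mathrm{bs}}^{2}=1$, from \eqref{bsnbsh} (which has two beam splitters and one negation on one side, one beam splitter and two negations on the other) one gets $\omega_{\mathrm{bs}}=\omega_{\neg}$, hence $\omega_{\neg}^{4}=1$, and then both sides of \eqref{bsnnnn} pick up the same factor. A polarisation–dependent refinement might in principle be engineered, but you give no evidence it can be made compatible with \eqref{bsnbsh} and \eqref{bsbsbs} simultaneously.

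The paper's construction is much simpler and sidesteps all of this: set $\varinterp{\,\tikzfig{beamsplitter-xs}\,}$ and $\varinterp{\,\tikzfig{neg-xs}\,}$ to be the identity (and keep $\varinterp{\,\tikzfig{gateU-xs}\,}=\interp{\,\tikzfig{gateU-xs}\,}$). Every axiom other than \eqref{bsnnnn} is then preserved: those not involving~\tikzfig{beamsplitter-xs} or~\tikzfig{neg-xs} are unchanged; in \eqref{negu}, \eqref{bsuu}, \eqref{duplicateloop}, \eqref{bsbs}, \eqref{bsnbsh} both sides collapse to the same map once the beam splitters and negations are erased; and in \eqref{bsbsbs} the two swaps on the right cancel. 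But in \eqref{bsnnnn} the left side collapses to the identity while the right side collapses to $\interp{\,\tikzfig{swap-xs}\,}$, so \eqref{bsnnnn} is not preserved. The key observation you missed is that \eqref{bsnnnn} is the unique axiom whose two sides differ by an odd number of swaps once beam splitters and negations are trivialised.
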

\begin{proof}\vspace{-0.16cm}
Let us define $\varinterp .$ inductively in the same way as $\interp .$, except in the cases of\tikzfig{beamsplitter-xs} and $\tikzfig{neg-xs}$, for which we define $\varinterp{\tikzfig{beamsplitter-xs}}$ and $\varinterp{\tikzfig{neg-xs}}$ as being the identity (the proof also works if we additionally define $\varinterp{\tikzfig{gateU-xs}}$ as the identity). Then it is clear that Equations \eqref{idbox} to \eqref{duplicateloop} and \eqref{fusionuvsimple} to \eqref{bsnbsh} are preserved, and Equation \eqref{bsnnnn} is not preserved because its left-hand side is interpreted as the identity whereas its right-hand side is interpreted as $\interp{\tikzfig{swap-xs}}$. By \cref{tracedpropfunctorcongruence}, this implies that Equation \eqref{bsnnnn} is not a consequence of Equations \eqref{idbox} to \eqref{duplicateloop} and \eqref{fusionuvsimple} to \eqref{bsnbsh}.
\end{proof}

\begin{lemma}
If $U,V\neq I$, then Equation \eqref{fusionuvsimple} is not a consequence of Equations \eqref{idbox} to \eqref{bsnnnn} and \eqref{loopemptysimple} to \eqref{bsnbsh}.
\end{lemma}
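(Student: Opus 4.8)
\vspace{-0.16cm}
The plan is to follow the pattern of the preceding lemmas: I would introduce a semantics $\varinterp.$ that differs from $\interp.$ only on the gate generator, show that it preserves every axiom except Equation~\eqref{fusionuvsimple}, and conclude with \cref{tracedpropfunctorcongruence}.

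Concretely, I would fix a scalar $\lambda\in\mathbb C\setminus\{0,1,-1\}$ (say $\lambda=2$) and let $g\colon\mathbb C^{q\times q}\to\mathbb C^{q\times q}$ be defined by $g(I)=I$ and $g(M)=\lambda I$ for $M\neq I$. Then I define $\varinterp.$ inductively in the same way as $\interp.$, except $\varinterp{\tikzfig{gateU-xs}}:\H_1\to\H_1::\ket{c,0,y}\mapsto\ket{c,0}\otimes g(U)\ket y$. Since $g(U)\in\mathbb C^{q\times q}$ we have $\varinterp{\tikzfig{gateU-xs}}\in\S_1$, so the same argument as for $\interp.$ shows that $\varinterp.$ is well-defined.

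Next I would check that every axiom other than Equation~\eqref{fusionuvsimple} is preserved by $\varinterp.$. The axioms of traced PROP are preserved because $\varinterp.$ uses the same composition, parallel composition and trace operations as $\interp.$, which make $\S$ a traced PROP; Equations~\eqref{bsnnnn}, \eqref{bsbs}, \eqref{bsbsbs} and~\eqref{bsnbsh} are preserved because $\varinterp.$ agrees with $\interp.$ on diagrams not involving a gate and $\interp.$ is sound (\cref{soundnessoftheequations}); Equation~\eqref{loopemptysimple} is preserved because both of its sides have type $0\to 0$ and are therefore the zero map, for any $g$; Equation~\eqref{idbox} is preserved because $g(I)=I$. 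For Equations~\eqref{negu}, \eqref{bsuu} and~\eqref{duplicateloop}, preservation is checked directly on basis states: since the new gate interpretation is polarisation-independent and the negation, beam splitter, swap and wires leave the data register untouched, both sides act on the data register through $g(U)$ in exactly the way $\interp.$ acts through $U$ (in~\eqref{duplicateloop}, the gate $V$ sits on a wire carrying no photon, so it is irrelevant on both sides). By \cref{tracedpropfunctorcongruence}, every consequence of these axioms is then preserved by $\varinterp.$.

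Finally, I would observe that $\varinterp.$ does \emph{not} preserve Equation~\eqref{fusionuvsimple} when $U,V\neq I$: its left-hand side multiplies the data register by $g(V)g(U)=\lambda^2 I$, whereas its right-hand side multiplies it by $g(VU)$, which is $I$ if $VU=I$ and $\lambda I$ otherwise, and in both cases this differs from $\lambda^2 I$ since $\lambda\notin\{0,1,-1\}$. Hence Equation~\eqref{fusionuvsimple} is not a consequence of Equations~\eqref{idbox} to~\eqref{bsnnnn} and~\eqref{loopemptysimple} to~\eqref{bsnbsh}. The only delicate point is the choice of $g$: an algebraically natural near-homomorphism such as $g(M)=2M-I$ or $g(M)=\det(M)\,I$ would fail to break Equation~\eqref{fusionuvsimple} for pairs with $(V-I)(U-I)=0$, respectively for every pair, so $g$ must be made coarse enough to avoid any such accident while still satisfying $g(I)=I$.
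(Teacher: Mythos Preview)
Your proof is correct and follows essentially the same approach as the paper: both introduce an alternative semantics that collapses every non-identity gate to a fixed matrix $M$ (the paper takes an arbitrary $M$ with $M^2\neq M$, you take $M=\lambda I$ with $\lambda\notin\{0,1,-1\}$) and invoke \cref{tracedpropfunctorcongruence}. Your explicit treatment of the edge case $VU=I$ is in fact slightly more careful than the paper's sketch, which only states $M^2\neq M$ without addressing that the right-hand side could yield $I$ rather than $M$.
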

\begin{proof}\vspace{-0.16cm}
Let us define $\varinterp .$ inductively in the same way as $\interp .$, except in the case of\tikzfig{gateU-xs}, for which we define $\varinterp{\tikzfig{gateU-xs}}:=\begin{cases}\ket{c,p,x}\mapsto\ket{c,p,x}&\text{if $U=I$}\\\ket{c,p,x}\mapsto\ket{c,p}\otimes M\ket{x}&\text{if $U\neq I$}\end{cases}$ where $M$ is a fixed arbitrary matrix such that $M^2\neq M$. Then it is easy to check that Equations \eqref{idbox} to \eqref{bsnnnn} and \eqref{loopemptysimple} to \eqref{bsnbsh} are preserved by $\varinterp.$. But Equation \eqref{fusionuvsimple} is not preserved if $U,V\neq I$, because then the left-hand side is interpreted as $\ket{c,p,x}\mapsto\ket{c,p}\otimes M^2\ket{x}$ whereas the left-hand side is interpreted as $\ket{c,p,x}\mapsto\ket{c,p}\otimes M\ket{x}$, and $M^2\neq M$.
By \cref{tracedpropfunctorcongruence}, this implies that Equation \eqref{fusionuvsimple} is not a consequence of Equations \eqref{idbox} to \eqref{bsnnnn} and \eqref{loopemptysimple} to \eqref{bsnbsh}.
\end{proof}

\begin{lemma}
For any $U$, Equation \eqref{loopemptysimple} is not a consequence of Equations \eqref{idbox} to \eqref{fusionuvsimple} and \eqref{bsbs} to \eqref{bsnbsh}.
\end{lemma}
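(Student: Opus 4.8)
The $\varinterp{\cdot}$ technique of \cref{tracedpropfunctorcongruence} cannot be applied here: since $\mathcal H_0=0$, every diagram of type $0\to 0$ has trivial interpretation, so both sides of \eqref{loopemptysimple} are identified by any semantics valued in the $\S_n$'s. Instead the plan is to equip the usual semantics with a nonnegative integer counting ``loops closed by a trace''. Concretely, set $\varinterp{D}:=(\interp D,\ell(D))$, where $\ell$ is defined by induction on the term: $\ell$ is $0$ on $\tikzfig{diagrammevide-s}$ and on every generator, $\ell(D_2\circ D_1)=\ell(D_1)+\ell(D_2)$, $\ell(D_1\otimes D_2)=\ell(D_1)+\ell(D_2)$, and $\ell(Tr(D))=\ell(D)+s(D)$ for $D:n+1\to n+1$, where $s(D)\in\{0,1\}$ equals $1$ exactly when $\tau_D(c,n)\in\hv\times\{n\}$ for some $c\in\hv$ (``the traced wire returns to itself''). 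By \cref{adequacy}, $s(D)$ is determined by $\interp D$, hence by $\varinterp D$; moreover $\ell(D_2\circ D_1)$ and $\ell(D_1\otimes D_2)$ depend only on $\ell(D_1),\ell(D_2)$, while $\ell(Tr(D))$ depends only on $\varinterp D$. Therefore, exactly as in the proof of \cref{tracedpropfunctorcongruence}, the relation $D_1\equiv D_2\iff\varinterp{D_1}=\varinterp{D_2}$ is a congruence.

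The next step is to check that $\equiv$ contains the axioms of traced PROP and Equations \eqref{idbox}--\eqref{fusionuvsimple} and \eqref{bsbs}--\eqref{bsnbsh}. The first component $\interp\cdot$ is preserved by all of them by soundness (\cref{soundnessoftheequations}, and \cref{preuvedeterminismtermreversibility} for the structural axioms). For the second component: for every trace-free equation ($\eqref{idbox}$, $\eqref{negu}$, $\eqref{fusionuvsimple}$, $\eqref{bsnnnn}$, $\eqref{bsbs}$, $\eqref{bsbsbs}$, $\eqref{bsnbsh}$) both sides have $\ell=0$; and for the axioms of traced PROP one checks directly that $\ell$ agrees on both sides (for instance $\ell(Tr(\sigma))=\ell(\sigma)+s(\sigma)=0$ since $\tau_\sigma(c,1)=(c,0)\notin\hv\times\{1\}$, matching $\ell(\id_1)=0$; and for the naturality/dinaturality/superposing axioms one uses that, for any trace-free $E$, $\ell\bigl(Tr_k(E)\bigr)$ depends only on $\interp E$, so both sides, having the same interpretation and the same generator multiset, get the same $\ell$). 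The one case requiring an honest computation is Equation \eqref{duplicateloop}: here one computes $\ell$ on each side from its explicit form. This is the step I expect to be the only real difficulty — one must verify that removing the dead wire carrying $V$ and merging the two polarising beam splitters does not alter the number of self-looping traces, so that $\ell$ is unchanged.

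Finally, $\varinterp{\cdot}$ does \emph{not} preserve \eqref{loopemptysimple}. For any $D:1\to 1$ the permutation $\tau_D$ is a permutation of the two-element set $\hv\times\{0\}$, so $\tau_D(c,0)\in\hv\times\{0\}$ automatically, hence $s(D)=1$ and $\ell(Tr(D))=\ell(D)+1\ge 1$. The left-hand side of \eqref{loopemptysimple} is a trace of a diagram of type $1\to 1$, so it has $\ell\ge 1$, whereas $\ell(\tikzfig{diagrammevide-s})=0$. Since $\equiv$ is a congruence containing Equations \eqref{idbox}--\eqref{fusionuvsimple} and \eqref{bsbs}--\eqref{bsnbsh}, every consequence of those equations is contained in $\equiv$, and therefore \eqref{loopemptysimple} is not among them.
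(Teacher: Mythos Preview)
Your invariant $\ell$ is not preserved by the dinaturality axiom of the traced PROP, so $\equiv$ fails to contain the structural congruence and the argument collapses. The mistake is in the sentence ``both sides, having the same interpretation and the same generator multiset, get the same $\ell$'': in the dinaturality axiom $Tr_k\bigl((I_n\otimes D_2)\circ D_1\bigr)=Tr_k\bigl(D_1\circ(I_n\otimes D_2)\bigr)$, the two diagrams \emph{before} tracing, namely $E=(I_n\otimes D_2)\circ D_1$ and $E'=D_1\circ(I_n\otimes D_2)$, do \emph{not} have the same interpretation in general (only $Tr_k(E)$ and $Tr_k(E')$ do), so the increments $s(E),s(Tr(E)),\dots$ need not match those of $E'$. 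Here is an explicit failure: take $n=0$, $k=2$, $D_2$ the polarising beam splitter, and $D_1:2\to2$ any diagram with
\[
\tau_{D_1}:(\rightarrow,0)\mapsto(\rightarrow,1),\ (\rightarrow,1)\mapsto(\uparrow,0),\ (\uparrow,0)\mapsto(\rightarrow,0),\ (\uparrow,1)\mapsto(\uparrow,1)
\]
(such a $D_1$ exists by universality). A direct computation gives $\tau_E(\rightarrow,1)=(\uparrow,1)$, so $s(E)=1$, while $\tau_{E'}(\rightarrow,1)=(\uparrow,0)$ and $\tau_{E'}(\uparrow,1)=(\rightarrow,0)$, so $s(E')=0$. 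Since $Tr(E)$ and $Tr(E')$ are $1\to1$ diagrams, $s(Tr(E))=s(Tr(E'))=1$ automatically. Hence
\[
\ell\bigl(Tr_2(E)\bigr)=\ell(D_1)+\ell(D_2)+1+1\ \neq\ \ell(D_1)+\ell(D_2)+0+1=\ell\bigl(Tr_2(E')\bigr),
\]
contradicting dinaturality.

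The paper's argument avoids all of this with a much cruder invariant: it simply observes that \eqref{loopemptysimple} is the only equation with the empty diagram on one side. Every other axiom (including all the structural ones) relates two diagrams that both have at least one generator or one trace, so the congruence they generate can never identify a nonempty diagram with $\tikzfig{diagrammevide-s}$. Your intuition of counting closed loops is on the right track, but getting a count that respects dinaturality is delicate; the binary ``empty vs.\ nonempty'' invariant already suffices.
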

\begin{proof}\vspace{-0.16cm}
This is clear, because Equation \eqref{loopemptysimple} is the only one that allows us to make a nonempty diagram equivalent to the empty diagram.
\end{proof}

\begin{lemma}
Equation \eqref{bsbs} is not a consequence of Equations \eqref{idbox} to \eqref{loopemptysimple}, \eqref{bsbsbs} and \eqref{bsnbsh}.
\end{lemma}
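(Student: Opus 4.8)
The plan is to isolate \eqref{bsbs} in the usual way: exhibit an invariant of the equational theory $\mathcal T$ generated by the structural congruence together with Equations \eqref{idbox}--\eqref{loopemptysimple}, \eqref{bsbsbs} and \eqref{bsnbsh}, which takes different values on the two sides of \eqref{bsbs}. Since $\mathcal T$ is the largest equational theory consistent with such an invariant, this shows that \eqref{bsbs} cannot be derived from the others.

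I would first check that the modified-semantics method of \cref{tracedpropfunctorcongruence}, used for \eqref{idbox}, \eqref{negu}, \eqref{bsnnnn} and \eqref{fusionuvsimple}, does \emph{not} work here. As \eqref{bsbs} only involves the beam splitter, one would need an interpretation $\varinterp{\,}$ whose value on the beam splitter is not self-inverse; but the equations relating the beam splitter to the swap and to the polarisation flip leave essentially no room. A short analysis of \eqref{bsbsbs} (three beam splitters equal one beam splitter flanked by two swaps) shows that the data-register action of $\varinterp{\,}$ on the beam splitter must be trivial -- the three single-photon paths through the left-hand side that carry non-trivial data forces the relevant matrices to satisfy $M_0^2=M_0$, $M_1^2=M_1$ and $M_0M_1=I$, hence $M_0=M_1=I$ -- and then \eqref{bsbsbs} together with \eqref{bsnnnn} pins the permutation part of $\varinterp{\,}$ on the beam splitter to the standard one. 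Thus any admissible $\varinterp{\,}$ coincides with $\interp{\,}$ on the beam splitter and so validates \eqref{bsbs} as well; the wanted invariant must therefore be of a combinatorial nature.

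Accordingly I would work with the underlying graph of a diagram, whose vertices are the occurrences of the beam splitter, the polarisation flip and the gates, and whose edges are the wires (so that swaps and identity wires are transparent, which is what makes such a count compatible with the structural congruence). The key point is that \eqref{bsbs} eliminates a \emph{beam-splitter bigon}, i.e.\ two beam splitters joined by two parallel wires, whereas none of the other equations does: in particular the three-wire equation \eqref{bsbsbs}, although it too removes two beam splitters, only performs a Reidemeister-III-like rearrangement and never creates or destroys a bigon (one checks directly that the left- and right-hand sides of \eqref{bsbsbs}, like those of \eqref{idbox}--\eqref{loopemptysimple} and \eqref{bsnbsh}, contain the same number of beam-splitter bigons, namely zero), while the two sides of \eqref{bsbs} differ by one. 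The main obstacle is that the naive count of beam-splitter bigons is not compositional -- applying \eqref{bsbsbs} (or a structural axiom) inside a context can modify the boundary connectivity of the rewritten region and thereby create or destroy bigons straddling its frontier -- so one has to refine the count into a quantity that also records, for each sub-diagram, which of its interface wires are linked to a common beam splitter by a generator-free path; the bulk of the proof is then to verify that this refined quantity is preserved by every axiom of $\mathcal T$ and by the structural congruence, and decreases by one across \eqref{bsbs}.
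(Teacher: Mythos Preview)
Your approach is far more elaborate than necessary, and you have overlooked a one-line invariant. The paper's proof is simply: Equation~\eqref{bsbs} is the only axiom that relates a diagram containing beam splitters to one containing none. Indeed, Equations \eqref{idbox}, \eqref{negu}, \eqref{fusionuvsimple} and \eqref{loopemptysimple} involve no beam splitter on either side, while Equations \eqref{bsuu}, \eqref{duplicateloop}, \eqref{bsnnnn}, \eqref{bsbsbs} and \eqref{bsnbsh} have at least one beam splitter on \emph{both} sides. The structural congruence of course preserves the multiset of generators. Hence the property ``the diagram contains at least one beam splitter'' is invariant under the congruence generated by all axioms except \eqref{bsbs}, but the two sides of \eqref{bsbs} differ on it (two beam splitters versus none). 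That is the entire argument.

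Your bigon-counting idea is not wrong in spirit, but you yourself note that it is not compositional and would require a nontrivial refinement tracking boundary connectivity; you do not actually carry this out, so as written the proposal is incomplete. More importantly, it is unnecessary: once you notice that every other equation has beam splitters on both sides or on neither, the coarse invariant ``presence of a beam splitter'' already separates \eqref{bsbs}. Your preliminary analysis of why a modified semantics in the style of \cref{tracedpropfunctorcongruence} fails is interesting but also not needed for the result.
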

\begin{proof}\vspace{-0.16cm}
This is clear, because Equation \eqref{bsbs} is the only one that allows us to make a diagram without beam splitters equivalent to a diagram containing beam splitters.
\end{proof}

\begin{lemma}
Equation \eqref{bsnbsh} is not a consequence of Equations \eqref{idbox} to \eqref{bsbsbs}.
\end{lemma}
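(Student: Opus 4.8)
The plan is to imitate the pattern already used for Equations \eqref{idbox}, \eqref{negu}, \eqref{bsnnnn} and \eqref{fusionuvsimple}: I would exhibit an alternative interpretation $\varinterp{\cdot}$ of the kind described in \cref{tracedpropfunctorcongruence}, preserving all of Equations \eqref{idbox} to \eqref{bsbsbs} but not \eqref{bsnbsh}. Then \cref{tracedpropfunctorcongruence} yields that every consequence of \eqref{idbox}--\eqref{bsbsbs} is preserved by $\varinterp{\cdot}$, hence \eqref{bsnbsh}, not being preserved, cannot be one of them.

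Concretely, I would keep the gates unchanged, $\varinterp{\tikzfig{gateU-xs}}:=\interp{\tikzfig{gateU-xs}}$, and twist the other two generators by signs: $\varinterp{\tikzfig{beamsplitter-xs}}$ is the standard beam splitter with an extra factor $-1$ on the horizontally polarised branch, i.e. $\ket{\rightarrow,p,x}\mapsto-\ket{\rightarrow,p,x}$ and $\ket{\uparrow,p,x}\mapsto\ket{\uparrow,1-p,x}$; and $\varinterp{\tikzfig{neg-xs}}$ is the standard flip with an extra factor $-1$ when the incoming polarisation is horizontal, i.e. $\ket{\rightarrow,0,x}\mapsto-\ket{\uparrow,0,x}$ and $\ket{\uparrow,0,x}\mapsto\ket{\rightarrow,0,x}$. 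These maps lie in $\S_1$, resp.\ $\S_2$, with a scalar data action, so $\varinterp{\cdot}$ is well defined exactly as $\interp{\cdot}$ is; and by the adequacy argument $\varinterp{D}$ acts on $\ket{c,p,x}$ just like $\interp{D}$ but weighted by $(-1)^{a+b}$, where, along the photon's path, $a$ is the number of beam splitters crossed while the photon is horizontally polarised and $b$ the number of flips entered while horizontally polarised.

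Then I would check the nine retained axioms. Equations \eqref{idbox} and \eqref{fusionuvsimple} involve only gates and are untouched; \eqref{negu} and \eqref{bsuu} survive because $\varinterp{\tikzfig{neg-xs}}$ and $\varinterp{\tikzfig{beamsplitter-xs}}$ still have scalar data action, which commutes with the gate (the gate applies its matrix regardless of polarisation and position); \eqref{loopemptysimple} survives because both sides are maps $\H_0\to\H_0$ and $\H_0=0$. For \eqref{bsbs} and \eqref{bsnnnn}, a direct computation on the four basis vectors of $\H_2$ shows the introduced signs cancel on each side (one uses that the beam-splitter sign squares to $1$ on each branch and that the product of the two flip signs equals the ratio of the two beam-splitter signs). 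The genuinely delicate cases are \eqref{duplicateloop} and \eqref{bsbsbs}: here I would compute the path semantics of both sides and verify that the weight $(-1)^{a+b}$ agrees path by path. This reduces to the fact that these rewritings never change the relevant sign: a vertically polarised photon carries no beam-splitter sign at all (the twist sits only on the horizontal branch), while a horizontally polarised photon never reaches the part of the diagram that gets rewritten (for \eqref{bsbsbs}, the two beam splitters turned into swaps are by hypothesis reached only by vertical photons; for \eqref{duplicateloop}, the discarded wire is dead and the merged beam splitters lie on the loop, which horizontal photons do not enter). Making the beam-splitter twist polarisation-dependent rather than a global $-1$ is precisely what lets \eqref{bsbsbs} go through.

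Finally, \eqref{bsnbsh} is not preserved: its two sides use a different number of beam splitters (two versus one). On the input $(\rightarrow,0)$, the left-hand path crosses the first beam splitter while horizontal (sign $-1$), the flip while horizontal (sign $-1$), then the second beam splitter while vertical (sign $+1$), giving net sign $+1$; the right-hand path crosses the flip while horizontal (sign $-1$) and the beam splitter while vertical (sign $+1$), giving net sign $-1$. By soundness of $\interp{\cdot}$ both paths end at the same state $\ket{\uparrow,1,x}$, so the two sides of \eqref{bsnbsh} send $\ket{\rightarrow,0,x}$ to opposite vectors under $\varinterp{\cdot}$, which proves the independence. The main obstacle is exactly the bookkeeping flagged above: choosing the signs so that \eqref{bsnnnn}, \eqref{bsbs}, \eqref{duplicateloop} and \eqref{bsbsbs} survive simultaneously — which is possible precisely because \eqref{bsnbsh} is the only axiom whose two sides involve beam splitters with different parity.
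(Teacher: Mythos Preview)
Your strategy via an alternative interpretation is sound in principle, but the specific $\varinterp{\cdot}$ you propose fails to preserve Equation~\eqref{bsbsbs}. Your justification --- that ``the two beam splitters turned into swaps are by hypothesis reached only by vertical photons'' --- is false: a horizontal photon entering a beam splitter still passes through that generator; it is merely not deflected. Concretely, take the input $(\rightarrow,1)$, the middle wire. On the left-hand side of \eqref{bsbsbs} the horizontal photon on wire~$1$ crosses all three beam splitters (wire~$1$ is incident to both the upper and the lower pair), giving your sign $(-1)^3=-1$; on the right-hand side it is first swapped off the middle wire and then misses the single remaining beam splitter entirely, giving sign $+1$. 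Hence $\varinterp{\cdot}$ separates the two sides of \eqref{bsbsbs}, and \cref{tracedpropfunctorcongruence} cannot be applied. The heuristic in your last line (``\eqref{bsnbsh} is the only axiom whose two sides involve beam splitters with different parity'') is also wrong: Equation~\eqref{duplicateloop} changes the parity of the number of beam splitters too, which is exactly why you were forced into a polarisation-dependent twist that then breaks \eqref{bsbsbs}.

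The repair is far simpler than your construction: set $\varinterp{\tikzfig{neg-xs}}:=-\interp{\tikzfig{neg-xs}}$ and leave the beam splitter and the gates unchanged. Then $\varinterp{D}$ differs from $\interp{D}$ by a factor $(-1)^m$ on each path, where $m$ is the number of negations crossed. Equations~\eqref{idbox}--\eqref{bsbsbs} all have the same parity of negations on both sides (none, one on each side, or four versus none in \eqref{bsnnnn}), so they are preserved; \eqref{bsnbsh} has one negation on the left and two on the right, so it is not. This is precisely the paper's one-line argument: \eqref{bsnbsh} is the only axiom that changes the parity of the number of~$\tikzfig{neg-xs}$ in a diagram.
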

\begin{proof}\vspace{-0.16cm}
It suffices to remark that Equation \eqref{bsnbsh} is the only one that allows us to change the parity of the number of\tikzfig{neg-xs} in a diagram.
\end{proof}

To prove that Equation \eqref{bsbsbs} is not a consequence of the others, we will need to talk about sub-diagrams in a context where not all axioms of traced PROP are assumed. Although the notion of sub-diagram is clear in a traced PROP, it becomes less obvious when some axioms are missing. This is why we give a formal inductive definition of it:

\begin{definition}\label{defsubdiagram}
We define the notion of sub-diagram inductively as follows. Given two diagrams $d$ and $D$, we say that $d$ is a sub-diagram of $D$ if we at least one of the following properties is satisfied (up to the currently assumed axioms, which are the axioms of a traced PROP in most of this paper but will be the axioms of a PROTWEB in the proof of \cref{independancebsbsbs}):
\begin{itemize}
\item $d=D$
\item there exists two nonempty diagrams $D_1$ and $D_2$ such that $D=D_2\circ D_1$ and $d$ is a sub-diagram of $D_1$ or a sub-diagram of $D_2$
\item there exists two nonempty diagrams $D_1$ and $D_2$ such that $D=D_1\otimes D_2$ and $d$ is a sub-diagram of $D_1$ or a sub-diagram of $D_2$
\item there exists a diagram $D'$ such that $D=Tr(D')$ and $d$ is a sub-diagram of $D'$.
\end{itemize}
\end{definition}

\begin{lemma}\label{independancebsbsbs}
Equation \eqref{bsbsbs} is not a consequence of equations \eqref{idbox} to \eqref{bsbs} and \eqref{bsnbsh}.
\end{lemma}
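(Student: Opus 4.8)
The strategy I would adopt mirrors the pattern used throughout the minimality proof, but now obstructed by the fact that the other minimality lemmas define an alternative semantics $\varinterp{\cdot}$ that is still a functor on diagrams modulo the axioms of traced PROP. Equation \eqref{bsbsbs} is the unique three-wire equation, and intuitively it is the only one that forces two particular configurations of beam splitters to collapse into two swaps. The first step is therefore to find an invariant of diagrams that is preserved by all the axioms of a traced PROP \emph{and} by Equations \eqref{idbox} to \eqref{bsbs} and \eqref{bsnbsh}, but which is \emph{not} preserved by \eqref{bsbsbs}. Because the left-hand side of \eqref{bsbsbs} contains three beam splitters acting on three wires while the right-hand side contains two swaps and one beam splitter, a natural candidate is something counting occurrences of \tikzfig{beamsplitter-xs} weighted in a way sensitive to the three-wire adjacency pattern appearing in \eqref{bsbsbs} — but a plain count fails because \eqref{bsbs} removes beam splitters in pairs, so the invariant must be designed to be insensitive to the \eqref{bsbs} cancellation and to the \eqref{bsnbsh} rewrite (which trades a beam splitter configuration for another with an extra \tikzfig{neg-xs}), yet sensitive to \eqref{bsbsbs}.

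Concretely, I would work relative to the axioms of a PROTWEB rather than a full traced PROP — this is exactly why Definition \ref{defsubdiagram} of sub-diagram is introduced, and why Lemma \ref{addswapswapyanking} characterises a traced PROP as a PROTWEB plus inverse law and yanking. The idea is that in a PROTWEB the swap $\sigma$ is \emph{not} assumed self-inverse, so one can distinguish $\sigma$ from $\sigma^{-1}$, i.e.\ one can track "crossing orientations". I would then define a function assigning to each diagram an element of some commutative monoid (for instance $\mathbb{Z}/2$ or a free monoid) that records, say, the parity of the number of beam splitters, or more precisely an invariant extracted from the braid-like structure of the diagram read as a PROTWEB morphism; the key point is that Equations \eqref{idbox}–\eqref{bsbs}, \eqref{bsnbsh}, together with the PROTWEB axioms and the inverse law and yanking (which we must also check, since those are part of being a traced PROP), all preserve this invariant, whereas \eqref{bsbsbs} does not. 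Since $\textup{PBS}\vdash\cdot=\cdot$ restricted to the axioms other than \eqref{bsbsbs} is contained in the congruence generated by this invariant, \eqref{bsbsbs} cannot be a consequence of them.

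The main obstacle I anticipate is the design of the invariant: it must simultaneously be (\emph{i}) robust under the $\eqref{bsbs}$ pair-cancellation and under the yanking and inverse-law identities, (\emph{ii}) robust under $\eqref{bsnbsh}$, which is subtle because that equation is the one that changes the parity of the number of \tikzfig{neg-xs} and reorganises three CNOT-like gates, and (\emph{iii}) genuinely changed by \eqref{bsbsbs}. This is delicate because \eqref{bsbsbs} only rearranges beam splitters and swaps without changing their total count in an obvious parity sense, so a crude counting argument will not work; one really needs to exploit the three-wire, non-planar-rearrangement nature of \eqref{bsbsbs}, which is why passing to the PROTWEB setting (where swaps carry orientation and sub-diagrams are well-defined) is the right framework. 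Once the correct invariant is isolated, the verification that each of the remaining axioms preserves it should be a routine, if lengthy, case check, and the conclusion follows immediately as in the other minimality lemmas.
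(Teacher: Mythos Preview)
Your plan correctly identifies the right framework --- working in a PROTWEB rather than a full traced PROP, so that inverse law and yanking become equations to be checked rather than ambient axioms --- and you correctly anticipate that a crude counting invariant cannot separate the two sides of \eqref{bsbsbs}.  However, the proposal stops precisely at the hard step: you do not actually exhibit an invariant, only speculate that one ``extracted from the braid-like structure'' should exist.  The difficulty you flag is real, and a monoid-valued counting function of the kind you sketch (parity of beam splitters, crossing orientations, etc.) does not obviously survive simultaneously \eqref{bsbs}, \eqref{bsnbsh}, yanking and inverse law while distinguishing the two sides of \eqref{bsbsbs}.  As it stands, the plan is a statement of the problem rather than a solution to it.

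The paper's proof takes a rather different route.  After the same preliminary reductions (strip gates by instantiating all matrices to $I$, then pass to the PROTWEB setting treating inverse law and yanking as extra equations), it does \emph{not} look for a scalar or monoid-valued invariant.  Instead it builds a terminating rewriting system on gate-free diagrams (with rules for collapsing $0\to0$ and $1\to1$ sub-diagrams, a ``round-trip'' rule, and oriented versions of \eqref{bsbs}, \eqref{bsnbsh}, \eqref{bsnnnn}, inverse law), proves strong normalisation via a lexicographic measure $(b,x,n,t)$ counting beam splitters, swaps, negations and traces, and then establishes local confluence by an exhaustive critical-pair analysis.  Each of the equations \eqref{idbox}--\eqref{bsbs}, \eqref{bsnbsh}, together with yanking and inverse law, corresponds to applying or reversing one of the rewrite rules, so two diagrams equal modulo those equations share a normal form.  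The conclusion is immediate because both sides of \eqref{bsbsbs} are already normal and distinct.  In effect the ``invariant'' is the normal form itself --- far more structured than anything your plan contemplates, and the confluence proof is where the real work lies.
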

\begin{proof}\vspace{-0.16cm}
Let us first make two remarks.

First, since Equation \eqref{bsbsbs} does not contain gates, if it is a consequence of the other equations, then it is a consequence of these equations where all $U$ and $V$ are instantiated by $I$. Indeed, all of these equations that contain gates are still true when all $U$ and $V$ are instantiated by $I$. Hence, given a valid derivation of Equation \eqref{bsbsbs} from the others, by replacing every unitary matrix by $I$ in this derivation we get a valid derivation of Equation \eqref{bsbsbs}.

Second, by Equation \eqref{idbox}, being a consequence of Equations \eqref{idbox} to \eqref{bsbs} and \eqref{bsnbsh} where all $U$ and $V$ are instantiated by $I$ is equivalent to being a consequence of these equations where the gates have been removed (except in Equation \eqref{idbox}). That is, being a consequence of the following equations:

\begingroup
\tikzset{tikzfig/.style={baseline=-0.25em,scale=\echellefils,every node/.style={scale=0.8},borddiagrammevide/.style={-, dash pattern=on 0.35em off 0.35em on 0.35em off 0.35em on 0.35em off 0em}}}

\begin{multicols}{2}
\begin{equation}\tag{\ref*{idbox}}\begin{array}{rcl}\tikzfig{filmoyen}&=&\tikzfig{gateIbis}\end{array}\end{equation}

\begin{equation}\notag\begin{array}{rcl}\tikzfig{neg}&=&\tikzfig{neg}\end{array}\end{equation}

\begin{equation}\notag\begin{array}{rcl}\tikzfig{beamsplitter}&=&\tikzfig{beamsplitter}\end{array}\end{equation}

\begin{equation}\label{boucletraversedupliquebs}\tag{\ref*{duplicateloop}'}\begin{array}{rcl}\tikzfig{boucletraverseidvariante}&=&\tikzfig{filtrerienrienetrien}
\end{array}\end{equation}

\begin{equation}\tag{\ref*{bsnnnn}}\begin{array}{rcl}\tikzfig{beamsplitternnnn}&=&\tikzfig{beamsplitterswap}
\end{array}\end{equation}

\begin{equation}\notag\begin{array}{rcl}\tikzfig{filmoyen}&=&\tikzfig{filmoyen}\end{array}\end{equation}

\begin{equation}\label{bouclevide1}\tag{\ref*{loopemptysimple}'}\begin{array}{rcl}\tikzfig{bouclevide}&=&\tikzfig{diagrammevide-m}\end{array}\end{equation}

\begin{equation}\tag{\ref*{bsbs}}\begin{array}{rcl}\tikzfig{beamsplitterbeamsplitter}&=&\tikzfig{filsparalleleslongs}
\end{array}\end{equation}

\[\]

\begin{equation}\tag{\ref*{bsnbsh}}\begin{array}{rcl}\tikzfig{beamsplitternhautbeamsplitter}&=&\tikzfig{beamsplitternnhaut}
\end{array}\end{equation}
\end{multicols}

Equation \eqref{idbox} is now useless since it only allows us to create and remove $I$ gates without changing anything else, and neither the other equations nor Equation \eqref{bsbsbs} contain gates. Equations that have become an instance of reflexivity are now useless too. Finally, Equation \eqref{boucletraversedupliquebs} can be simplified through Equations \eqref{bsbs} and \eqref{bouclevide1} into Equation \eqref{loopid}. Thus, what we have to prove is that Equation \eqref{bsbsbs} is not a consequence of the following equations:

\begin{multicols}{2}

\[\]

\begin{equation}\label{loopid}\begin{array}{rcl}\tikzfig{boucletraverseidvariante}&=&\tikzfig{filmoyen}
\end{array}\end{equation}

\begin{equation}\tag{\ref*{bsnnnn}}\begin{array}{rcl}\tikzfig{beamsplitternnnn}&=&\tikzfig{beamsplitterswap}
\end{array}\end{equation}

\begin{equation}\label{bouclevide}\tag{\ref*{bouclevide1}}\begin{array}{rcl}\tikzfig{bouclevide}&=&\tikzfig{diagrammevide-m}\end{array}\end{equation}

\begin{equation}\tag{\ref*{bsbs}}\begin{array}{rcl}\tikzfig{beamsplitterbeamsplitter}&=&\tikzfig{filsparalleleslongs}
\end{array}\end{equation}

\begin{equation}\tag{\ref*{bsnbsh}}\begin{array}{rcl}\tikzfig{beamsplitternhautbeamsplitter}&=&\tikzfig{beamsplitternnhaut}
\end{array}\end{equation}
\end{multicols}

In the rest of the proof, we no longer assume the yanking and inverse law axioms, but we consider the corresponding equations instead:

\begin{equation}\tag{$y$}\label{yankingeq}\begin{array}{rcl}\tikzfig{yanking}&=&\tikzfig{filmoyen}\end{array}\end{equation}

\begin{equation}\tag{$\sigma\sigma$}\label{xxeq}\begin{array}{rcl}\tikzfig{swapswap}&=&\tikzfig{filsparalleleslongbs}\end{array}\end{equation}

We have to prove that Equation \eqref{bsbsbs} is not a consequence of Equations \eqref{loopid}, \eqref{bsnnnn}, \eqref{bouclevide}, \eqref{bsbs}, \eqref{bsnbsh}, \eqref{yankingeq} and \eqref{xxeq}, still assuming the other axioms of the traced PROP, which by \cref{addswapswapyanking} are the axioms of a PROTWEB.

We also consider the notion of sub-diagram with respect to the axioms of a PROTWEB, that is, in \cref{defsubdiagram}, the conditions are considered up to these axioms. Intuitively, a sub-diagram in this sense is a part of a diagram that can be separated from the rest of the diagram by drawing a box around it.

Let us say that a diagram is circle-free if it does not have nonempty $0\to0$ sub-diagrams. Intuitively, a $0\to 0$ sub-diagram in the context of the PROTWEB is represented graphically as a union of connected components, which cannot be reached by a photon and do not affect the semantics of a diagram.

We consider the following set of rewriting rules on the set of gate-free diagrams: 

\setcounter{equation}{0}

\begin{equation}\label{rule00}\begin{array}{rcll}D&\rightarrow&\tikzfig{diagrammevide-m}&\text{ for every nonempty diagram $D:0\to0$}
\end{array}\end{equation}

\begin{equation}\label{rule11id}\begin{array}{rcll}\tikzfig{Dgenerique}&\rightarrow&\tikzfig{filcourt}&\text{ for every circle-free $D:1\to1$ such that $D\neq\tikzfig{filcourt-s}$ and $\interp{D}=\Id$}
\end{array}\end{equation}

\begin{equation}\label{rule11neg}\begin{array}{rcll}\tikzfig{Dgenerique}&\rightarrow&\tikzfig{neg}&\text{ for every circle-free $D:1\to1$ such that $D\neq\tikzfig{neg-xs}$ and $\interp{D}=\interp{\tikzfig{neg-xs}}$}
\end{array}\end{equation}

\begin{equation}\label{ruleroundtrip}\begin{array}{rcl}\tikzfig{granddiagrammetroud22}&\rightarrow&\tikzfig{granddiagrammetrouid22}
\end{array}\end{equation}
for every diagram $D$ with a circle-free, nonidentity sub-diagram $d:2\to 2$ that we can slide along its two wires inside $D$, by using the axioms of the PROTWEB, in a constant direction and make it come back to the initial point, without having to use naturality of the swap or dinaturality to move anything else than $d$ while doing so

\begin{equation}\label{rulebsbs}\begin{array}{rcl}\tikzfig{beamsplitterbeamsplitter}&\rightarrow&\tikzfig{filsparalleleslongs}
\end{array}\end{equation}

\begin{equation}\label{rulebsnbsh}\begin{array}{rcl}\tikzfig{beamsplitternhautbeamsplitter}&\rightarrow&\tikzfig{beamsplitternnhaut}
\end{array}\end{equation}

\begin{equation}\begin{array}{rcl}\tikzfig{beamsplitternbasbeamsplitter}&\rightarrow&\tikzfig{beamsplitternnbas}
\end{array}\end{equation}

\begin{equation}\label{rulebsnnbs}\begin{array}{rcl}\tikzfig{bsnnbs}&\rightarrow&\tikzfig{nnswap}
\end{array}\end{equation}

\begin{equation}\label{rulebsx}\begin{array}{rcl}\tikzfig{beamsplitterswap}&\rightarrow&\tikzfig{beamsplitternnnn}\end{array}\end{equation}

\begin{equation}\begin{array}{rcl}\tikzfig{swapbeamsplitter}&\rightarrow&\tikzfig{beamsplitternnnn}\end{array}\end{equation}

\begin{equation}\label{rulexx}\begin{array}{rcl}\tikzfig{swapswap}&\rightarrow&\tikzfig{filsparalleleslongs}\end{array}\end{equation}

It is easy to see that these rules preserve the semantics.\\
\begin{remark}\label{reduction11toujourspossible}
Any gate-free $1\to 1$ diagram is interpreted as $\Id$ or $\interp{\tikzfig{neg-xs}}$, so it can be reduced to $\tikzfig{filcourt-s}$ or $\tikzfig{neg-xs}$ by first applying Rule \eqref{rule00} repeatedly to remove all its $0\to0$ sub-diagrams, then applying Rule \eqref{rule11id} or \eqref{rule11neg}.
\end{remark}

Since all diagrams have their number of input wires equal to their number of output wires, the axioms of the PROTWEB do not change the number of \tikzfig{beamsplitterwwww}, \tikzfig{swap-s}, \tikzfig{negww} or of trace wires in a diagram, so these numbers are well-defined for a given diagram. This allows us to define the \emph{level} of a diagram as a tuple $(b,x,n,t)$, where:
\begin{itemize}
\item $b$ is the number of \tikzfig{beamsplitterwwww}\\
\item $x$ is the number of \tikzfig{swap-s}\\
\item $n$ is the number of \tikzfig{negww}\\
\item $t$ is the number of trace wires.
\end{itemize}

It is easy to check that each of the rewriting rules strictly decreases the level, according to the lexicographic order. Since the lexicographic order on $\N^4$ is well-founded, this implies that the rewriting system is strongly normalising.\\

Let us prove that the rewriting system is confluent. Because of strong normalisation, it suffices to prove that it is locally confluent. Let $\rightarrow^*$ be the reflexive transitive closure of $\rightarrow$. Let $D$ be a diagram and let $D\overset{(a)}{\rightarrow}D_1$ and $D\overset{(b)}{\rightarrow}D_2$ be two reduction steps, where $(a)$ and $(b)$ are the respective rules applied. We have to prove that there exists a diagram $D'$ such that $D_1\rightarrow^* D'$ and $D_2\rightarrow^* D'$.\\

If the two patterns in $D$ that are transformed by $(a)$ and $(b)$ do not overlap, then after applying $(a)$ to the first pattern or $(b)$ to the second one, we can still apply the other rule to the other pattern and the final result does not depend on the order in which $(a)$ and $(b)$ are applied. That is, there exists $D'$ such that $D_1\overset{(b)}{\rightarrow}D'$ and $D_2\overset{(a)}{\rightarrow}D'$.\\

In the following case distinction, we assume that the patterns concerned by $(a)$ and $(b)$ overlap.\\

It is easy to see that if $(a)$ is \eqref{rule00}, \eqref{rule11id} or \eqref{rule11neg} and $(b)$ is among Rules \eqref{rulebsbs} to \eqref{rulexx}, then the only way the concerned patterns in $D$ can overlap is that the pattern concerned by $(b)$ is included in this of $(a)$. In this case, on the one hand, $(a)$ transforms its pattern into $\tikzfig{diagrammevide-xs}$, $\tikzfig{filcourt-s}$ or $\tikzfig{neg-xs}$, and on the other hand, the effect of applying $(b)$ is to transform the pattern of $(a)$ into a semantically equivalent diagram (which is not $\tikzfig{diagrammevide-xs}$, $\tikzfig{filcourt-s}$ or $\tikzfig{neg-xs}$ because it contains at least a trace), which can then be transformed into $\tikzfig{diagrammevide-xs}$, $\tikzfig{filcourt-s}$ or $\tikzfig{neg-xs}$ by applying $(a)$. Since the rules preserve the semantics, the final sub-diagrams obtained in each case are the same. Therefore, $D_2\overset{(a)}{\rightarrow}D_1$. Of course, the same argument applies with $(a)$ and $(b)$ exchanged.\\

If $(a)$ is \eqref{rule00} and $(b)$ is \eqref{rule11id} or \eqref{rule11neg}, then since the pattern concerned by $(b)$ does not contain any $0\to0$ sub-diagram, it is necessarily included in the pattern concerned by $(a)$, which, after applying $(b)$, can still be transformed into the empty diagram by applying \eqref{rule00}. Therefore, $D_2\overset{\eqref{rule00}}{\rightarrow}D_1$. The same argument applies with $(a)$ and $(b)$ exchanged.\\

If both $(a)$ and $(b)$ are Rule \eqref{rule00}, then the union of the two patterns concerned by $(a)$ and $(b)$ is a $0\to0$ sub-diagram of $D$. Applying $(a)$ or $(b)$ does not change the fact that it is of type $0\to0$, so that right after that we can transform it into the empty diagram by applying Rule \eqref{rule00} (unless it has already become empty in which case there is nothing more to do). This gives us the desired $D'$\\

If both $(a)$ and $(b)$ each are Rule \eqref{rule11id} or \eqref{rule11neg}, then the union of the two concerned patterns can be written in the form $d_2\circ d\circ d_1$ in such a way that, up to exchanging the roles of $(a)$ and $(b)$, the pattern concerned by $(a)$ is $d\circ d_1$ and the pattern concerned by $(b)$ is $d_2\circ d$. Therefore, after applying $(a)$ or $(b)$, we can apply Rule \eqref{rule11id} or \eqref{rule11neg} to transform the resulting whole sub-diagram into $\tikzfig{filcourt-s}$ or $\tikzfig{neg-xs}$, and since the rules preserve the semantics, the result is the same regardless of whether $(a)$ or $(b)$ was applied. This gives us the desired $D'$.\\

If $(a)$ is Rule \eqref{ruleroundtrip}, then:
\begin{itemize}
\item if $(b)$ is Rule \eqref{rule00}, then since $d$ is circle-free, it does not intersect the pattern concerned by $(b)$. Therefore, the situation is the same as when the two patterns do not overlap and there exists $D'$ such that $D_1\overset{(b)}{\rightarrow}D'$ and $D_2\overset{(a)}{\rightarrow}D'$.
\item if $(b)$ is Rule \eqref{rule11id} or \eqref{rule11neg}, then the condition of Rule \eqref{ruleroundtrip} implies that the pattern concerned by $(b)$ either is included in $d$, in which case we have $D_2\overset{\eqref{ruleroundtrip}}{\rightarrow}D_1$, or contains $d$ as a sub-diagram, in which case we have $D_1\overset{(b)}{\rightarrow}D_2$, or is disjoint from it, in which case we are in the same situation as when the two patterns do not overlap and there exists $D'$ such that $D_1\overset{(b)}{\rightarrow}D'$ and $D_2\overset{(a)}{\rightarrow}D'$.
\item if $(b)$ is Rule \eqref{ruleroundtrip} too, then $(a)$ and $(b)$ each transform an instance of $d$ into the identity. After this, the other instance of $d$ can be transformed into the identity by applying Rule \eqref{ruleroundtrip} again (unless it has already become equal to the identity), and the result is the same regardless of whether $(a)$ or $(b)$ was applied. This gives us the desired $D'$.
\item if $(b)$ is among Rules \eqref{rulebsbs} to \eqref{rulexx}, then the condition of Rule \eqref{ruleroundtrip} implies that the pattern concerned by $(b)$ is either included in $d$, in which case we have $D_2\overset{\eqref{ruleroundtrip}}{\rightarrow}D_1$, or disjoint from it, in which case we are in the same situation as when the two patterns do not overlap and there exists $D'$ such that $D_1\overset{(b)}{\rightarrow}D'$ and $D_2\overset{(a)}{\rightarrow}D'$.
\end{itemize}

If both $(a)$ and $(b)$ are among Rules \eqref{rulebsbs} to \eqref{rulexx}, then by looking at the possible left-hand sides of these rules, we can see that unless they are the same and $D_1=D_2$, the two patterns cannot have a \tikzfig{negww} in common, and any generator in common cannot be the leftmost one in both patterns, nor the rightmost one. So the cases to consider are:
\begin{itemize}
\item those in which the two patterns have one generator in common, which is on the right of one pattern and on the left of the other
\item those in which the two patterns have two generators in common, the leftmost generator of each pattern being the rightmost one of the other pattern.
\end{itemize}

The first possibility means that the two patterns in $D$ are in a sub-diagram of one of the following forms:
\[\tikzfig{bsphpbbsphpbbs}\qquad\tikzfig{bsphpbbsx}\qquad\tikzfig{xbsphpbbs}\]
\[\tikzfig{bsxx}\qquad\tikzfig{xbsx}\qquad\tikzfig{xxbs}\]
\[\tikzfig{swapswapswap}\qquad\tikzfig{bsxbsavecdgeneriquesaumilieu}\]

where \tikzfig{negpotentiel-m} denotes either \tikzfig{neg} or \tikzfig{filcourt}, and $d_1,d_2:1\to1$ are arbitrary diagrams.

$D_1$ and $D_2$ are obtained from $D$ by applying one of the Rules \eqref{rulebsbs} to \eqref{rulexx}, to the left part of the sub-diagram for one of the two, and to the right part of the sub-diagram for the other (possibly after sliding $d_1$ and $d_2$ through the swap by naturality of it). To reduce them to a common diagram, we still focus on the same sub-diagram. If relevant, we reduce $d_1$ and $d_2$ to $\tikzfig{filcourt-s}$ or $\tikzfig{neg-xs}$ as described in \cref{reduction11toujourspossible}. Otherwise we apply Rule \eqref{rule11id} to all double negations to remove them. Then, if there are still two generators of type \tikzfig{beamsplitter-s} or \tikzfig{swap-s}, we apply the appropriate rule among \eqref{rulebsbs} to \eqref{rulexx}, and finally we apply Rule \eqref{rule11id} repeatedly to all resulting double negations to remove them. After that, the sub-diagram is of the form
\[\tikzfig{negpotentielsurnegpotentiel}, \tikzfig{negnegpotentielsswap}\text{or}\tikzfig{bsnnnnpotentiels}\]
where \tikzfig{negpotentiel-m} denotes either \tikzfig{neg} or \tikzfig{filcourt}. It is easy to see that two diagrams of these forms have the same semantics only if they are equal. And since the reduction rules preserve the semantics, the two final sub-diagrams must have the same semantics, so they are equal.\\

The second possibility means that the union of the two patterns is of the form
\[\tikzfig{motifchevauchementcirculairebsbs}, \tikzfig{motifchevauchementcirculairebsx}\text{or}\tikzfig{motifchevauchementcirculairexx}\]
where \tikzfig{negpotentiel-m} still denotes either \tikzfig{neg} or \tikzfig{filcourt}, and $d_1,d_2:1\to1$ are arbitrary diagrams. This union is not necessarily a sub-diagram of $D$. Indeed, on the one hand, there can be some $0\to 0$ diagrams inside the loop, and on the other hand we may have to use the naturality of the swap to transform each of the two patterns into the other, which means that there are external wires that intersect the union. However, in any case, after applying $(a)$ or $(b)$, we can apply Rule \eqref{ruleroundtrip} to transform it into \tikzfig{doublebouclevide}. This reduces $D_1$ and $D_2$ to a common diagram, and finishes proving that the rewriting system is confluent.\\

Transforming a diagram by applying Equation  \eqref{loopid}, \eqref{bsnnnn}, \eqref{bouclevide}, \eqref{bsbs}, \eqref{bsnbsh}, \eqref{yankingeq} or \eqref{xxeq} amounts to applying, or to applying the opposite of, Rule \eqref{rule11id}, \eqref{rulebsx}, \eqref{rule00}, \eqref{rulebsbs}, \eqref{rulebsnbsh}, \eqref{rule11id} or \eqref{rulexx} respectively. Therefore, if two diagrams $D_1$ and $D_2$ are equal according to these equations, they are equivalent according to the equivalence relation generated by the reduction relation $\rightarrow$. By confluence, this implies that there exists a diagram $D'$ such that $D_1\rightarrow^* D'$ and $D_2\rightarrow^* D'$. Since \tikzfig{bsbsbspointeenbas} and \tikzfig{xbsxpointeenbas} are normal forms for the rewriting system, this proves that they are not equal according to Equations \eqref{loopid}, \eqref{bsnnnn}, \eqref{bouclevide}, \eqref{bsbs}, \eqref{bsnbsh}, \eqref{yankingeq} and \eqref{xxeq}, and therefore that Equation \eqref{bsbsbs} is not a consequence of these equations, which is what we wanted to prove.

\endgroup

\end{proof}

\subsection{Removing the trace -- Loop unrolling}\label{proofremovingtrace}

\subsubsection{Proof of Proposition \ref{traceinutilesiunitaire}}\label{preuvetraceinutilesiunitaire}
By \cref{existenceofthenormalform}, there exists a diagram $N$ in normal form such that $\textup{PBS}\vdash D=N$. What we have to prove is that $N$ is equivalent through the axioms of the \textup{PBS}-calculus to a trace-free diagram. By remark \ref{decompNFsupperm}, let us decompose $N$ into $P\circ E$, where $E$ is of the form $E(U_0,V_0)\otimes\cdots\otimes E(U_{n-1},V_{n-1})$, and $P$ is trace-free and gate-free. We just have to prove that $E$ is equivalent to a trace-free diagram. By the axioms of the PROP, we can write $E$ in the form $E=\displaystyle\prod_{p=0}^{n-1}(\tikzfig{filcourt-s}^{\otimes p}\otimes E(U_{\rightarrow,p},U_{\uparrow,p})\otimes\tikzfig{filcourt-s}^{\otimes n-1-p})$, so it is sufficient to prove that every diagram $\tikzfig{filcourt-s}^{\otimes p}\otimes E(U_{\rightarrow,p},U_{\uparrow,p})\otimes\tikzfig{filcourt-s}^{\otimes n-1-p}$ is equivalent to a trace-free diagram. For this it is enough to prove that any diagram of the form $E(U,V)\otimes\tikzfig{filcourt-s}$ or $\tikzfig{filcourt-s}\otimes E(U,V)$ is equivalent to a trace-free diagram. And since $\tikzfig{filcourt-s}\otimes E(U,V)=\tikzfig{swap-xs}\circ (E(U,V)\otimes\tikzfig{filcourt-s})\circ\tikzfig{swap-xs}$, it suffices to prove that $E(U,V)\otimes\tikzfig{filcourt-s}$ is equivalent to a trace-free diagram.

First, assume that $U$ and $V$ have a square root. Then $E(U,V)\otimes\tikzfig{filcourt-s}$ is equivalent to\[\tikzset{tikzfig/.style={baseline=-0.25em,scale=\echellefils,every node/.style={scale=0.8},borddiagrammevide/.style={-, dash pattern=on \traitsdiagrammevide off \traitsdiagrammevide on \traitsdiagrammevide off \traitsdiagrammevide on \traitsdiagrammevide off 0em}}}\tikzfig{filtreUVtracefree}.\]
If $U$ or $V$ does not have a square root, let us consider their polar decompositions $U=QS$ and $V=Q'S'$ with $Q,Q'$ unitary and $S,S'$ positive-definite Hermitian. Then by Equation \eqref{fusionuv}, $PBS\vdash E(U,V)\otimes\tikzfig{filcourt-s}=(E(Q,Q')\otimes\tikzfig{filcourt-s})\circ (E(S,S')\otimes\tikzfig{filcourt-s})$, and since each of $Q$, $S$, $Q'$ and $S'$ have a square root, $E(Q,Q')\otimes\tikzfig{filcourt-s}$ and $E(S,S')\otimes\tikzfig{filcourt-s}$ are equivalent to trace-free diagrams of the form above, so that by composition $E(U,V)\otimes\tikzfig{filcourt-s}$ is equivalent to a trace-free diagram too.

\subsubsection{Proof of Lemma \ref{zerooudeuxnoninversibles}}\label{preuvezerooudeuxnoninversibles}
We prove the result by structural induction on $D$.

If $D=\tikzfig{diagrammevide-xs},\tikzfig{filcourt-s},\tikzfig{neg-xs},\tikzfig{swap-xs}\text{ or }\tikzfig{beamsplitter-xs}$ then for every $(c,p)$ we have $[D]_{c,p}=I_q$ which is invertible, so the result holds.

If $D=\tikzfig{gateU-xs}$ then for every $c\in\hv$ we have $[D]_{c,0}=U$. If $U$ is invertible, then the result holds, and if $U$ is not invertible, then the result holds too.

If $D=D_2\circ D_1$, then for any $(c,p)$ we have $[D]_{c,p}=[D_2]_{\tau_{D_1}(c,p)}[D_1]_{c,p}$. The product $[D_2]_{\tau_{D_1}(c,p)}[D_1]_{c,p}$ is invertible if and only if both $[D_2]_{\tau_{D_1}(c,p)}$ and $[D_1]_{c,p}$ are. Therefore, if all $[D_1]_{c,p}$ and all $[D_2]_{c,p}$ are invertible then all $[D]_{c,p}$ are invertible. If not all $[D_1]_{c,p}$ are invertible, then by induction hypothesis at least two of them are not, and consequently at least two $[D]_{c,p}$ are not invertible. If not all $[D_2]_{c,p}$ are invertible, then by induction hypothesis at least two of them are not; since $\tau_{D_1}$ is surjective this implies that at least two $[D_2]_{\tau_{D_1}(c,p)}$ are not invertible and consequently that at least two $[D]_{c,p}$ are not invertible. In all three cases, the result holds.

If $D=D_1\otimes D_2$, then the set of all $[D]_{c,p}$ is the union of the set of all $[D_1]_{c,p}$ and the set of all $[D_2]_{c,p}$. Therefore, if all $[D_1]_{c,p}$ and $[D_2]_{c,p}$ are invertible then all $[D]_{c,p}$ are, and if not all are invertible, then by induction hypothesis at least two $[D_1]_{c,p}$ or two $[D_2]_{c,p}$ are not invertible, so that at least two $[D]_{c,p}$ are not invertible. In both cases the result holds.

\subsubsection{Proof of Lemma \ref{proddetscarresU}}\label{preuveproddetscarresU}
We proceed by structural induction on $D$.

If $D=\tikzfig{diagrammevide-xs},\tikzfig{filcourt-s},\tikzfig{neg-xs},\tikzfig{swap-xs}\text{ or }\tikzfig{beamsplitter-xs}$, then $D$ does not contain any gate and for any $(c,p)$ we have $\det([D]_{c,p})=1$, so with the usual convention that the empty product is equal to $1$, the result holds.

If $D=\tikzfig{gateU-xs}$, then we have $|D|=\displaystyle\prod_{c\in\hv}\det(U)=\det(U)^2$, and $\tikzfig{gateU-xs}$ is the only gate in $D$, so the result holds.

If $D=D_2\circ D_1$, then on the one hand the set of gates of $D$ is the disjoint union of the set of gates of $D_1$ and this of $D_2$, so \[\displaystyle\prod_{G\text{ gate in }D}\det\left(U(G)\right)^2=\left(\displaystyle\prod_{G\text{ gate in }D_1}\det\left(U(G)\right)^2\right)\left(\displaystyle\prod_{G\text{ gate in }D_2}\det\left(U(G)\right)^2\right),\] which by induction hypothesis is equal to $|D_1||D_2|$; on the other hand we have
\[\begin{array}{rcl}
|D|&=&\displaystyle\prod_{c\in\hv,p\in[n]}\det\left([D]_{c,p}\right)\\
&=&\displaystyle\prod_{c\in\hv,p\in[n]}\det\left([D_2]_{\tau_{D_1}(c,p)}[D_1]_{c,p}\right)\\
&=&\displaystyle\prod_{c\in\hv,p\in[n]}\det\left([D_2]_{\tau_{D_1}(c,p)}\right)\det\left([D_1]_{c,p}\right)\\
&=&\left(\displaystyle\prod_{c\in\hv,p\in[n]}\det\left([D_2]_{\tau_{D_1}(c,p)}\right)\right)\left(\displaystyle\prod_{c\in\hv,p\in[n]}\det\left([D_1]_{c,p}\right)\right)\\
&=&\left(\displaystyle\prod_{c\in\hv,p\in[n]}\det\left([D_2]_{c,p}\right)\right)\left(\displaystyle\prod_{c\in\hv,p\in[n]}\det\left([D_1]_{c,p}\right)\right)\\
&=&|D_1||D_2|
\end{array}\]
which proves the result for $D$.

If $D=D_1\otimes D_2$, then on the one hand the set of gates of $D$ is the disjoint union of the set of gates of $D_1$ and this of $D_2$, so \[\displaystyle\prod_{G\text{ gate in }D}\det\left(U(G)\right)^2=\left(\displaystyle\prod_{G\text{ gate in }D_1}\det\left(U(G)\right)^2\right)\left(\displaystyle\prod_{G\text{ gate in }D_2}\det\left(U(G)\right)^2\right),\] which by induction hypothesis is equal to $|D_1||D_2|$; on the other hand the set of the $[D]_{c,p}$ is the disjoint union of the set of the $[D_1]_{c,p}$ and  the set of the $[D_2]_{c,p}$, so $|D|=|D_1||D_2|$. This proves the result for $D$.

\section{Example: controlled permutation}\label{diagrammecontrolepermutations}
Let $n\geq 4$ and let $U_1,...,U_n$ be transformations. The parallel composition of all diagrams of the following form, with $\sigma$ a permutation such that $\sigma(n-3)<\sigma(n-2)$ and $\sigma(n-1)<\sigma(n)$, is a $\frac{n!}2\to\frac{n!}2$ diagram, with $\frac{n!}2$ occurrences of each gate, that implements a controlled permutation of the $U_i$:
\[\tikzfig{motifpermutation}.\]

\section{Derivation of double-negation elimination from the axioms of the \textup{PBS}-calculus}\label{derivationnegneg}
\begin{lemma}
The following equation is a consequence of the axioms of the \textup{PBS}-calculus:
\setcounter{equation}{10}
\begin{equation}\labeletpreuve{negneg}{\tikzset{tikzfig/.style={baseline=-0.25em,scale=\echellefils,every node/.style={scale=0.8}}}\begin{array}{rcl}\tikzfig{negneg}&=&\tikzfig{filmoyen}\end{array}}\end{equation}
\end{lemma}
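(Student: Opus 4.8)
The plan is to give a direct equational derivation of $\neg\circ\neg=\id_1$ using only the ten axioms of Figure~\ref{axioms} together with the structural (traced-PROP) laws — nothing else, since double-negation elimination is the most basic derived equation and is already invoked in later proofs such as \cref{NFajout}. The axioms that carry the weight are \eqref{bsbs} (the polarising beam splitter is self-inverse), \eqref{bsnbsh} and \eqref{bsnnnn} (the two axioms coupling polarisation flips with beam splitters), and, among the structural laws, yanking $Tr(\sigma)=\id_1$, the inverse law $\sigma\circ\sigma=\id_2$, naturality of the swap, and dinaturality of the trace. I would also use the easy fact that $Tr\bigl(\sigma\circ(\id_1\otimes f)\bigr)=f$ for every $f\colon 1\to1$ (immediate from naturality of the swap, naturality of the trace in the output, and yanking), and, if convenient, the identity $Tr(\text{beam splitter})=\id_1$, which follows from \eqref{duplicateloop}, \eqref{idbox}, \eqref{bsbs} and \eqref{loopemptysimple} exactly as in the proof of \eqref{loopI}, without any appeal to double-negation elimination.

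The argument itself would be carried out in a two-wire context. First I would record $\neg\circ\neg=Tr\bigl(\sigma\circ(\id_1\otimes(\neg\circ\neg))\bigr)$ (an instance of the fact above), so that the goal becomes the removal of a negation pair sitting inside a feedback loop. Using \eqref{bsnbsh} together with \eqref{bsbs} one obtains the auxiliary "absorption/commutation" identities needed to move $\neg\circ\neg$ freely past a beam splitter on a two-wire diagram; here each use of \eqref{bsnbsh}, which is the one axiom changing the parity of the number of negations, is matched by a second use so that the overall parity is respected. Then, via dinaturality and naturality of the swap (and after rewriting the $\sigma$ inside the loop with the help of \eqref{bsbs}), I would bring the negations into the exact configuration $(\neg\otimes\neg)\circ B\circ(\neg\otimes\neg)$ that occurs on the left-hand side of \eqref{bsnnnn}, and apply \eqref{bsnnnn} to replace all four negations at once by a swap and a beam splitter. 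What is left is a loop built only from swaps and beam splitters, which \eqref{bsbs} and the traced-PROP laws (dinaturality, yanking, the inverse law) collapse to the bare wire $\id_1$.

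The main obstacle is exactly that we may not assume $\neg$ is invertible — that invertibility is precisely what is being proved — so no step may cancel $\neg$ against $\neg^{-1}$; every rewrite must be oriented. In practice all naive manipulations produce only identities of the shape $D=D\circ(\neg\circ\neg)$, from which one cannot conclude, so the trace is essential: it is the one feature that can genuinely delete structure rather than merely shuffle it. The delicate point, and where the real work lies, is to choose the feedback-loop context so that the application of \eqref{bsnnnn} truly \emph{consumes} the negations (turning them into a swap) instead of relocating them around the loop, and so that the remaining swap/beam-splitter scaffolding then yanks down to a single wire; getting the bookkeeping of which wire each $\neg$ occupies and which wire is traced to line up with the shape of \eqref{bsnnnn} is the crux.
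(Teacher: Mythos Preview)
Your plan is on the right track and uses the same essential ingredients as the paper's derivation: a traced two-wire context, the pair of axioms \eqref{bsnbsh} (used twice, as you correctly note for parity reasons) and \eqref{bsnnnn}, together with \eqref{bsbs} and the traced-PROP laws. The paper's proof follows exactly this route. The one minor difference is the entry point into the trace: the paper does not set up the loop via yanking $Tr(\sigma\circ(\id_1\otimes(\neg\circ\neg)))$ but rather via \eqref{loopemptysimple} and \eqref{idbox}, parking an empty loop alongside $\neg\circ\neg$ and then expanding it with \eqref{bsbs}. Either entry works; the paper's choice has the small advantage that the exit is symmetric (one simply undoes \eqref{bsbs}, \eqref{idbox}, \eqref{loopemptysimple}).

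That said, what you have written is a plan, not a proof, and you say so yourself: ``the delicate point, and where the real work lies, is to choose the feedback-loop context\ldots''. You have not actually exhibited a sequence of rewrites, and this matters here because the combinatorics is genuinely fiddly. The paper's proof is a chain of a dozen explicit steps: after creating the trace and inserting $\sigma\circ\sigma$ by the inverse law, one applies \eqref{bsnnnn} in the expansive direction to \emph{introduce} four extra negations, then uses \eqref{bsnbsh} twice to reabsorb negations into beam splitters, cancels a pair of beam splitters by \eqref{bsbs}, and only then applies \eqref{bsnnnn} in the contractive direction to eliminate the remaining four negations as a swap. Your sketch suggests a single application of \eqref{bsnnnn} to consume the negations directly, but the two original negations sit on the \emph{same} wire, not one on each, so they are not in the shape $(\neg\otimes\neg)\circ B\circ(\neg\otimes\neg)$ required by \eqref{bsnnnn}; getting there is precisely what the intermediate \eqref{bsnbsh} steps accomplish. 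Until you write out that rearrangement explicitly, the argument is incomplete.
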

\begin{proof}\vspace{-0.16cm}
To prove this equation, we have:

\tikzset{tikzfig/.style={baseline=-0.25em,scale=\echellefils,every node/.style={scale=0.8}}}
\begin{longtable}{RCL}\label{preuvenegneg}\tikzfig{negneg}&\eqeqref{loopemptysimple}&\tikzfig{negnegsurbouclevideI}\\\\
&\eqeqref{idbox}&\tikzfig{negnegsurbouclevide}\\\\
&\eqeqref{bsbs}&\tikzfig{tracenhnhbsbs}\\\\
&\eqexpl{\text{inverse law}}&\tikzfig{tracenhnhbsxxbs}\\\\
&\eqeqref{bsnnnn}&\tikzfig{tracenhnhnhnbbsnhnbxbs}\\\\
&\eqeqref{bsnbsh}&\tikzfig{tracenhnhnbbsnhbsnbxbs}\\\\
&\eqeqref{bsnbsh}&\tikzfig{tracenhnbbsnhbsbsnbxbs}\\\\
&\eqeqref{bsbs}&\tikzfig{tracenhnbbsnhnbxbs}\\\\
&\eqeqref{bsnnnn}&\tikzfig{tracebsxxbs}\\\\
&\eqexpl{\text{inverse law}}&\tikzfig{filtrerienrienetrien}\\\\
&\eqeqref{bsbs}&\tikzfig{filsurbouclevide}\\\\
&\eqeqref{idbox}&\tikzfig{filsurbouclevideI}\\\\
&\eqeqref{loopemptysimple}&\tikzfig{filmoyen}
\end{longtable}

\end{proof}

\section{Proof of equivalence between the two diagrams of Figure \ref{fig:perm3} using the PBS-calculus}\label{equivperm3}
We need the following two ancillary equations:
\setcounter{equation}{46}
\begin{lemma}
The following equations are consequences of the axioms of the \textup{PBS}-calculus:
\begin{equation}\label{rebondisnegtraverse}
\begin{array}{rcl}
\tikzfig{bouclerebondisUneg}&=&\tikzfig{negboucletraverseU}
\end{array}
\end{equation}
\begin{equation}\label{traversenegrebondis}
\begin{array}{rcl}
\hspace{0.875cm}\tikzfig{boucletraverseUneg}&=&\tikzfig{negbouclerebondisU}
\end{array}
\end{equation}
\end{lemma}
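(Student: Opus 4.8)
The plan is to prove \eqref{rebondisnegtraverse} first, and then obtain \eqref{traversenegrebondis} from it by a purely formal manipulation with negations. The main tool is Equation \eqref{bsnnnn}, which is exactly the axiom turning a rebounding loop into a traversing loop at the cost of a pair of negations on the beam splitter, together with the dinaturality of the trace, Equation \eqref{negu}, and the double-negation rule \eqref{negneg}.

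For \eqref{rebondisnegtraverse} I would start from its left-hand side and apply Equation \eqref{bsnnnn} to the beam splitter sitting inside the rebounding loop; this is precisely the first step used in the already-established derivation of \eqref{fusionbouclerebondis}, now specialised to a single loop. This produces a traversing loop carrying $U$ together with negations on both sides of the beam splitter. I then slide one of these negations along the feedback wire by dinaturality of the trace, commute it past the gate $U$ with Equation \eqref{negu}, and collapse the resulting consecutive negations with Equation \eqref{negneg}. The outcome of this bookkeeping is exactly the single-loop analogue of the identity appearing between the first and second lines of the \eqref{fusionbouclerebondis} derivation, namely that a rebounding loop labelled $U$ equals $\tikzfig{neg-xs}$ composed with the corresponding traversing loop composed with $\tikzfig{neg-xs}$. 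Composing this identity with the extra negation already present on the left-hand side of \eqref{rebondisnegtraverse}, and cancelling the double negation it creates by \eqref{negneg}, gives precisely the right-hand side of \eqref{rebondisnegtraverse}.

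Given \eqref{rebondisnegtraverse}, Equation \eqref{traversenegrebondis} follows with no further diagrammatic work: pre- and post-compose both sides of \eqref{rebondisnegtraverse} with $\tikzfig{neg-xs}$ and cancel the two double negations so created using \eqref{negneg}. Since $\tikzfig{neg-xs}$ is self-inverse, this turns the equality ``a negation after a rebounding loop equals a negation before the corresponding traversing loop'' into ``a negation after a traversing loop equals a negation before the corresponding rebounding loop'', which is exactly \eqref{traversenegrebondis}.

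I expect the only genuinely delicate point to be the very first step: checking that \eqref{bsnnnn} applies cleanly inside the loop and that the four negations it introduces can all be disposed of by dinaturality, \eqref{negu} and \eqref{negneg} without leaving a stray beam splitter or invoking an axiom not yet proved. But this is exactly the computation already carried out --- for two loops at once --- in the derivation of \eqref{fusionbouclerebondis}, so no new difficulty arises; everything else is routine manipulation within the traced-PROP structure (naturality of the swap, yanking, and inverse law).
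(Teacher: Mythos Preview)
Your proposal is correct and, for Equation \eqref{rebondisnegtraverse}, follows essentially the same route as the paper: apply \eqref{bsnnnn} to the beam splitter inside the rebounding loop, then clean up the resulting negations with dinaturality, \eqref{negu} and \eqref{negneg}. The paper folds the extra outer negation into the first step (applying \eqref{bsnnnn} and \eqref{negneg} together), whereas you first establish the intermediate identity ``rebounding loop $=$ $\neg\circ$ traversing loop $\circ\neg$'' and then cancel with the outer negation, but these are the same computation differently parenthesised.

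For Equation \eqref{traversenegrebondis} there is a small but genuine difference in route. The paper proves it independently, essentially running the argument for \eqref{rebondisnegtraverse} in reverse (insert a double negation via \eqref{negneg}, push one copy around the trace with dinaturality and \eqref{negu}, then apply \eqref{bsnnnn}). Your approach instead obtains \eqref{traversenegrebondis} as a formal corollary of \eqref{rebondisnegtraverse} by conjugating both sides with $\tikzfig{neg-xs}$ and invoking \eqref{negneg}. Your way is shorter and avoids repeating the diagrammatic work; the paper's way is self-contained but redundant. Both are perfectly valid within the PBS-calculus.
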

\begin{proof}\vspace{-0.16cm}
To prove Equation \eqref{rebondisnegtraverse}, we have:
\begin{longtable}{RCL}
\tikzfig{bouclerebondisUneg}&\eqdeuxeqref{bsnnnn}{negneg}&\tikzfig{tracenhnbbsnbUb}\\\\
&\overset{\text{dinaturality,}}{\eqexpl{\eqref{negu},\eqref{negneg}}}&\tikzfig{negboucletraverseU}
\end{longtable}
To prove Equation \eqref{traversenegrebondis}, we have:
\begin{longtable}{RCL}
\tikzfig{boucletraverseUneg}&\overset{\eqref{negneg},\eqref{negu}}{\eqexpl{\text{dinaturality}}}&\tikzfig{negtracenhnbbsnhnbUb}\\\\
&\eqeqref{bsnnnn}&\tikzfig{negbouclerebondisU}
\end{longtable}
\end{proof}
We have to transform the following diagram into the other one of Figure \ref{fig:perm3}:
\[\tikzfig{perm3var}.\]
First, we transform each gate into two loops using Equations \eqref{splitu} and \eqref{splituv}:
\begingroup
\tikzset{tikzfig/.style={baseline=-0.25em,scale=\echellefils,every node/.style={scale=0.8}}}
\begin{longtable}{RCL}
\tikzfig{gateUi}&\eqeqref{splitu}&\tikzfig{filtreUiUi}\\\\
&\eqeqref{splituv}&\tikzfig{bouclerebondisUiboucletraverseUi}
\end{longtable}
\endgroup
then we slide all loops to the right using \cref{traversebsbh,,traversebshb,,rebondisbsb,,rebondisbsh,,traversenegrebondis,,rebondisnegtraverse}. We get:
\[\tikzfig{perm3varboucles}.\]
Next, we transform the left part:
\begin{longtable}{RCL}
\tikzfig{perm3varsansgates}&\overset{\eqref{bsbs},\eqref{negneg},}{\eqexpl{\text{yanking}}}&\tikzfig{troisfilsparallelestreslongs}\\
&\overset{\text{inverse law,}}{\overset{\eqref{bsbs},}{\overset{\text{naturality of}}{\eqexpl{\text{the swap}}}}}&\tikzfig{xbbshxbxhbsbxh}\\\\
&\eqdeuxeqref{bsbsbs}{bsbsbsvar}&\tikzfig{bsbbshbsbbshbsbbsh}.
\end{longtable}
Finally, again using again \cref{traversebsbh,traversebshb,rebondisbsb,rebondisbsh,traversenegrebondis,rebondisnegtraverse}, then \eqref{splituv} and \eqref{splitu}, we slide the loops into the diagram and merge them two by two to get the desired diagram:
\[\tikzfig{perm3}.\]

\end{document}